\theoremstyle{plain}
\newtheorem{theorem}{Theorem}
\newtheorem{proposition}[theorem]{Proposition}
\newtheorem{corollary}[theorem]{Corollary}
\theoremstyle{definition}
\newtheorem{definition}[theorem]{Definition}
\theoremstyle{remark}
\newtheorem{remark}[theorem]{Remark}
\newtheorem{example}[theorem]{Example}
\newcommand\numberthis{\addtocounter{equation}{1}\tag{\theequation}}
\newcommand{\reqnomode}{\tagsleft@false}
\title{A unified approach to quantum de Finetti theorems and SoS rounding via geometric quantization}
\author{Sujit Rao}
\begin{document}
\reqnomode

\begin{abstract}
The sum-of-squares hierarchy of semidefinite programs has become a common tool for algorithm design in theoretical computer science, including problems in quantum information. In this work we study a connection between a Hermitian version of the SoS hierarchy, related to the quantum de Finetti theorem, and geometric quantization of compact Kähler manifolds (such as complex projective space $\mathbb{C}P^{d}$, the set of all pure states in a $(d + 1)$-dimensional Hilbert space). We show that previously known HSoS rounding algorithms can be recast as quantizing an objective function to obtain a finite-dimensional matrix, finding its top eigenvector, and then (possibly nonconstructively) rounding it by using a version of the Husimi quasiprobability distribution. Dually, we recover most known quantum de Finetti theorems by doing the same steps in the reverse order: a quantum state is first approximated by its Husimi distribution, and then quantized to obtain a separable state approximating the original one. In cases when there is a transitive group action on the manifold we give some new proofs of existing de Finetti theorems, as well as some applications including a new version of Renner's exponential de Finetti theorem proven using the Borel--Weil--Bott theorem, and hardness of approximation results and optimal degree-2 integrality gaps for the basic SDP relaxation of \textsc{Quantum Max-$d$-Cut} (for arbitrary $d$). We also describe how versions of these results can be proven when there is no transitive group action. In these cases we can deduce some error bounds for the HSoS hierarchy on complex projective varieties which are smooth.
\end{abstract}

\maketitle

\section{Introduction}
Quantum mechanics and quantum information are often thought of as being similar to classical probability theory, but the analogy is of course not exact and must break down at some point. The point where the analogy fails is often illustrated through the use of quasiprobability distributions, which associate a classical ``distribution'' (over phase space) to a quantum state in a way which preserves as many properties as possible. In the semiclassical limit $\hbar \to 0$, the mapping must preserve essentially all properties of the state. A popular choice is the Wigner quasiprobability distribution, which preserves many properties of the original quantum state, but may take on negative values and is thus not a classical probability distribution.

Elsewhere in computer science, negative probabilities have arisen in optimization. One technique for approximating solutions to non-convex problems involves relaxing a problem of the form $\max _{x \in D} p(x)$ for some domain $D$ and objective function $p$ into a convex problem of the form $\max _{\mu \in \mathcal{P}(D)} E_{\mu}[p(x)]$, where $\mathcal{P}(D)$ is set of probability distributions over $D$. Since this new convex problem often has dimension which is exponentially or infinitely large, the strategy of sum-of-squares (SoS) optimization is to further enlarge the set of distributions to \textit{degree-$k$ pseudo-distributions}, which have an efficient finite-dimensional representation. Such pseudo-distributions also have an interpretation in terms of negative probabilities: a degree-$k$ pseudo-distribution can be defined as a density function $P : D \to \mathbb{R}$ (relative to some base measure $\mu$) such that $\int _{D} q(x)P(x)\,d\mu(x) \geq 0$ for all degree-$k$ polynomials $q$ which are sums-of-squares. As the degree $k \to \infty$, the value of the relaxed problem approaches that of the original one.

It is natural to ask whether there is a connection between these two notions of negative probability, where where the inverse $1/\hbar$ of the semiclassical parameter is analogous to the SoS degree $k$. In this work, we propose an affirmative answer to this question.

\subsection{Approaches to nonlinear classical optimization}
For solving purely a classical problem which is nonlinear, such as optimization of a nonlinear polynomial function over a compact semialgebraic set, a common approach involves first constructing a finite-dimensional linear approximation to the problem, solving it, and then using the solution to approximate the original nonlinear problem.
Indeed, the sum-of-squares optimization hierarchy does exactly this, where the finite-dimensional approximation is a semidefinite program with a linear objective function.
Another common approach is to construct a finite set of reasonably separated points on which the objective function can be evaluated, and then approximating the optimum on the entire space by the optimum on the finite set.
In either case, the approximation to the nonlinear problem must be finite in some way -- being finite-dimensional in the case of linearization, or having finite cardinality in the case of brute-forcing over a finite subset.

Motivated by physics, one general approach to approximating a nonlinear function by a linear operator is \textit{quantization}.
In quantization one thinks of a nonlinear function as being a classical observable on some phase space, and the corresponding linear operator is the corresponding quantum observable acting on a Hilbert space, which should again correspond to the classical phase space in some way.
The correspondence should depend on a parameter $\hbar$, and in the semiclassical limit $\hbar \to 0$ the correspondence should become closer and closer.

This approach suggests that there should be corresponding algorithms for various computational tasks, but for optimizing a nonlinear polynomial function in particular there is a clear analogy:
one would first quantize the objective function, and then approximate its minimum or maximum by the minimum or maximum of the corresponding observable.
One could even try to approximate the optimal solution itself by finding a classical distribution over the phase space which approximates the top eigenstate of the observable, and then sampling from it.
One of the most common such \textit{quasiprobability distributions} is the Husimi Q-function, which, unlike other common quasiprobability distributions, such as the Wigner function or Glauber-Sudarshan P-function, is nonnegative and integrates to 1, giving a well-defined classical probability distribution. 
The parameter $\hbar$ then allows one to tune the approximation, for which a smaller value of $\hbar$ would give a more accurate approximation but require more resources computationally.

Conversely, one could also try to approximate a quantum state by a classical objective. Without additional assumptions one typically expects a quantum state on a large number of sites to have nontrivial entanglement. However, under symmetry conditions the principle of \textit{monogamy of entanglement} suggests that no two sites should be strongly entangled with each other, implying that the overall state behaves classically. Indeed, the quantum de Finetti theorem \cite{Christandl_Koenig_Mitchison_Renner_2007} formalizes this intuition with a quantitative error bound. The construction in the de Finetti theorem is quite similar to the rounding algorithm described above -- one can think of the Husimi quasiprobability distribution as a mapping from a quantum mixed state to a classical mixed state, and the Glauber-Sudharshan P-quantization as a mapping in the opposite direction, such that the two maps are approximate inverses.

\subsection{Results in this paper}
As we will see, it is possible to make sense of some version of this quantization-based optimization algorithm. It will turn out to be exactly the same as a Hermitian version of the sum-of-squares hierarchy, but several assumptions will need to be made on the original problem.
First, the domain $M$ of the optimization problem should be a compact semialgebraic set in order for the original optimization problem to make sense and for a Hermitian SoS relaxation to even be defined. To be able to think of $M$ as a classical phase space we will also need $M$ to be a smooth manifold, and have a Poisson bracket defined on pairs of classical observables, which is the minimum structure needed to define time-evolution of a system in classical mechanics. Because of the quantization procedure we use, we will actually need to further assume that the Poisson bracket comes from a symplectic structure on $M$ and that $M$ is also a polarized K\"{a}hler manifold (comes with a holomorphic line bundle $\mathcal{L} \to M$ which defines an embedding into a projective space).

\subsection{Content of this paper}
In Section~\ref{sec:tech-overview}, we give a more detailed technical overview and give a more precise definition of quantization. Then in Section~\ref{sec:warmup} we describe one of the simplest examples of quantization, which is a single bosonic mode with phase space $\mathbb{C} \cong \mathbb{R}^{2}$. (Despite the Hilbert space being infinite-dimensional in this case, it is overall simpler because there is no topology involved.) In Section~\ref{sec:geom-quant} we describe \textit{geometric quantization}, which is one standard approach to quantizing a compact phase space, which in general must be topologically nontrivial. In Section~\ref{sec:main-sec} we state and prove Theorem~\ref{thm:main-thm}, which is a version of the de Finetti theorem generalized to the geometric quantization framework. Finally in Section~\ref{sec:applications} we show how to deduce various known quantum de Finetti theorems, including a version of Renner's exponential de Finetti theorem and the de Finetti theorem for irreducible representations of $U(n)$, from our general theorem, and how to deduce some new de Finetti theorems for other phase spaces, including smooth projective varities; in these cases we need to apply some recent results in geoemtric analysis and complex geometry to show that the assumptions of Theorem~\ref{thm:main-thm} can be satisfied, and also to explain how they can be used to deduce quantitative error bounds.

\section{Technical overview}\label{sec:tech-overview}
\subsection{Hermitian sum-of-squares hierarchies}
Consider the problem of optimizing a polynomial function $p : M \to \mathbb{R}$, where $M$ is a compact manifold (with additional structure to be specified later). One hypothetical class of algorithms could take the following form:
\begin{enumerate}
\item Treating the manifold $M$ as a classical phase space, choose some $\hbar$ and quantize $M$ to obtain a finite-dimensional Hilbert space $\mathcal{H}$. Then quantize $p$ to obtain a self-adjoint observable $Q(p) : \mathcal{H} \to \mathcal{H}$.
\item Calculate the minimum or maximum eigenvalue of $Q(p)$ and corresponding eigenvector $\psi \in \mathcal{H}$.
\item Round $\psi$ to a point $x \in M$ by sampling from a probability distribution over $M$ whose density is given by the Husimi Q-function of $\psi$ (which is nonnegative and integrates to 1). Actually sampling from the corresponding probability distribution could be nontrivial and may only give a nonconstructive approximation bound, as opposed to a full rounding algorithm.
\end{enumerate}

Such an algorithm was implicit in a paper of Quillen \cite{Quillen_1968}. In Quillen's result, we take $M = \mathbb{C}P^{d}$ (complex projective space, the manifold of pure states in a $d$-dimensional Hilbert space), the quantization $Q$ to be an analog of the Husimi quantization rule, and the Husimi Q-function to be an analog of the usual bosonic Husimi Q-function.
The problem of calculating the extremal eigenvalue of $Q(p)$ turns out in this case to be the same as the SDP coming from a degree-$k$ Hermitian sum-of-squares relaxation, with $\hbar = 1/k$.
Quillen's paper only proves an asymptotic result for the minimization problem (equivalently, showing that a nonnegative Hermitian polynomial is a Hermitian sum-of-squares), but the techniques use quantization formulas involving creation and annihilation operators and are very similar to the proofs in \cite[Theorem~2.2]{Lewin_Nam_Rougerie_2015} and \cite[Theorem~4.4]{Rougerie_2020} of the quantum de Finetti theorem using Chiribella's formula \cite[Equation~(6)]{Chiribella_2011}.

\subsubsection{Analysis of the approximation and duality between quantization rules and quasiprobability distributions}
A quasiprobability distribution defines a linear map $P : B(\mathcal{H}) \to \mathcal{P}(M)$ from density matrices to signed measures of total mass 1 on the classical phase space $M$. Assuming $M$ comes with a canonical measure (for example, if it has a canonical Riemannian metric), then we can try to take the transpose or adjoint of $P$ to obtain a quantization rule $Q : C^{\infty}(M) \to B(\mathcal{H})$. The fact that $Q$ is the adjoint of $P$ means that it satisfies the property
\[ \langle f, P(\rho) \rangle = \langle Q(f), \rho \rangle. \]
On the left-hand side $\langle \cdot, \cdot \rangle$ denotes the expectation value of a classical function with respect to a measure of total mass 1, and on the right-hand side it denotes the expectation of a quantum observable with respect to a mixed state or density matrix.

The analysis requires calculating the expectation value of a classical function $f$ with the Husimi Q-function, which requires quantizing the classical function with respect to the corresponding quantization rule. This quantization rule will end up being the Glauber-Sudarshan P-quantization rule. Since the original relaxation was defined using the Husimi Q-quantization rule, the analysis depends on analyzing the difference between the quantization of $f$ using the P-rule and the quantization of $f$ with respect to the Q-rule. General intuition from quantum mechanics predicts that the difference is $O(\hbar)$ (ignoring constants depending on $M$ and $f$), which is the same as $O(1/k)$. Indeed, we will see that a mathematically rigorous analysis recovers the same error bound and can also determine the dependence on $M$ and $f$.

\subsubsection{Compactness of the classical phase space and finite-dimensionality of the quantum Hilbert space}\label{sec:compact-vol-hbar}
Note that the classical phase space is compact and the Hilbert space $\mathcal{H}$ is finite-dimensional, The intuition for this might be a little bit unclear, but a standard explanation is that the dimension of the Hilbert space corresponds to the total number of ``cells'' of size $\hbar$ needed to cover $M$. If the classical phase space is compact and has finite volume, the Hilbert space will then be finite-dimensional. As $\hbar \to 0$ we have $\dim \mathcal{H} \to \infty$. This matches the sum-of-squares perspective, where $1/k = \hbar \to 0$ is equivalent to the SoS degree $k \to \infty$ and the size of the SDP correspondingly increases.

An alternative explanation starts from the quantum dynamics with a finite-dimensional Hilbert space, assuming that the operators satisfy certain commutation relations. Using a spin-$s$ Hilbert space, if we define $\hbar = 1/s$ then after doing a rescaling the spin operators satisfy the relations
\begin{align*}
1 &= x^{2} + y^{2} + z^{2} \\
[x, y] &= i\hbar z \\
[y, z] &= i\hbar x \\
[z, x] &= i\hbar y.
\end{align*}
Thus a classical description of the dynamics in the limit $\hbar \to 0$ should label a state by a point $(x, y, z) \in S^{2}$ in the sphere, and the algebra of classical observables on $S^{2}$ should have a Poisson bracket matching the quantum commutators to first order in $\hbar$. In particular, the phase space in this case should be $S^{2}$, which is compact.

\section{Warm-up: quantizing a single bosonic mode}\label{sec:warmup}
To prepare for describing the quantization of a compact phase-space, we first describe the quantization of the standard phase space $\mathbb{R}^{2} \cong \mathbb{C}$ for a single bosonic modes. At first this may seem more technical since the resulting Hilbert space is infinite-dimensional, but in some ways it is simpler: a compact phase space must be topologically nontrivial (since the symplectic form defines a nonzero de Rham cohomology class), so one must consider sections of a line bundle instead of holomorphic functions. For a non-compact phase space which is contractible (is homotopy equivalent to a point) there are enough holomorphic functions to define the Segal-Bargmann space, which is isomorphic to the usual single-mode Hilbert space. A model of continuous-variable quantum computation, also represented using the Segal-Bargmann space, was proposed in \cite{Chabaud_Mehraban_2022}.

\subsection{Classical phase space}
The phase space for a single classical bosonic mode is $M = \mathbb{R}^{2}$ with position coordinate $x$ and momentum coordinate $p$, and symplectic form $\mathrm{d}x\wedge \mathrm{d}p$. The symplectic form allows one to define the Poisson bracket $\{f, g\}$ of two classical observables $f$ and $g$, which is a classical analog of the quantum commutator. The Poisson bracket satisfies the properties
\begin{align*}
\{x, p\} &= 1 \\
\{g, f\} &= -\{f, g\} \\
\{f, gh\} &= \{f, g\}h + \{f, h\}g.
\end{align*}

In quantization, the goal is typically to construct a family of Hilbert spaces $\mathcal{H}_{\hbar}$ depending on a parameter $\hbar$, along with a linear mapping from classical observables $f : M \to \mathbb{R}$ (real-valued functions on the phase space) to quantum observables $Q(f) : \mathcal{H}_{\hbar} \to \mathcal{H}_{\hbar}$ (self-adjoint operators on the Hilbert space). The quantization map $Q$ is usually also required to satisfy that the quantum commutator agrees with the classical Poisson bracket to first-order in $\hbar$. Explicitly, we want to have $[Q(f), Q(g)] = \hbar \{f, g\} + O(\hbar^{2})$.

\subsection{Position-space quantization}
For a bosonic mode, there is a standard construction of a quantization arising from early work on quantum mechanics. The Hilbert space is taken to be
\[ \mathcal{H}_{\hbar} \coloneqq L^{2}(\mathbb{R}) = \left\{\text{$f : \mathbb{R} \to \mathbb{R}$ measurable} \,\middle|\, \int _{\mathbb{R}} |f|^{2} < \infty\right\} \]
which is the space of normalized wavefunctions in position space. To quantize $x$ and $p$, a typical definition is
\begin{align*}
Q(x) &= (f \mapsto x\cdot f) \\
Q(p) &= (f \mapsto i\hbar (df/dx)).
\end{align*}
To quantize more complicated observables, there are multiple different choices which are used in different contexts. Concretely, if a classical observable $f$ is a polynomial in $x$ and $p$ one can try to define $Q(f)$ as a particular noncommutative polynomial in $Q(x)$ and $Q(p)$, but there will be multiple noncommutative polynomials which reproduce $f$ in the $\hbar \to 0$ limit (which corresponds to formally forcing $Q(x)$ and $Q(p)$ to commute). Some common choices include always taking $Q(x)$ to be on the left and $Q(p)$ on the right (the Mehta prescription), $Q(p)$ always to the left and $Q(x)$ always on the right (the Kirkwood--Rihaczek prescription), or taking a uniform linear combination over all $d!$ orderings for a polynomial of degree $d$ (the Weyl prescription, which is the most well-known and commonly used). In terms of creation and annihilation operators, one can also consider rules where all creation operators are to the left and annihilation operators are to the right (called Wick or normal ordering, and adjoint to the Husimi Q-function), all creation operators are to the right and all annihilation operators are to the left (called anti-Wick or anti-normal ordering, and adjoint to the Glauber-Sudarshan P-function), or a uniform linear combination over all $d!$ orderings (which is again the Weyl prescription). More explicit formulas for these quantization rules can be found in \cite[Section~0.19]{Curtright_Fairlie_Zachos_2013}

\subsection{Holomorphic quantization}
When quantizing, one might expect that the wavefunctions should be defined over the entire phase space. However, as seen from the position Hilbert space the wavefunctions in general should depend only on a single coordinate. One can alternatively take momentum Hilbert space, where the wavefunctions depend on a momentum coordinate. Under a Fourier transform, the momentum Hilbert space defines a representation of the position and momentum operators which is isomorphic to the position Hilbert space.

In general forms of geometric quantization, the wave functions are defined on the entire phase space and then required to be constant with respect to a ``polarization.'' For a bosonic mode there are polarizations which recover the position and momentum Hilbert spaces, as well as a continuum of polarizations interpolating between them. In this work we will only use the ``holomorphic polarization,'' which in a sense is the one which is exactly half-way in between. Explicitly, the holomorphic Hilbert space for a bosonic mode is
\[
L^{2}_{\text{hol},\hbar}(\mathbb{C}) \coloneqq \left\{ \text{$f : \mathbb{C} \to \mathbb{C}$ holomorphic} \, \middle| \, \frac{1}{\pi}\int _{\mathbb{C}} |f(z)|^{2}e^{-\hbar|z|^{2}}\,dz < \infty \right\}.
\]

Instead of defining the position and momentum operators, it is slightly simpler to define the creation and annihilation operators as
\begin{align*}
Q\left(\frac{x - ip}{\sqrt{2}}\right) &= (f \mapsto zf) \\
Q\left(\frac{x + ip}{\sqrt{2}}\right) &= (f \mapsto df/dz)
\end{align*}
where $df/dz$ is the Wirtinger derivative of $f$. One can check that $\{z^{n}/\sqrt{n!}\}$ is an orthonormal basis of $L^{2}_{\text{hol}}$. An explicit isomorphism, known as the \textit{Segal-Bargmann transform}, from the position Hilbert space to $L^{2}_{\text{hol}}$ maps the $n$-particle number states as $\ket{n} \mapsto z^{n}/\sqrt{n!}$. The Segal-Bargmann transform also intertwines the creation and annihilation operators on the holomorphic Hilbert space with those on the position Hilbert space.

\subsection{Coherent states}
The usual position and momentum operators are linear combinations of the creation and annihilation operators, so all the quantization rules described above can be defined on the holomorphic Hilbert space as well. The standard coherent states of a harmonic oscillator correspond to the holomorphic functions
\[ \ket{\psi_{a}} = (z \mapsto \exp(-|a|^{2})\exp(\bar{a} z)). \]
This follows from taking the harmonic oscillator ground state, which corresponds to the holomorphic function $1$, and then applying a displacement operator to it. The Husimi Q-function turns out to be
\[ Q_{f}(z) = |\braket{\psi_{a}|f}|^{2} = \frac{1}{\pi}|f(z)|^{2}\exp(-|z|^{2}). \]
In fact, the more precise statement
\[ \braket{\psi_{a}|f} = \exp(-|a|^{2})f(z) \]
is also true. In the following sections, we will take analogs of these formulas as characterizations of the Husimi Q-function and coherent states for compact phase spaces.

\section{Quantizing a compact phase space}\label{sec:geom-quant}
Fix a compact K\"{a}hler manifold $(M, g, \omega, J)$, where $g$ is the Riemannian metric, $\omega$ is the symplectic form, and $J$ is the complex structure. The manifold $M$ will serve as the phase space of the classical mechanical system and the domain of a classical polynomial optimization problem. Furthermore, fix some arbitrary Borel measure $\mu$ on $M$, which does not necessarily have to be the volume measure coming from the K\"{a}hler structure. This additional generality will make it easier to bound the error in our generalized quantum de Finetti theorem, and some intuition is given in the beginning of Section~\ref{sec:main-sec}.

\begin{example}
A running example used throughout will be $M = \mathbb{C}P^{1}$, with its usual round Riemannian metric, standard complex structure, and Fubini-Study symplectic form. This is the same as the Riemann sphere in complex analysis, or the Bloch sphere $S^{2}$ in quantum information. It is also the compact phase space corresponding to spin representations of the angular momentum Lie algebra $\mathfrak{su}(2)$.

Explicitly, the K\"{a}hler structure can be defined in coordinates. The standard atlas on $\mathbb{C}P^{1}$ has two charts, whose domains are $\mathbb{C}P^{1} \setminus \{\infty\} \cong \mathbb{C}$ and $\mathbb{C}P^{1} \setminus \{0\}$, thinking of $\mathbb{C}P^{1}$ as the Riemann sphere. (In the Bloch sphere, these can be identified with the $\pm 1$ eigenstates of the Pauli $Z$ operator.)
The intersection of the domains can be identified with $\mathbb{C} \setminus \{0\}$, and the transition map between the two charts is $z \mapsto z^{-1}$. (This is an involution and so it is in fact the transition map in either direction.) The tangent space has 1 complex dimension, and the corresponding Hermitian metric is the $1$-by-$1$ matrix $(1 + |z|^{2})^{-1}$ in the chart whose domain is $\mathbb{C}P^{1} \setminus \{\infty\}$. It can be checked that this metric is K\"{a}hler.

Alternatively, one can use symplectic reduction to define the symplectic form, and then recover the other parts of the K\"{a}hler structure. In this case we would start with the phase space $\mathbb{C}^{2}$, with the $U(1)$ action given by $z\cdot v = zv$ for $z \in U(1)$ and $v \in \mathbb{C}^{2}$. The corresponding momentum map is $\frac{1}{2}|v|^{2}$, and we first take the submanifold
\[ \{v \in \mathbb{C}^{2} : |v|^{2} = 1\} \cong S^{3}. \]
The quotient by $U(1)$ is then $\mathbb{C}P^{1} \cong S^{2}$, which can be identified with the Riemann sphere or the Bloch sphere.
\end{example}

Further, fix a holomorphic line bundle $\mathcal{L} \to M$ with a Hermitian metric $h_{\mathcal{L}}$.

\begin{definition}[Geometric quantization]
The quantum Hilbert space associated to the pair $(M, \mathcal{L})$ is the space $\mathcal{H}_{\mathcal{L}} \coloneqq \Gamma_{\text{hol}}(\mathcal{L})$ of holomorphic global sections of the bundle $\mathcal{L}$. The inner product on $\mathcal{H}_{\mathcal{L}}$ is defined by
\[ \langle s_{1}, s_{2} \rangle_{\mathcal{H}_{\mathcal{L}}} \coloneqq \int _{M} h_{\mathcal{L}}(s_{1}(x), s_{2}(x))\,d\mu(x) \]
where $\mu$ is the measure on $M$ from before.
\end{definition}

For a fixed line bundle $\mathcal{L}$ it is common to consider tensor powers $\mathcal{L}^{\otimes k}$ with the tensor product Riemannian metric in the geometric quantization construction. Intuitively, the value of $k$ can correspond either to a ``total energy'' quantity, as explained in the example below, or to $k = 1/\hbar$ as described in Section~\ref{sec:compact-vol-hbar}.

\begin{example}
The spin phase space $\mathbb{C}P^{1}$ has a standard holomorphic line bundle $\mathcal{L}$ called the \textbf{canonical line bundle}. The total space of the bundle can be identified with
\[ \{ (x, v) : x \in \mathbb{C}P^{1}, v \in \mathbb{C}x \subseteq \mathbb{C}^{2} \} \]
where $x \in \mathbb{C}P^{1}$ is identified with the corresponding one-dimensional subspace of $\mathbb{C}^{2}$. There is a standard Hermitian metric on $\mathcal{L}$ whose curvature is the standard K\"{a}hler structure on $\mathbb{C}P^{1}$. If we take $\mathcal{E} = \mathbb{C}P^{1} \times \mathcal{C}$ to be the product line bundle and the line bundle to be $(\mathcal{L}^{*})^{\otimes k}$ when quantizing, then the resulting Hilbert space is
\[ \Gamma_{\text{hol}}((\mathcal{L}^{*})^{\otimes k}) \cong \mathbb{C}[z_{1}, z_{2}]_{k} \]
where $\mathbb{C}[z_{1}, z_{2}]_{k}$ is the space of homogeneous polynomials of degree $k$. This can be identified with the Hilbert space of a spin-$(k - 1)/2$ degree of freedom, since both have the same dimension. This Hilbert space is an irreducible representation of $SU(2)$, where the action of $g \in SU(2)$ is $(g \cdot p)(z_{1}, z_{2}) = p(g \cdot (z_{1}, z_{2}))$.
To complete the identification we should also specify how to quantize more general classical observables on the phase space, which will be described later.
\end{example}

\subsection{Husimi Q-function and coherent states}\label{sec:qfn-coh}
\begin{definition}
Let $s \in \mathcal{H}_{\mathcal{L}}$ be a pure state in the quantized Hilbert space. The \textbf{Husimi $Q$-function} associated to $s$ is $Q_{s} \in C^{\infty}(M)$ defined by
\[ Q_{s} \coloneqq h_{\mathcal{L}}(\overline{s} \otimes s). \]
If $\rho \in \operatorname{End}(\mathcal{H}_{\mathcal{L}})$ is a positive self-adjoint operator defining a mixed state, then it defines a section $s_{\rho} \in \Gamma_{\text{hol}}(\bar{\mathcal{L}} \boxtimes \mathcal{L})$ where $\bar{\mathcal{L}} \boxtimes \mathcal{L}$ is the external tensor product of the line bundles $\bar{\mathcal{L}} \to \bar{M}$ and $\mathcal{L} \to M$, which results of in a line bundle over the product manifold $\bar{M} \times M$. (It is known that $\Gamma_{\text{hol}}(\bar{\mathcal{L}} \boxtimes \mathcal{L}) \cong \Gamma_{\text{hol}}(\bar{\mathcal{L}}) \otimes \Gamma_{\text{hol}}(\bar{\mathcal{L}})$.)
The Husimi $Q$-function of $\rho$ is then defined by
\[
Q_{\rho} \coloneqq h_{\mathcal{L}} \circ s_{\rho} \circ \iota_{\Delta}
\]
where $\iota_{\Delta} : M \to \bar{M} \times M$ is the inclusion of the diagonal $\Delta \subseteq \bar{M} \times M$ defined by $\iota_{\Delta}(x) = (x, x)$.
One can check that this agrees with the definition for pure states, in that the same function is obtained by diagonalizing $\rho$ and then taking the corresponding convex combination of the Husimi $Q$-functions of the eigenvectors of $\rho$.
\end{definition}

The coherent states associated to a point in $x \in M$ can then be constructed from the Husimi $Q$-function. Specifically, the $Q$-function is usually thought of (when $\mathcal{E}$ is a product line bundle) as defined by the squared inner product $Q_{\rho}(x) = |\langle c_{x} | \rho | c_{x} \rangle|^{2}$ where $c_{x}$ is the coherent state associated to the point $x$. For us it will be easier to take the definition of the $Q$-function as above and then use it to define the coherent states by checking that the function $\rho \mapsto Q_{\rho}(x)$ defines a positive-definite quadratic form of rank and trace 1. The unique eigenvector of eigenvalue 1 of the quadratic form then gives the desired coherent state.

\begin{proposition}\label{prop:coherent-state}
Fix a point $x \in M$. Define the sesquilinear map $e_{\mathcal{L},x} : \operatorname{End}(\Gamma_{\text{hol}}(\mathcal{E} \otimes \mathcal{L})) \to \mathbb{C}$ by
\[ e_{\mathcal{L},x}(\rho) \coloneqq Q_{\rho}(x)\]
and the function $T_{\mathcal{L}} : M \to \mathbb{R}$ by
\[ T_{\mathcal{L}}(z) \coloneqq \sum _{i} |s_{i}(z)|_{\mathcal{L}}^{2} \]
for any orthonormal basis $\{s_{i}\}$ of $\mathcal{H}_{\mathcal{L}}$. Using the inner product on $\mathcal{H}_{\mathcal{L}}$, we identify $e_{\mathcal{L},x}$ with a linear map $\mathcal{H}_{\mathcal{L}} \to \mathcal{H}_{L}$.
Then $c_{\mathcal{L},x} \coloneqq e_{\mathcal{L},x}/T_{\mathcal{L}}(x)$ defines a pure state, in that it has the following properties:
\begin{enumerate}
\item $c_{\mathcal{L},x}$ is linear
\item $c_{\mathcal{L},x}$ is positive
\item $\operatorname{Tr} c_{\mathcal{L},x} = 1$
\item $\operatorname{rank} c_{\mathcal{L},x} = 1$.
\end{enumerate}
\end{proposition}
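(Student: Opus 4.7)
The plan is to identify $c_{\mathcal{L},x}$ with a rescaled version of $\mathrm{ev}_{x}^{*}\mathrm{ev}_{x}$, where $\mathrm{ev}_{x} \colon \mathcal{H}_{\mathcal{L}} \to \mathcal{L}_{x}$ is the evaluation map $s \mapsto s(x)$. Both $\mathcal{H}_{\mathcal{L}}$ and $\mathcal{L}_{x}$ carry Hermitian inner products (the latter one-dimensional), so the adjoint $\mathrm{ev}_{x}^{*} \colon \mathcal{L}_{x} \to \mathcal{H}_{\mathcal{L}}$ is well-defined and the composition $\mathrm{ev}_{x}^{*}\mathrm{ev}_{x}$ is automatically a self-adjoint, positive semidefinite endomorphism of $\mathcal{H}_{\mathcal{L}}$ of rank at most $1$. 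The proof then reduces to checking that $c_{\mathcal{L},x} = \mathrm{ev}_{x}^{*}\mathrm{ev}_{x}/T_{\mathcal{L}}(x)$.

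The key computation to set up first is the identity $Q_{\rho}(x) = \operatorname{Tr}(\mathrm{ev}_{x}^{*}\mathrm{ev}_{x}\cdot \rho)$. I would verify this first for a pure state $\rho = \ket{\psi}\bra{\psi}$, where the left-hand side equals $|\psi(x)|_{\mathcal{L}}^{2}$ by definition and the right-hand side equals $\langle \mathrm{ev}_{x}\psi, \mathrm{ev}_{x}\psi\rangle_{\mathcal{L}_{x}} = h_{\mathcal{L}}(\psi(x), \psi(x))$, so the two agree. The general case follows by diagonalising $\rho$ and invoking the compatibility property in the definition of $Q_{\rho}$ (that the $Q$-function of a mixed state is the corresponding convex combination of the $Q$-functions of its eigenvectors). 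Under the Hilbert--Schmidt pairing $\operatorname{End}(\mathcal{H}_{\mathcal{L}}) \times \operatorname{End}(\mathcal{H}_{\mathcal{L}}) \to \mathbb{C}$, $(A, \rho) \mapsto \operatorname{Tr}(A\rho)$, this identity says precisely that the sesquilinear functional $e_{\mathcal{L},x}$ is represented by the operator $\mathrm{ev}_{x}^{*}\mathrm{ev}_{x}$.

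With this identification the four claimed properties of $c_{\mathcal{L},x}$ follow mechanically. Linearity (1) is built into the identification. Positivity (2) holds because any operator of the form $A^{*}A$ is positive semidefinite, and positivity is preserved by the positive scalar $1/T_{\mathcal{L}}(x)$. For (3), expanding the trace in an orthonormal basis $\{s_{i}\}$ gives
\[
\operatorname{Tr}(\mathrm{ev}_{x}^{*}\mathrm{ev}_{x}) = \sum_{i} \langle s_{i}, \mathrm{ev}_{x}^{*}\mathrm{ev}_{x}\, s_{i}\rangle = \sum_{i} |s_{i}(x)|_{\mathcal{L}}^{2} = T_{\mathcal{L}}(x),
\]
so after dividing by $T_{\mathcal{L}}(x)$ the trace is $1$. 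For (4), $\mathrm{ev}_{x}$ has $1$-dimensional codomain $\mathcal{L}_{x}$, so $\mathrm{ev}_{x}$ and hence $\mathrm{ev}_{x}^{*}\mathrm{ev}_{x}$ have rank at most $1$; trace $1$ rules out rank $0$, giving rank exactly $1$.

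There is no substantial obstacle beyond correctly translating between the section-theoretic definition of $Q_{\rho}$ (via $h_{\mathcal{L}} \circ s_{\rho} \circ \iota_{\Delta}$) and the operator-theoretic description in terms of $\mathrm{ev}_{x}$; this is why I would first pass through the pure-state case and then use the self-consistency of the Husimi function under convex combinations. The only hypothesis implicitly required is $T_{\mathcal{L}}(x) > 0$, i.e.\ that $x$ is not a base point of $\mathcal{L}$, which is already needed for the normalisation $e_{\mathcal{L},x}/T_{\mathcal{L}}(x)$ to make sense and holds whenever $\mathcal{L}$ is sufficiently positive.
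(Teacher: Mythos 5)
Your proposal is correct and follows essentially the same route as the paper: the paper also verifies linearity and positivity from the (sesqui)linearity and positivity of $h_{\mathcal{L}}$, computes the trace via the same orthonormal-basis expansion, and proves rank $1$ by factoring $e_{\mathcal{L},x}$ through the evaluation map $s \mapsto s(x)$ into the one-dimensional fiber $\mathcal{L}_{x}$. Your packaging of all four properties through the single identity $e_{\mathcal{L},x} = \mathrm{ev}_{x}^{*}\mathrm{ev}_{x}$ is just a cleaner rendering of the same argument, and your explicit remark that $T_{\mathcal{L}}(x) > 0$ (no base point at $x$) is needed for the normalization is a point the paper leaves implicit.
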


\begin{proof}\quad
\begin{enumerate}
\item From the definition of $Q_{\rho}$ it follows that the map $\rho \mapsto Q_{\rho}$ is linear, since $h_{\mathcal{L}}$ is bilinear. Then since $h_{\mathcal{E}}$ is bilinear it follows that $e_{v}$ is also linear.
\item Suppose that $\rho$ is a positive operator. It suffices to show that $e_{v}(\rho) \geq 0$. This follows from the fact that $Q_{\rho}$ is self-adjoint with respect to $h_{\mathcal{E},x}$ and that $h_{\mathcal{E},x}$ is a positive Hermitian inner product on $\mathcal{E}_{x}$.
\item For any orthonormal basis $\{s_{i}\}$ of $\mathcal{H}_{\mathcal{L}}$, we have
\begin{align*}
\operatorname{Tr} e_{\mathcal{L},x} &= \sum _{i} e_{\mathcal{L},x}(\bar{s}_{i} \otimes s_{i}) \\
&= \sum _{i} h_{\mathcal{L}}(\bar{s}_{i}(x) \otimes s_{i}(x)) \\
&= \sum _{i} |s_{i}(x)|_{\mathcal{L}}^{2} \\
&= T_{\mathcal{L}}(x).
\end{align*}
Thus
\[ \operatorname{Tr} c_{\mathcal{L},x} = \operatorname{Tr}(e_{\mathcal{L},x})/T_{\mathcal{L}}(x) = 1. \]
\item The map $\mathcal{H}_{\mathcal{L}} \to \mathcal{L}_{x}$ given by $s \mapsto s(x)$ goes from a finite-dimensional vector space to a 1-dimensional vector space. Since $e_{\mathcal{L},x}$ is a quadratic form given by composing this map with the quadratic form $h_{\mathcal{L}_{x}}$, the corresponding linear map must have rank 1. Then $c_{\mathcal{L},x}$ is a scalar multiple of $e_{\mathcal{L},x}$, so it also has rank 1.
\end{enumerate}
\end{proof}

\begin{proposition}\label{prop:coh-span}
The set $\{c_{\mathcal{L},x}\}$ of coherent states spans $\mathcal{H}_{\mathcal{L}}$.
\end{proposition}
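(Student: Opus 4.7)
The plan is to argue by orthogonal complement: to show the coherent states span, it suffices to show that no nonzero $s \in \mathcal{H}_{\mathcal{L}}$ is orthogonal to the image of every $c_{\mathcal{L},x}$. By Proposition~\ref{prop:coherent-state} each $c_{\mathcal{L},x}$ is a positive rank-one trace-one operator on $\mathcal{H}_{\mathcal{L}}$, so it is the orthogonal projection onto a one-dimensional subspace; I will let $u_{x} \in \mathcal{H}_{\mathcal{L}}$ be a unit vector spanning that image (well-defined up to a phase), so that after undoing the normalization $e_{\mathcal{L},x} = T_{\mathcal{L}}(x)\,|u_{x}\rangle\langle u_{x}|$ as an operator on $\mathcal{H}_{\mathcal{L}}$.

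The key step is to identify the Hilbert space pairing $\langle u_{x}, s\rangle$ with the pointwise value $s(x) \in \mathcal{L}_{x}$. Evaluating the sesquilinear form associated to $e_{\mathcal{L},x}$ on the pure state $\bar{s}\otimes s$ in two ways, the definition gives
\[
\langle s, e_{\mathcal{L},x}\, s \rangle \;=\; Q_{\bar{s}\otimes s}(x) \;=\; h_{\mathcal{L}}(s(x), s(x)) \;=\; |s(x)|_{\mathcal{L}}^{2},
\]
while the rank-one decomposition gives $\langle s, e_{\mathcal{L},x}\, s\rangle = T_{\mathcal{L}}(x)\,|\langle u_{x}, s\rangle|^{2}$. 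Combining these yields the identity $|\langle u_{x}, s\rangle|^{2} = |s(x)|_{\mathcal{L}}^{2}/T_{\mathcal{L}}(x)$ wherever $T_{\mathcal{L}}(x) > 0$, which is the compact analog of the reproducing identity $\braket{\psi_{a}|f} = \exp(-|a|^{2})f(z)$ from the bosonic warm-up.

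Given this identity the span follows immediately. Suppose $s$ is orthogonal to $u_{x}$ for every $x \in M$. At every $x$ with $T_{\mathcal{L}}(x) > 0$ the identity forces $|s(x)|_{\mathcal{L}} = 0$. At points with $T_{\mathcal{L}}(x) = 0$, the defining formula $T_{\mathcal{L}}(x) = \sum_{i} |s_{i}(x)|_{\mathcal{L}}^{2}$ already forces every member of an orthonormal basis of $\mathcal{H}_{\mathcal{L}}$, and hence every $s \in \mathcal{H}_{\mathcal{L}}$, to vanish at $x$. Thus $s$ vanishes identically as a section of $\mathcal{L}$, giving $s = 0$ in $\mathcal{H}_{\mathcal{L}}$ and a trivial orthogonal complement for $\{u_{x}\}_{x \in M}$.

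I expect the only real subtlety to be the bookkeeping at the base locus $\{T_{\mathcal{L}} = 0\}$, where $c_{\mathcal{L},x}$ is not defined by the normalization of Proposition~\ref{prop:coherent-state}; as noted above this is harmless because every global section already vanishes there. Beyond that, the argument is essentially the tautology that a holomorphic section is determined by its pointwise values, repackaged through the rank-one structure of $e_{\mathcal{L},x}$ established in the previous proposition.
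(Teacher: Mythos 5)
Your argument is correct and is essentially the paper's proof: orthogonality to every coherent state forces $|s(x)|_{\mathcal{L}}^{2} = \operatorname{Tr}(e_{\mathcal{L},x}\,\bar{s}\otimes s)$ to vanish pointwise, hence $s = 0$. The only difference is your explicit handling of the locus $\{T_{\mathcal{L}} = 0\}$, a minor refinement the paper passes over silently.
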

\begin{proof}
We use proof by contradiction. Suppose that the coherent states span a proper subspace $\mathcal{H}' \subseteq \mathcal{H}_{\mathcal{L}}$. Let $\ket{\psi}\bra{\psi}$ be any pure state orthogonal to $\mathcal{H}'$. Then
\[ 0 = \operatorname{Tr}(c_{\mathcal{L},x} \psi) = (T_{\mathcal{L}}(x))^{-1}|\psi(x)|_{h_{\mathcal{L}}}^{2} \]
for all $x \in M$,which implies that $\psi(x) = 0$ for all $x$. Thus $\psi = 0$, which is a contradiction.
\end{proof}

The purpose of the function $T_{\mathcal{L}}$ is to normalize the coherent states to be unit vectors. To define the normalized POVM associated with the coherent states, we will need to ``un-normalize'' and put this factor back in.

\begin{definition}
The coherent state POVM associated to $M$, $\mathcal{L}$, and $\mu$ is defined to be
\[ T_{\mathcal{L}}(x) c_{\mathcal{L},x}\,d\mu(x). \]
\end{definition}

\begin{proposition}\label{prop:resolution-identity}
The coherent states define a resolution of the identity. Equivalently, the coherent state POVM is appropriately normalized, or
\[ \int_{M} T_{\mathcal{L}}(x) c_{\mathcal{L},x}\,d\mu(x) = \operatorname{id}_{\mathcal{H}_{\mathcal{L}}}. \]
\end{proposition}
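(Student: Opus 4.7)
The plan is to compute the trace pairing $\operatorname{Tr}(A\rho)$ for an arbitrary $\rho \in \operatorname{End}(\mathcal{H}_{\mathcal{L}})$, where $A \coloneqq \int_{M} T_{\mathcal{L}}(x)\, c_{\mathcal{L},x}\, d\mu(x)$, and show that the result equals $\operatorname{Tr}(\rho)$. Since the trace pairing is nondegenerate on the finite-dimensional space $\operatorname{End}(\mathcal{H}_{\mathcal{L}})$, this will force $A = \operatorname{id}_{\mathcal{H}_{\mathcal{L}}}$.

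Exchanging trace and integral (justified because $\mathcal{H}_{\mathcal{L}}$ is finite-dimensional and $M$ is compact) and recalling from Proposition~\ref{prop:coherent-state} that $T_{\mathcal{L}}(x)\, c_{\mathcal{L},x} = e_{\mathcal{L},x}$, together with the fact that under the trace-pairing identification of $e_{\mathcal{L},x}$ with an operator one has $\operatorname{Tr}(e_{\mathcal{L},x}\, \rho) = e_{\mathcal{L},x}(\rho) = Q_{\rho}(x)$, we obtain
\[
\operatorname{Tr}(A\rho) \;=\; \int_{M} \operatorname{Tr}\bigl(T_{\mathcal{L}}(x)\, c_{\mathcal{L},x}\, \rho\bigr)\, d\mu(x) \;=\; \int_{M} Q_{\rho}(x)\, d\mu(x).
\]
Thus the proposition reduces to the identity $\int_{M} Q_{\rho}(x)\, d\mu(x) = \operatorname{Tr}(\rho)$.

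This identity is linear in $\rho$, so by the spectral theorem it suffices to verify it on a pure state $\rho$ corresponding to a vector $s \in \mathcal{H}_{\mathcal{L}}$. In that case the definition of the Husimi $Q$-function gives $Q_{\rho}(x) = h_{\mathcal{L}}(s(x), s(x)) = |s(x)|_{h_{\mathcal{L}}}^{2}$, and hence
\[
\int_{M} Q_{\rho}(x)\, d\mu(x) \;=\; \int_{M} h_{\mathcal{L}}(s(x), s(x))\, d\mu(x) \;=\; \langle s, s\rangle_{\mathcal{H}_{\mathcal{L}}} \;=\; \operatorname{Tr}(\rho),
\]
where the middle equality is exactly the definition of the inner product on $\mathcal{H}_{\mathcal{L}}$ and the last is the fact that the trace of a rank-one pure state is the squared norm of its underlying vector.

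The result is therefore essentially a restatement of the definition of $\langle \cdot, \cdot\rangle_{\mathcal{H}_{\mathcal{L}}}$ in coherent-state language, and there is no substantive obstacle. The one step requiring care is the bookkeeping that translates the sesquilinear form $e_{\mathcal{L},x}$ on $\operatorname{End}(\mathcal{H}_{\mathcal{L}})$ into an operator via the inner product on $\mathcal{H}_{\mathcal{L}}$; one must verify that with this identification $\operatorname{Tr}(c_{\mathcal{L},x}\,\rho)$ really equals $Q_{\rho}(x)/T_{\mathcal{L}}(x)$. This is immediate once conventions are fixed, and every other step is either a routine interchange of limits or a direct unwinding of definitions.
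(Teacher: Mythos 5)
Your proof is correct and is essentially the paper's own argument in slightly different clothing: the paper directly checks the quadratic form $\braket{s|\int_M T_{\mathcal{L}}(x)c_{\mathcal{L},x}\,d\mu(x)|s} = \int_M |s(x)|^2_{\mathcal{L}}\,d\mu(x) = |s|^2_{\mathcal{H}_{\mathcal{L}}}$ for every $s$, which is exactly your reduction to pure states via $\operatorname{Tr}(e_{\mathcal{L},x}\rho) = Q_{\rho}(x)$ and the definition of the inner product. The trace-pairing wrapper you add is harmless but not needed.
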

\begin{proof}
Let $s \in \mathcal{H}_{\mathcal{L}}$ be any vector. It suffices to calculate
\begin{align*}
\bra{s} \left(\int_{M} T_{\mathcal{L}}(x) c_{\mathcal{L},x}\,d\mu(x)\right) \ket{s} &= \int _{M} \braket{s | e_{\mathcal{L},x} | s}\,d\mu(x) \\
&= \int _{M} |s(x)|^{2}_{\mathcal{L}_{x}}\,d\mu(x) \\
&= |s|_{\mathcal{H}_{\mathcal{L}}}^{2}.
\end{align*}
\end{proof}

\subsection{Glauber-Sudarshan P-function}
With the coherent states defined, we can define the Glauber-Sudarshan P-function of an operator.

\begin{definition}\label{def:Pquant}
Let $A \in \operatorname{End}(\mathcal{H}_{\mathcal{L}})$ be a self-adjoint operator. The \textbf{Glauber-Sudarshan P-function} of $A$ is any function $f : M \to \mathbb{R}$ such that
\[ A = \int _{M} T_{\mathcal{L}}(x)f(x)c_{\mathcal{L},x}\,d\mu(x) \]
and such that there exists some $B \in \operatorname{End}(\mathcal{H}_{\mathcal{L}})$ such that $f = Q_{B}$ is the Husimi Q-function of $B$. The \textbf{Glauber-Sudarshan P-quantization} of a classical function $f : M \to \mathbb{R}$ is defined to be
\[ \int _{M} T_{\mathcal{L}}(x)f(x)c_{\mathcal{L},x}\,d\mu(x). \]
\end{definition}

It may seem unnatural at first to require $f$ to coincide with the Husimi Q-function of some state, but this condition is actually needed for $f$ to be uniquely defined. The condition given by just the integral defines a finite number of linear constraints on $f$, since $\mathcal{H}_{\mathcal{L}}$ is finite-dimensional, while the space of smooth functions is infinite-dimensional. Thus $f$ is not uniquely defined until we constrain it to be in a finite-dimensional subspace of $C^{\infty}(M)$, which is exactly what the second condition does.

\begin{proposition}
The P-quantization defines a linear map $C^{\infty}(M) \to \operatorname{End}(\mathcal{H}_{\mathcal{L}})$ which is the adjoint of the map sending an operator to its Husimi Q-function.
\end{proposition}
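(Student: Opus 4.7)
The plan is to unwind both sides of the adjointness identity
\[ \langle f, Q_{\rho} \rangle_{L^{2}(M,\mu)} = \langle P(f), \rho \rangle_{\mathrm{HS}} \]
directly from the definitions, where I take $\langle f, g \rangle = \int_{M} f g\, d\mu$ on real classical observables and $\langle A, B \rangle = \operatorname{Tr}(A^{*}B)$ on $\operatorname{End}(\mathcal{H}_{\mathcal{L}})$, and $P$ denotes the P-quantization of Definition~\ref{def:Pquant}.

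Linearity of $f \mapsto P(f)$ is immediate, since in the defining integral $f(x)$ enters only as a scalar multiplying the fixed operator-valued measure $T_{\mathcal{L}}(x) c_{\mathcal{L},x}\,d\mu(x)$. For real-valued $f$ the operator $P(f)$ is also self-adjoint, because each $c_{\mathcal{L},x}$ is self-adjoint by Proposition~\ref{prop:coherent-state}, so the right-hand side of the adjointness identity reduces to $\operatorname{Tr}(P(f)\rho)$.

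The core computation is then to pull the trace inside the defining integral for $P(f)$, obtaining
\[ \operatorname{Tr}(P(f)\rho) = \int_{M} T_{\mathcal{L}}(x) f(x) \operatorname{Tr}(c_{\mathcal{L},x}\rho)\,d\mu(x), \]
and to combine this with the pointwise identity $\operatorname{Tr}(c_{\mathcal{L},x}\rho) = Q_{\rho}(x)/T_{\mathcal{L}}(x)$. Once that is in hand, the two factors of $T_{\mathcal{L}}(x)$ cancel and the integrand becomes $f(x) Q_{\rho}(x)$, giving exactly $\langle f, Q_{\rho} \rangle$ as required.

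The main obstacle, and essentially the only nontrivial step, is justifying the identity $\operatorname{Tr}(c_{\mathcal{L},x}\rho) = Q_{\rho}(x)/T_{\mathcal{L}}(x)$. This amounts to bookkeeping around the identification from Proposition~\ref{prop:coherent-state} of the sesquilinear form $e_{\mathcal{L},x}(\rho) = Q_{\rho}(x)$ with an element of $\operatorname{End}(\mathcal{H}_{\mathcal{L}})$ through the Hilbert--Schmidt pairing: that identification is defined precisely so that $\operatorname{Tr}(e_{\mathcal{L},x}\rho) = e_{\mathcal{L},x}(\rho) = Q_{\rho}(x)$, and then dividing through by $T_{\mathcal{L}}(x)$ gives the claim for $c_{\mathcal{L},x} = e_{\mathcal{L},x}/T_{\mathcal{L}}(x)$. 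The cancellation of $T_{\mathcal{L}}$ at the end is exactly the reason this normalization factor was absorbed into the coherent-state POVM in the first place.
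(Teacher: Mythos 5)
Your proposal is correct and follows essentially the same route as the paper: the paper likewise pairs the defining integral of the P-quantization with an operator $B$ under the Hilbert--Schmidt inner product, pulls the trace inside, and uses $T_{\mathcal{L}}(x)\braket{c_{\mathcal{L},x}|B|c_{\mathcal{L},x}} = Q_{B}(x)$ so that the normalization factors cancel and the pairing becomes $\langle Q_{B}, f\rangle$. Your extra remarks on linearity and self-adjointness are fine but not needed beyond what the paper records.
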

\begin{proof}
We have
\begin{align*}
\left\langle B, \int _{M} \ket{c_{\mathcal{L},x}}T_{\mathcal{L}}(x)f(x)\bra{c_{\mathcal{L},x}}\,d\mu(x) \right \rangle &= \int _{M} \operatorname{Tr}(B\ket{c_{\mathcal{L},x}}T_{\mathcal{L}}(x)f(x)\bra{c_{\mathcal{L},x}})\,d\mu(x) \\
&= \int _{M} T_{\mathcal{L}}(x)\braket{c_{\mathcal{L},x} | B | c_{\mathcal{L},x}}f(x)\,d\mu(x) \\
&= \langle Q_{B}(x), f \rangle.
\end{align*}
\end{proof}

\begin{proposition}
For any $A$, the Glauber-Sudarshan P-function exists and is unique.
\end{proposition}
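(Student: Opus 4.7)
The plan is to reduce existence and uniqueness to the injectivity of the Husimi $Q$-function map
\[
Q : \operatorname{End}(\mathcal{H}_{\mathcal{L}}) \to C^{\infty}(M), \qquad B \mapsto Q_{B}.
\]
Once $Q$ is injective, its image $\mathcal{Q} \coloneqq Q(\operatorname{End}(\mathcal{H}_{\mathcal{L}}))$ is a finite-dimensional subspace of $C^{\infty}(M)$, and the constraint ``$f = Q_{B}$ for some $B$'' in Definition~\ref{def:Pquant} becomes simply the constraint that $f \in \mathcal{Q}$. By the preceding proposition $P$ is the adjoint of $Q$, so for any $B_{1}, B_{2} \in \operatorname{End}(\mathcal{H}_{\mathcal{L}})$,
\[
\langle B_{1},\, (P \circ Q)(B_{2}) \rangle_{\operatorname{HS}} \;=\; \langle Q(B_{1}),\, Q(B_{2}) \rangle_{L^{2}(M,\mu)}.
\]
The right-hand side is the Gram form of $Q(B_{1})$ and $Q(B_{2})$ in $L^{2}$, which is positive definite precisely when $Q$ is injective. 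Hence $P \circ Q$ is a self-adjoint invertible operator on $\operatorname{End}(\mathcal{H}_{\mathcal{L}})$, and equivalently $P|_{\mathcal{Q}} : \mathcal{Q} \to \operatorname{End}(\mathcal{H}_{\mathcal{L}})$ is an isomorphism, which gives both existence and uniqueness of the P-function.

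The main step is therefore the injectivity of $Q$. Suppose $Q_{B} \equiv 0$. Fix an orthonormal basis $\{s_{i}\}$ of $\mathcal{H}_{\mathcal{L}}$ and write $B = \sum_{ij} B_{ij}\ket{s_{i}}\bra{s_{j}}$. Consider the holomorphic section
\[
s_{B} \;=\; \sum_{ij} B_{ij}\, \bar{s}_{j} \boxtimes s_{i} \;\in\; \Gamma_{\operatorname{hol}}(\bar{\mathcal{L}} \boxtimes \mathcal{L})
\]
on $\bar{M} \times M$ coming from the definition of the Husimi function; its restriction to the diagonal, paired with $h_{\mathcal{L}}$, is exactly $Q_{B}$. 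In a local holomorphic frame $e$ for $\mathcal{L}$ with $s_{i} = f_{i}\, e$, the function $F(\bar{y}, x) = \sum_{ij} B_{ij}\, \overline{f_{j}(y)}\, f_{i}(x)$ is a local expression for $s_{B}$ as a holomorphic function in the independent complex variables $\bar{y} \in \bar{M}$ and $x \in M$, and $Q_{B} \equiv 0$ says that $F$ vanishes when $\bar{y} = \bar{x}$. Expanding $F$ as a power series in $\bar{y}$ and $x$ and matching coefficients of $\bar{x}^{I} x^{J}$ (using that the diagonal is a totally real submanifold of half the real dimension) forces $F \equiv 0$, and the linear independence of the sections $\bar{s}_{j} \boxtimes s_{i}$ of $\bar{\mathcal{L}} \boxtimes \mathcal{L}$ then gives $B_{ij} = 0$ for all $i, j$.

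Once $Q$ is injective, the rest is routine: given a self-adjoint $A$, set $B \coloneqq (P \circ Q)^{-1}(A)$ and $f \coloneqq Q_{B}$. Then $f \in \mathcal{Q}$ and $P(f) = A$ by construction, so $f$ is a P-function of $A$. For uniqueness, any candidate $f' = Q_{B'}$ with $P(f') = A$ satisfies $(P \circ Q)(B') = (P \circ Q)(B)$, hence $B' = B$ by invertibility, and therefore $f' = f$.

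The main obstacle is the off-diagonal step in the injectivity of $Q$, that is, passing from vanishing of $Q_{B}$ on the diagonal of $\bar{M} \times M$ to vanishing of the full holomorphic section $s_{B}$. The analytic-continuation argument sketched above is the cleanest route in the K\"{a}hler setting, and it is exactly what makes the finite-dimensionality condition in Definition~\ref{def:Pquant} well-posed: the image of $Q$ is a genuine finite-dimensional subspace of $C^{\infty}(M)$ of dimension $(\dim \mathcal{H}_{\mathcal{L}})^{2}$.
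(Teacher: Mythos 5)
Your proposal is correct and follows essentially the same route as the paper: both reduce the statement to injectivity of the Husimi map $B \mapsto Q_{B}$ (using that the P-quantization is its adjoint), and both prove injectivity by viewing $Q_{B}$ as the diagonal restriction of a holomorphic kernel on $\bar{M} \times M$ and concluding the kernel vanishes identically. The only differences are presentational: the paper phrases the kernel as $\braket{c_{\mathcal{L},x}|B|c_{\mathcal{L},y}}$ and invokes the spanning of coherent states, while you work in a local frame with the section $s_{B}$ and spell out the invertibility of $P \circ Q$ via its Gram form.
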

\begin{proof}
It suffices to show that the map $B \mapsto Q_{B}$ taking a self-adjoint operator $B$ to its Husimi Q-function is injective (so an invertible linear map onto its image), which implies that its adjoint is an invertible linear map from the image to the set of self-adjoint operators.

To this end, suppose the $Q_{B}(x) = 0$ for all $x$. Then $\braket{c_{\mathcal{L},x} | B | c_{\mathcal{L},x}} = 0$ for all $x$. The function $\braket{c_{\mathcal{L},x} | B | c_{\mathcal{L},y}}$ is holomorphic on $\bar{M} \times M$, and must be 0 because its restriction to the diagonal is 0. Since $\{c_{\mathcal{L},x}\}$ spans $\mathcal{H}_{\mathcal{L}}$ we get $B = 0$.
\end{proof}

\subsection{Geometric quantization and vector bundles}\label{sec:quant-vecbund}
In the previous section, we only considered Hilbert spaces of the form $\Gamma_{\text{hol}}(M, \mathcal{L})$ for a Hermitian line bundle $\mathcal{L}$. For certain applications, particularly to recover the exponential quantum de Finetti theorem of Renner in Section~\ref{sec:renner-exp}, we will need to consider Hilbert spaces of the form $\Gamma_{\text{hol}}(M, \mathcal{E} \otimes \mathcal{L})$ where $\mathcal{E}$ is a Hermitian vector bundle with metric $h_{\mathcal{E}}$ and $\mathcal{L}$ is a Hermitian line bundle. For a reference with slightly more detail we refer the reader to \cite{Lazarsfeld_2004}.

Our construction will essentially reduce to the case from before where $\mathcal{E}$ is trivial. Specifically, consider the unit sphere bundle $S(\mathcal{E})$ over $\mathcal{E}$. This has a $U(1)$-action given by scaling any given section fiberwise by a unit complex number $z \in U(1)$. The quotient $\mathbb{P}(\mathcal{E}) \coloneqq S(\mathcal{E})/U(1)$ is a projective space bundle over $M$ and a complex manifold. We will take $M' \coloneqq \mathbb{P}(\mathcal{E})$ to be the new phase space in this construction, along with the projection map $p_{M'} : M' \to M$.

The manifold $M'$ has a holomorphic line bundle denoted $\mathcal{O}_{\mathcal{E}}(1)$, which when restricted to any fiber $\mathbb{P}(\mathcal{E}_{x})$ over $x \in M$ is the anticanonical line bundle. We have
\[ \Gamma_{\text{hol}}(M', \mathcal{O}_{\mathcal{E}}(1)) \cong \Gamma_{\text{hol}}(M, E) \]
and more generally
\[ \Gamma_{\text{hol}}(M', \mathcal{O}_{\mathcal{E}}(1) \otimes p_{M'}^{*}\mathcal{L}) \cong \Gamma_{\text{hol}}(M, \mathcal{E} \otimes \mathcal{L})\]
where $\mathcal{L}$ is a line bundle over $M$. Thus we may take the line bundle to be $\mathcal{O}_{\mathcal{E}}(1) \otimes p_{M'}^{*}\mathcal{L}$.

Lastly, we need to describe how to construct a Hermitian metric on $\mathcal{O}_{\mathcal{E}}(1)$ from the one on $\mathcal{E}$, as well as a measure $\mu'$ on $M'$ from the measure $\mu$ on $M$ and the Hermitian metric on $\mathcal{E}$. To define a Hermitian metric on $\mathcal{O}_{E}(1)$, it will suffice to give a smoothly varying Hermitian metric restricted to each fiber, which is the line bundle $\mathcal{O}_{\mathbb{P}(\mathcal{E}_{x})}(1)$ over $p_{M'}^{-1}(x) \cong \mathbb{P}(\mathcal{E}_{x})$. This will just be the Fubini-Study metric induced by the inner product $h_{\mathcal{E}_{x}}$ on $\mathcal{E}_{x}$. Similarly, the measure $\mu'$ will be defined by the formula
\[
\int _{M'} f\,d\mu' \coloneqq \int _{M} \left(\int _{p_{M'}^{-1}(x)} f|_{p_{M'}^{-1}(x)}\,d\mu_{\text{vol}, \mathbb{P}(\mathcal{E}_{x})}\right)\,d\mu(x)
\]
where $d\mu_{\text{vol}, \mathbb{P}(\mathcal{E}_{x})}$ is the volume measure on $p_{M'}^{-1}(x)$ coming from the Fubini-Study metric, which is in turn induced by the inner product $h_{\mathcal{E}_{x}}$.

\section{A generalized de Finetti theorem from quantization}\label{sec:main-sec}
The proof of the de Finetti theorem involves first taking the Husimi Q-function of a state $\rho$, and then taking the P-quantization with respect to a subsytem. The Q-function in the first step is defined with respect to a system of coherent states on the larger Hilbert space,
while the P-quantization is defined with respect to a system of coherent states on the subsystem Hilbert space. Naturally, the proof requires analyzing how coherent states on the smaller and larger Hilbert spaces are related.
It turns out that the proof strategy of \cite{Christandl_Koenig_Mitchison_Renner_2007} and the proof of Chiribella's formula \cite{Chiribella_2011} both depend on two key properties satisfied by the coherent states:
\begin{enumerate}
\item The coherent states form a resolution of the identity. More precisely, we have
\[ \operatorname{id}_{\mathcal{H}_{\mathcal{L}}} = \int _{M} \ket{c_{\mathcal{L},x}}\bra{c_{\mathcal{L},x}}\,d\mu_{\mathcal{L}}(x) \]
for some appropriate positive measure $\mu_{\mathcal{L}}$ over $M$.
\item The coherent state on a larger Hilbert space is the tensor product of coherent states on the subsystem Hilbert spaces. For the symmetric subspace, this is more precisely stated as
\[ c_{z,k + l} = z^{\otimes (k + l)} = z^{\otimes k} \otimes z^{\otimes l} = c_{z, k} \otimes c_{z, l}. \]
An analogous property holds for other irreducible representations of $U(d)$.
\end{enumerate}

Note that in property (2), there is an implicit identification of $\mathcal{H}_{k + l}$ as a subspace of $\mathcal{H}_{k} \otimes \mathcal{H}_{l}$. In geometric quantization the two Hilbert spaces will be $\Gamma(\mathcal{L}_{1} \otimes \mathcal{L}_{2})$ and $\Gamma(\mathcal{L}_{1}) \otimes \Gamma(\mathcal{L}_{2})$. Instead of an inclusion, there is a canonical linear map
$M_{\mathcal{L}_{1},\mathcal{L}_{2}} :\Gamma(\mathcal{L}_{1}) \otimes \Gamma(\mathcal{L}_{2}) \to \Gamma(\mathcal{L}_{1} \otimes \mathcal{L}_{2})$ in the reverse direction which takes the fiberwise tensor product of two sections. The identification of one Hilbert space as a subspace of the other is equivalent to the adjoint $M_{\mathcal{L}_{1},\mathcal{L}_{2}}^{*}$ being an isometry after an appropriate normalization.

However, $M_{\mathcal{L}_{1},\mathcal{L}_{2}}^{*}$ is in general \textbf{not} an isometry. Recent work of Finski \cite[Theorem~3.16]{Finski_2022} implies that it does asymptotically becomes close to an isometry for powers of very ample line bundles. Hence with the standard definitions, the coherent states will be equal to tensor products of coherent states on subsystems only approximately.

Our approach will be slightly different from the standard construction of the Hilbert space in geometric quantization, but later will simplify the error bounds. Instead of defining the inner products on the Hilbert space using the volume form on $M$ and Hermitian metric on $\mathcal{L}$, we will assume as given a family of inner products $\langle \cdot, \cdot \rangle_{\mathcal{L}}$ on $\mathcal{H}_{\mathcal{L}}$ for every line bundle $\mathcal{L}$ such that the adjoint $M_{\mathcal{L}_{1},\mathcal{L}_{2}}^{*}$ is an exact isometry. From the perspective of classical sum-of-squares optimization, this is also more useful as it allows one to choose a computationally convenient set of inner products instead of having to compute a potentially complicated integral.

Now property (2) will hold exactly, but property (1) will only hold approximately: the measure $\mu$ which makes the coherent states into a resolution of the identity will now vary depending on the line bundle $\mathcal{L}$. Instead of needing to analyze how close a coherent state is to the tensor product of two other coherent states, we just need to analyze how close $\mu_{L_{1}}$ and $\mu_{L_{2}}$ are for two such measures, and this will turn out to be the supremum of the Radon-Nikodym derivative $d\mu_{L_{1}}/d\mu_{L_{2}}$ (roughly the ratio of the densities). In certain cases there will be exact formulas for the Radon-Nikodym derivatives which recover the earlier de Finetti theorems, while in other cases we will be able to deduce coarser asymptotics.

\subsection{Precise setup}
As before, we will fix a compact K\"{a}hler manifold $M$. However, we will now also consider a family $\mathcal{L}_{1}, \dots, \mathcal{L}_{m}$ of holomorphic line bundles over $M$, as well as all (positive) tensor product line bundles of the form
\[ \mathcal{L}_{1}^{k_{1}} \otimes \cdots \otimes \mathcal{L}_{m}^{k_{m}}. \]
We will assume that each $\mathcal{L}_{i}$ comes with some Hermitian metric, and use the tensor product Hermitian metric on the (positive) tensor product bundles. For any line bundle $\mathcal{L}$, the metric will be denoted $h_{\mathcal{L}}$.

We will also additionally assume that each quantized Hilbert space $\mathcal{H}_{\mathcal{L}}$ comes with some nondegenerate inner product $\langle \cdot, \cdot \rangle_{\mathcal{L}}$. The multiplication maps will be denoted $M_{\mathcal{L}_{1}, \mathcal{L}_{2}} : \mathcal{H}_{\mathcal{L}_{1}} \otimes \mathcal{H}_{\mathcal{L}_{2}} \to \mathcal{H}_{\mathcal{L}_{1} \otimes \mathcal{L}_{2}}$. We will assume that $M_{\mathcal{L}_{1}, \mathcal{L}_{2}}^{*}$ is an isometry with respect to the inner products on the Hilbert spaces.

We have not fixed a measure $\mu$ over $M$ yet, but will define one next. Specifically, we want $\mu$ to be chosen such that the inner product on $\mathcal{H}_{\mathcal{L}}$ induced by $\mu$ and $h_{\mathcal{L}}$ coincides with $\langle \cdot, \cdot \rangle_{\mathcal{L}}$ up to a scalar multiple. It is not immediately clear that such a measure exists, but it will follow from some known results in functional analysis.

Before constructing the measure, note that the definition of the coherent states $c_{\mathcal{L},x}$ and the function $T_{\mathcal{L}} : M \to \mathbb{R}$ from Proposition~\ref{prop:coherent-state} only use the inner product on $\mathcal{H}_{\mathcal{L}}$, and do not directly use the measure $\mu$. Thus the use of properties of the coherent states and the function $T_{\mathcal{L}}$ in our construction of $\mu$ is not circular. (Note that it would be circular to use the resolution of the identity from Proposition~\ref{prop:resolution-identity} in the construction of $\mu$, but we will not use it.)

\subsubsection{Husimi quantization rule}
Let $\mathcal{L}$ be a positive tensor product line bundle and $\{s_{i}\}$ be an orthonormal basis of $\mathcal{H}_{\mathcal{L}}$. Define the Kodaira map $\iota_{\mathcal{L}} : M \to \mathbb{P}(\mathcal{H}_{\mathcal{L}})$ by
\[ \iota_{\mathcal{L}}(x) \coloneqq c_{\mathcal{L},x}. \]
In the rest of this article, we will assume that $h_{\mathcal{L}_{i}}$ is the Hermitian metric on $\mathcal{L}$ which is the pullback under $\iota_{\mathcal{L}_{i}}$ of the Fubini-Study metric on $\mathbb{P}(\mathcal{H}_{\mathcal{L}_{i}})$ induced by $\langle \cdot, \cdot \rangle_{\mathcal{L}_{i}}$. We only make this assumption for the $\mathcal{L}_{i}$, and for their positive tensor products we take the tensor product Hermitian metric as before.

\begin{proposition}\label{prop:tl-one}
For any $\mathcal{L}_{i}$ we have $T_{\mathcal{L}_{i}}(x) = 1$, where $T_{\mathcal{L}}$ is defined as in Proposition~\ref{prop:coherent-state}.
\end{proposition}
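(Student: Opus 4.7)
The plan is to pull back the explicit formula for the Fubini--Study Hermitian metric along the Kodaira map and then reduce $T_{\mathcal{L}_{i}}(x)$ to Parseval's identity. For a finite-dimensional Hermitian inner product space $V$, the FS metric on $\mathcal{O}_{\mathbb{P}(V)}(1)$ is characterized by $|\sigma([v])|^{2}_{\mathrm{FS}} = |\sigma(v)|^{2}/\|v\|^{2}_{V}$ for any $\sigma \in V^{*} \cong H^{0}(\mathbb{P}(V), \mathcal{O}(1))$ and any nonzero $v \in V$. Using the inner product $\langle \cdot, \cdot\rangle_{\mathcal{L}_{i}}$ on $\mathcal{H}_{\mathcal{L}_{i}}$ to identify $\mathcal{H}_{\mathcal{L}_{i}}^{*} \cong \mathcal{H}_{\mathcal{L}_{i}}$, each section $s \in \mathcal{H}_{\mathcal{L}_{i}}$ corresponds to the linear functional $\langle s, \cdot\rangle_{\mathcal{L}_{i}}$, and hence to a section of $\mathcal{O}(1)$ on the ambient projective space.

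Next I would identify the pullback line bundle $\iota_{\mathcal{L}_{i}}^{*}\mathcal{O}(1)$ with $\mathcal{L}_{i}$, compatibly with sections. By the proof of Proposition~\ref{prop:coherent-state}, $e_{\mathcal{L}_{i}, x}$ is the composition of the fiber evaluation $\mathrm{ev}_{x}: \mathcal{H}_{\mathcal{L}_{i}} \to (\mathcal{L}_{i})_{x}$ with the Hermitian form $h_{\mathcal{L}_{i}, x}$, so the 1-dimensional image $L_{x}$ of the rank-one operator $c_{\mathcal{L}_{i}, x}$ is exactly $\ker(\mathrm{ev}_{x})^{\perp}$, and the restriction $\mathrm{ev}_{x}|_{L_{x}}: L_{x} \to (\mathcal{L}_{i})_{x}$ is a linear isomorphism of 1-dimensional spaces. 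Globalizing, the tautological subbundle $\iota_{\mathcal{L}_{i}}^{*}\mathcal{O}(-1) \subset \mathcal{H}_{\mathcal{L}_{i}} \otimes \mathcal{O}_{M}$ with fiber $L_{x}$ maps isomorphically onto $\mathcal{L}_{i}$ by evaluation, so dually $\iota_{\mathcal{L}_{i}}^{*}\mathcal{O}(1) \cong \mathcal{L}_{i}$ and a section $s$ of $\mathcal{L}_{i}$ is identified with the pullback under $\iota_{\mathcal{L}_{i}}$ of the section $\langle s, \cdot\rangle_{\mathcal{L}_{i}}$ of $\mathcal{O}(1)$.

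Combining the two steps, let $v_{x}$ be any unit vector spanning $L_{x}$, so that $\iota_{\mathcal{L}_{i}}(x) = [v_{x}]$. Since $h_{\mathcal{L}_{i}}$ is by hypothesis the pullback of the FS metric, the formula above gives
\[
|s(x)|^{2}_{\mathcal{L}_{i}} = \frac{|\langle s, v_{x}\rangle_{\mathcal{L}_{i}}|^{2}}{\|v_{x}\|^{2}} = |\langle s, v_{x}\rangle_{\mathcal{L}_{i}}|^{2}.
\]
Summing over an orthonormal basis $\{s_{j}\}$ of $\mathcal{H}_{\mathcal{L}_{i}}$ and applying Parseval's identity then yields
\[
T_{\mathcal{L}_{i}}(x) = \sum_{j} |s_{j}(x)|^{2}_{\mathcal{L}_{i}} = \sum_{j} |\langle s_{j}, v_{x}\rangle_{\mathcal{L}_{i}}|^{2} = \|v_{x}\|^{2} = 1.
\]

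The main obstacle is the identification $\iota_{\mathcal{L}_{i}}^{*}\mathcal{O}(1) \cong \mathcal{L}_{i}$ and the bookkeeping between the linear form $\langle s, \cdot\rangle_{\mathcal{L}_{i}}$ on $\mathcal{H}_{\mathcal{L}_{i}}$ and the value $s(x) \in (\mathcal{L}_{i})_{x}$, since the Kodaira map in this paper lands in $\mathbb{P}(\mathcal{H}_{\mathcal{L}_{i}})$ via the Riesz representative $v_{x}$ rather than in the more customary dual $\mathbb{P}(\mathcal{H}_{\mathcal{L}_{i}}^{*})$. Once the resulting (anti-)linear dualities are sorted out, the key input $L_{x} = \ker(\mathrm{ev}_{x})^{\perp}$ is immediate from the construction of $c_{\mathcal{L}_{i}, x}$, and the proposition reduces to Parseval on $\mathcal{H}_{\mathcal{L}_{i}}$.
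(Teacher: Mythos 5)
Your proof is correct and is essentially the paper's own argument: the paper likewise combines the Fubini--Study formula $h_{FS}(s_{\alpha}(x), s_{\alpha}(x)) = |\langle x, s_{\alpha}\rangle|^{2}$ with Parseval's identity and then pulls back along the Kodaira map. The only difference is presentational: you track the identification of $\iota_{\mathcal{L}_{i}}^{*}\mathcal{O}(1)$ with $\mathcal{L}_{i}$ and the Riesz line $L_{x} = \ker(\mathrm{ev}_{x})^{\perp}$ explicitly (where, since the map lands in $\mathbb{P}(\mathcal{H}_{\mathcal{L}_{i}})$ via Riesz representatives, the dualities indeed carry the conjugation twist you flag, so the bundle bookkeeping should be read up to that twist), whereas the paper simply verifies $\sum_{\alpha} |\langle x, s_{\alpha}\rangle|^{2} = 1$ at every point of the ambient $\mathbb{P}(\mathcal{H}_{\mathcal{L}_{i}})$ and restricts to the image of $M$.
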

\begin{proof}
From the definition of the Fubini-Study metric, for any orthonormal basis $s_{\alpha}$ and $x \in \mathbb{P}(\mathcal{H}_{L_{i}})$ we have
\[ \sum _{\alpha} h_{FS}(s_{\alpha}(x), s_{\alpha}(x)) = \sum _{\alpha} |\langle x, s_{\alpha} \rangle|^{2} = 1. \]
In particular, the pullback to $M$ is also the constant function 1.
\end{proof}

\begin{definition}
Let $\mathcal{L}_{1}$ and $\mathcal{L}_{2}$ be positive tensor product line bundles and $\mathcal{K} = \mathcal{L}_{1} \otimes \mathcal{L}_{2}$. The \textbf{Husimi quantization} map (or un-normalized cloning map) is defined as
\[ C_{\mathcal{L}_{1} \to \mathcal{K}}(A) \coloneqq 
M_{\mathcal{L}_{1},\mathcal{L}_{2}}(A \otimes \operatorname{id}_{\mathcal{H}_{\mathcal{L}_{2}}})M_{\mathcal{L}_{1},\mathcal{L}_{2}}^{*} \]
where $A$ is a self-adjoint operator on $\mathcal{H}_{\mathcal{L}_{1}}$.
\end{definition}

\begin{definition}
Let $\mathcal{L} = \bigotimes _{i} \mathcal{L}_{i}^{k_{i}}$ be a positive tensor product line bundle and $\mathcal{H} = \bigotimes _{i} \mathcal{H}_{\mathcal{L}_{i}}^{\otimes k_{i}}$. A function $f : M_{\mathcal{L}} \to \mathbb{R}$ has \textbf{degree $\mathcal{L}$} if there is a self-adjoint operator $A : \mathcal{H}_{\mathcal{L}} \to \mathcal{H}_{\mathcal{L}}$ such that
\[ f(x) = \sum _{\alpha, \beta} \langle s_{\alpha}, M_{\mathcal{L}}^{*}AM_{\mathcal{L}}s_{\beta}\rangle h_{\mathcal{L}}(M_{\mathcal{L}}s_{\alpha}(x), M_{\mathcal{L}}s_{\beta}(x)) \]
where $M_{\mathcal{L}} : \mathcal{H} \to \mathcal{H}_{\mathcal{L}}$ is the multplication map, and $\{s_{\alpha}\}$ is a basis of $\mathcal{H}$ which is a tensor product of orthonormal bases for each $\mathcal{H}_{\mathcal{L}_{i}}$. If additionally $A$ can be chosen to be a positive operator, then $f$ is a \textbf{degree $\mathcal{L}$ sum-of-squares}.
\end{definition}

\begin{proposition}\label{prop:poly-proj}
If $A' : \mathcal{H} \to \mathcal{H}$ is self-adjoint, then
\[ \sum _{\alpha, \beta} \langle s_{\alpha}, A's_{\beta} \rangle h_{\mathcal{L}}(M_{\mathcal{L}}s_{\alpha}(x), M_{\mathcal{L}}s_{\beta}(x)) \]
is a function of degree $\mathcal{L}$ corresponding to $M_{\mathcal{L}}A'M_{\mathcal{L}}^{*}$.
\end{proposition}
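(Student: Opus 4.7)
The plan is to reduce the proposition to an identity involving the orthogonal projection $P := M_{\mathcal{L}}^{*} M_{\mathcal{L}}$. Because $M_{\mathcal{L}}^{*}$ is assumed to be an isometry, $M_{\mathcal{L}} M_{\mathcal{L}}^{*} = \operatorname{id}_{\mathcal{H}_{\mathcal{L}}}$, so $P^{2} = P = P^{*}$, and consequently
\[
M_{\mathcal{L}}^{*}(M_{\mathcal{L}} A' M_{\mathcal{L}}^{*}) M_{\mathcal{L}} = P A' P.
\]
Observe also that $M_{\mathcal{L}} A' M_{\mathcal{L}}^{*}$ is self-adjoint whenever $A'$ is. Thus the claim that the given sum is the degree-$\mathcal{L}$ function corresponding to $M_{\mathcal{L}} A' M_{\mathcal{L}}^{*}$ reduces to verifying the identity
\[
\sum_{\alpha,\beta} \langle s_{\alpha}, A' s_{\beta} \rangle h_{\mathcal{L}}(M_{\mathcal{L}} s_{\alpha}(x), M_{\mathcal{L}} s_{\beta}(x)) = \sum_{\alpha,\beta} \langle s_{\alpha}, P A' P s_{\beta} \rangle h_{\mathcal{L}}(M_{\mathcal{L}} s_{\alpha}(x), M_{\mathcal{L}} s_{\beta}(x)).
\]

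The key geometric input is that $M_{\mathcal{L}}(I - P) = M_{\mathcal{L}} - M_{\mathcal{L}} M_{\mathcal{L}}^{*} M_{\mathcal{L}} = 0$, so the evaluation map $\operatorname{ev}_{x} : \mathcal{H} \to \mathcal{L}_{x}$ defined by $\operatorname{ev}_{x}(v) = (M_{\mathcal{L}} v)(x)$ factors through $P$: $\operatorname{ev}_{x} = \operatorname{ev}_{x} \circ P$. In particular, the positive rank-one operator $R_{x} := \operatorname{ev}_{x}^{*} \operatorname{ev}_{x}$ on $\mathcal{H}$ satisfies $R_{x} = P R_{x} P$, and pointwise one has the sesquilinear identity $h_{\mathcal{L}}(\operatorname{ev}_{x} v, \operatorname{ev}_{x} w) = \langle v, R_{x} w \rangle$ for all $v, w \in \mathcal{H}$.

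To finish, I would recognize that $h_{\mathcal{L}}(M_{\mathcal{L}} s_{\alpha}(x), M_{\mathcal{L}} s_{\beta}(x))$ is the $(\alpha,\beta)$-matrix element of $R_{x}$ in the basis $\{s_{\alpha}\}$, so that after re-indexing (using self-adjointness of both $A'$ and $R_{x}$) the two sums in the displayed identity become $\operatorname{tr}(A' R_{x})$ and $\operatorname{tr}(P A' P \cdot R_{x})$ respectively. Then cyclicity of the trace together with $R_{x} = P R_{x} P$ yields
\[
\operatorname{tr}(P A' P \cdot R_{x}) = \operatorname{tr}(A' \cdot P R_{x} P) = \operatorname{tr}(A' R_{x}),
\]
which is the required identity. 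The main obstacle is the trace rewriting step, which is essentially bookkeeping with the sesquilinearity conventions; once that is in hand the conclusion is immediate from the absorption property $R_{x} = P R_{x} P$.
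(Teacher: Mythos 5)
Your argument is correct in substance and runs on the same mechanism as the paper's proof: the paper factors the assignment $A' \mapsto \sum_{\alpha,\beta}\langle s_{\alpha}, A's_{\beta}\rangle h_{\mathcal{L}}(M_{\mathcal{L}}s_{\alpha}(x), M_{\mathcal{L}}s_{\beta}(x))$ through $A' \mapsto M_{\mathcal{L}}A'M_{\mathcal{L}}^{*}$ and invokes $M_{\mathcal{L}}M_{\mathcal{L}}^{*} = \operatorname{id}$, which is exactly your absorption identity $R_{x} = P R_{x} P$ combined with cyclicity of the trace; your remark that $M_{\mathcal{L}}A'M_{\mathcal{L}}^{*}$ is again self-adjoint, so that it is an admissible witness in the definition of a degree-$\mathcal{L}$ function, is a detail the paper leaves implicit.

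The one step that needs genuine care is exactly the one you deferred as bookkeeping. If $h_{\mathcal{L}}$ and $\langle\cdot,\cdot\rangle$ are both conjugate-linear in the same slot (which is forced if $\langle v, R_{x}w\rangle = h_{\mathcal{L}}(\operatorname{ev}_{x}v, \operatorname{ev}_{x}w)$ is to define a linear operator $R_{x}$), then the double sum contracts $\langle s_{\alpha}, A's_{\beta}\rangle$ against $\langle s_{\alpha}, R_{x}s_{\beta}\rangle$ in the same index order, which equals $\operatorname{tr}(A'R_{x}^{\mathsf{T}})$ (transpose taken in the basis $\{s_{\alpha}\}$) rather than $\operatorname{tr}(A'R_{x})$; self-adjointness of $A'$ and $R_{x}$ does not remove the transpose, and transposing $PR_{x}P = R_{x}$ gives absorption of $R_{x}^{\mathsf{T}}$ by $\overline{P}$, not by $P$, so cyclicity alone stalls there. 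This is not purely cosmetic: the proposition amounts to saying that the relevant rank-one operator is supported in $\operatorname{ran}(M_{\mathcal{L}}^{*})$, and that holds for $\xi_{x}\xi_{x}^{*}$, where $\xi_{x} \in \mathcal{H}$ is determined by $\langle \xi_{x}, v\rangle = h_{\mathcal{L}}(u, (M_{\mathcal{L}}v)(x))$ for a unit frame $u$ of $\mathcal{L}_{x}$ (indeed $(M_{\mathcal{L}}v)(x) = 0$ for $v \in \ker M_{\mathcal{L}}$, so $P\xi_{x} = \xi_{x}$), but not automatically for its basis-conjugate. So the repair is to read the sesquilinearities so that the displayed sum equals $\langle \xi_{x}, A'\xi_{x}\rangle = \operatorname{tr}(A'\,\xi_{x}\xi_{x}^{*})$ — this is the reading the paper intends when it places its auxiliary functional on $\overline{\mathcal{H}_{\mathcal{L}}} \otimes \mathcal{H}_{\mathcal{L}}$ — after which your absorption-plus-cyclicity step closes the proof exactly as you describe, and in fact self-adjointness of $A'$ is not even needed for the identity itself, only for admissibility of the witness.
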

\begin{proof}
Consider the linear map sending $A'$ to the function as defined in the given expression. A given basis vector $\ket{s_{\alpha}} \bra{s_{\beta}}$ is sent to $h_{\mathcal{L}}(M_{\mathcal{L}}s_{\alpha}(x), M_{\mathcal{L}}s_{\beta}(x))$. Thus the map is the composition of $A' \mapsto M_{\mathcal{L}}A'M_{\mathcal{L}}^{*}$ and the linear map sending $\ket{s_{1}} \bra{s_{2}} \in \overline{\mathcal{H}_{\mathcal{L}}} \otimes \mathcal{H}_{\mathcal{L}}$ to $h_{\mathcal{L}}(s_{1}(x), s_{2}(x))$.
\end{proof}

\begin{proposition}\label{prop:poly-closed}
If $f$ has degree $\mathcal{L}_{1}$ and $g$ has degree $\mathcal{L}_{2}$, then $fg$ has degree $\mathcal{L}_{1} \otimes \mathcal{L}_{2}$ and $\bar{f}$ has degree $\mathcal{L}_{1}$.
\end{proposition}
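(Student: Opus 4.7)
The plan is to use the reformulation from Proposition~\ref{prop:poly-proj}, which lets us encode a degree-$\mathcal{L}$ function $f$ by a self-adjoint operator $A$ on the ``abstract'' tensor space $\mathcal{H}^{(\mathcal{L})} = \bigotimes_i \mathcal{H}_{\mathcal{L}_i}^{\otimes k_i}$ rather than on $\mathcal{H}_{\mathcal{L}}$ itself. Write
\[ f(x) = \sum_{\alpha_1,\beta_1} \langle s^{(1)}_{\alpha_1}, A_1 s^{(1)}_{\beta_1}\rangle\, h_{\mathcal{L}_1}(M_{\mathcal{L}_1} s^{(1)}_{\alpha_1}(x), M_{\mathcal{L}_1} s^{(1)}_{\beta_1}(x)), \]
and analogously for $g$ with a self-adjoint $A_2$ on $\mathcal{H}^{(2)}$. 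Since $\mathcal{L}_1 \otimes \mathcal{L}_2$ has multi-exponents that add, there is a canonical identification $\mathcal{H}^{(\mathcal{L}_1 \otimes \mathcal{L}_2)} \cong \mathcal{H}^{(1)} \otimes \mathcal{H}^{(2)}$. My proposed operator for $fg$ is $A_1 \otimes A_2$, which is self-adjoint since each factor is.

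To verify this works, I would use three compatibilities, each essentially by definition. First, the product basis $\{s^{(1)}_{\alpha_1} \otimes s^{(2)}_{\alpha_2}\}$ is orthonormal in $\mathcal{H}^{(1)} \otimes \mathcal{H}^{(2)}$, and
\[ \langle s^{(1)}_{\alpha_1} \otimes s^{(2)}_{\alpha_2},\, (A_1 \otimes A_2)(s^{(1)}_{\beta_1} \otimes s^{(2)}_{\beta_2})\rangle = \langle s^{(1)}_{\alpha_1}, A_1 s^{(1)}_{\beta_1}\rangle \cdot \langle s^{(2)}_{\alpha_2}, A_2 s^{(2)}_{\beta_2}\rangle. \]
Second, the tensor product Hermitian metric on $\mathcal{L}_1 \otimes \mathcal{L}_2$ (as fixed just before the definition) satisfies $h_{\mathcal{L}_1 \otimes \mathcal{L}_2}(u_1 \otimes u_2, v_1 \otimes v_2) = h_{\mathcal{L}_1}(u_1, v_1)\, h_{\mathcal{L}_2}(u_2, v_2)$. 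Third, the multiplication map is the fiberwise tensor product of sections, so $M_{\mathcal{L}_1,\mathcal{L}_2}(s_1 \otimes s_2)(x) = s_1(x) \otimes s_2(x)$. Plugging all three into the degree-$\mathcal{L}_1 \otimes \mathcal{L}_2$ formula with operator $A_1 \otimes A_2$, the double sum over $(\alpha_1, \alpha_2, \beta_1, \beta_2)$ factors as the product of the separate sums for $f(x)$ and $g(x)$, yielding $f(x)g(x)$ as desired.

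For the claim about $\bar f$, the swap $(\alpha, \beta) \mapsto (\beta, \alpha)$ in the defining sum combined with the Hermitian symmetry of both $\langle \cdot,\cdot\rangle$ and $h_{\mathcal{L}}$ and the self-adjointness of $A$ shows that $f$ is already real-valued, so $\bar f = f$ and the same operator certifies $\bar f$ as degree $\mathcal{L}_1$; equivalently, conjugating the formula replaces $A$ by $A^*$, which leaves a self-adjoint operator fixed. The main potential pitfall is bookkeeping: one must check that the identification $\mathcal{H}^{(\mathcal{L}_1 \otimes \mathcal{L}_2)} \cong \mathcal{H}^{(1)} \otimes \mathcal{H}^{(2)}$ and the compatibility of the multiplication maps with this identification are consistent, so that the formula for degree $\mathcal{L}_1 \otimes \mathcal{L}_2$ applied to $A_1 \otimes A_2$ really matches the product of the individual formulas. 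Once those identifications are made explicit, both claims reduce to a one-line computation.
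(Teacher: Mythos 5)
Your proposal is correct and takes essentially the same route as the paper: the paper also expands $f(x)g(x)$ as a double sum, recombines it over the product basis with the operator $M_{\mathcal{L}_1}^{*}AM_{\mathcal{L}_1}\otimes M_{\mathcal{L}_2}^{*}BM_{\mathcal{L}_2}$ (your $A_{1}\otimes A_{2}$), and then invokes Proposition~\ref{prop:poly-proj}, using the same three compatibilities (product orthonormal basis, tensor-product Hermitian metric, multiplication as fiberwise tensor product). Your treatment of $\bar{f}$ via conjugation and Hermitian symmetry, yielding $A^{*}$, matches the paper's argument as well.
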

\begin{proof}
For the second statement, suppose
\[ f(x) = \sum _{\alpha, \beta} \langle s_{\alpha}, M_{\mathcal{L}}AM_{\mathcal{L}}^{*}s_{\beta}\rangle_{\mathcal{L}} h_{\mathcal{L}}(s_{\alpha}(x), s_{\beta}(x)). \]
Then
\begin{align*}
\bar{f}(x) &= \sum _{\alpha, \beta} \overline{\langle s_{\alpha}, M_{\mathcal{L}}^{*}AM_{\mathcal{L}}s_{\beta}\rangle} \overline{h_{\mathcal{L}}(M_{\mathcal{L}}s_{\alpha}(x), M_{\mathcal{L}}s_{\beta}(x))} \\
&= \sum _{\alpha, \beta} \langle M_{\mathcal{L}}^{*}AM_{\mathcal{L}}s_{\beta}, s_{\alpha} \rangle h_{\mathcal{L}}(M_{\mathcal{L}}s_{\beta}(x), M_{\mathcal{L}}s_{\alpha}(x)) \\
&= \sum _{\alpha, \beta} \langle s_{\beta}, M_{\mathcal{L}}^{*}A^{*}M_{\mathcal{L}}s_{\alpha} \rangle h_{\mathcal{L}}(M_{\mathcal{L}}s_{\beta}(x), M_{\mathcal{L}}s_{\alpha}(x)) \\
&= \sum _{\alpha, \beta} \langle s_{\alpha}, M_{\mathcal{L}}^{*}A^{*}M_{\mathcal{L}}s_{\beta} \rangle h_{\mathcal{L}}(M_{\mathcal{L}}s_{\alpha}(x), M_{\mathcal{L}}s_{\beta}(x)) \\
\end{align*}
so we can take $A^{*}$ for $\bar{f}$.

For the first statement, let $A : \mathcal{H}_{\mathcal{L}_{1}} \to \mathcal{H}_{\mathcal{L}_{1}}$ and $B : \mathcal{H}_{\mathcal{L}_{2}} \to \mathcal{H}_{\mathcal{L}_{2}}$ be the corresponding operators for $f$ and $g$ respectively. Then
\begin{align*}
&\phantom{{}={}} f(x)g(x) \\
&= \sum _{\substack{\alpha_{1}, \beta_{1}\\ \alpha_{2}, \beta_{2}}} \!\langle s_{\alpha_{1}}, M_{1}^{*}AM_{1}s_{\beta_{1}} \rangle \langle s_{\alpha_{2}}, M_{2}^{*}BM_{2}s_{\beta_{2}} \rangle h_{\mathcal{L}_{1}}(M_{1}s_{\alpha_{1}}(x), M_{1}s_{\beta_{1}}(x))h_{\mathcal{L}_{2}}(M_{2}s_{\alpha_{2}}(x), M_{2}s_{\beta_{2}}(x)) \\
&= \sum _{\alpha, \beta,} \langle s_{\alpha}, (M_{1}^{*}AM_{1} \otimes M_{2}^{*}BM_{2})s_{\beta} \rangle h_{\mathcal{L}_{1} \otimes \mathcal{L}_{2}}(M_{\mathcal{L}}s_{\alpha}(x), M_{\mathcal{L}}s_{\beta}(x)) \\
&= \sum _{\alpha, \beta,} \langle s_{\alpha}, (M_{\mathcal{L}}^{*}M_{\mathcal{L}_{1},\mathcal{L}_{2}}(A \otimes B)M_{\mathcal{L}_{1},\mathcal{L}_{2}}^{*}M_{\mathcal{L}})s_{\beta} \rangle h_{\mathcal{L}_{1} \otimes \mathcal{L}_{2}}(M_{\mathcal{L}}s_{\alpha}(x), M_{\mathcal{L}}s_{\beta}(x))
\end{align*}
using Proposition~\ref{prop:poly-proj}.
\end{proof}

\begin{proposition}
The constant function 1 has degree $\mathcal{L}$ for all $\mathcal{L}$. If $f$ has degree $\mathcal{L}_{1}$, then it also has degree $\mathcal{L}_{1} \otimes \mathcal{L}_{2}$ for any positive tensor product $\mathcal{L}_{2}$.
\end{proposition}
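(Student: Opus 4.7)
The plan is to reduce both statements to a single base case and then combine it with the multiplicativity result of Proposition~\ref{prop:poly-closed}. In outline, I would first show that the constant function $1$ has degree $\mathcal{L}_i$ for each generating line bundle $\mathcal{L}_i$, then promote this to every positive tensor product using the product rule, and finally use the product rule once more to deduce the second statement from the first.

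For the base case, I take $\mathcal{L} = \mathcal{L}_i$ and $A = \operatorname{id}_{\mathcal{H}_{\mathcal{L}_i}}$ in the definition of degree $\mathcal{L}$. In this situation $\mathcal{H} = \mathcal{H}_{\mathcal{L}_i}$ and the multiplication map $M_{\mathcal{L}_i}$ is the identity, so the defining expression collapses to
\[
\sum_{\alpha, \beta} \langle s_\alpha, s_\beta \rangle\, h_{\mathcal{L}_i}(s_\alpha(x), s_\beta(x)) = \sum_\alpha h_{\mathcal{L}_i}(s_\alpha(x), s_\alpha(x)) = T_{\mathcal{L}_i}(x),
\]
which is identically $1$ by Proposition~\ref{prop:tl-one}. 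Hence the constant function $1$ has degree $\mathcal{L}_i$ for each $i$.

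Next, given an arbitrary positive tensor product $\mathcal{L} = \mathcal{L}_1^{k_1} \otimes \cdots \otimes \mathcal{L}_m^{k_m}$, I iterate Proposition~\ref{prop:poly-closed} applied $k_1 + \cdots + k_m - 1$ times to the product $1 \cdot 1 \cdots 1$, with the factor $1$ treated as having degree $\mathcal{L}_i$ exactly $k_i$ times. This yields that the constant $1$ has degree $\mathcal{L}_1^{k_1} \otimes \cdots \otimes \mathcal{L}_m^{k_m} = \mathcal{L}$, proving the first statement. The second statement follows immediately: if $f$ has degree $\mathcal{L}_1$ and $\mathcal{L}_2$ is any positive tensor product, then by what we have just established $1$ has degree $\mathcal{L}_2$, and one more application of Proposition~\ref{prop:poly-closed} gives that $f = f \cdot 1$ has degree $\mathcal{L}_1 \otimes \mathcal{L}_2$.

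There is essentially no conceptual obstacle here once Propositions~\ref{prop:tl-one} and~\ref{prop:poly-closed} are available; the only detail worth checking carefully is that the multiplication map on a single tensor factor is truly the identity, so that the general definition of degree $\mathcal{L}_i$ really does reduce to the orthonormal-basis sum computing $T_{\mathcal{L}_i}$. No further use of the isometry assumption on $M_{\mathcal{L}_1,\mathcal{L}_2}^*$ is needed beyond what is already built into Proposition~\ref{prop:poly-closed}.
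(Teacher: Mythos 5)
Your proof is correct and is essentially the argument the paper intends: its one-line proof cites exactly Proposition~\ref{prop:tl-one} and Proposition~\ref{prop:poly-closed} with an induction, which is what you carry out, taking $A = \operatorname{id}$ so the defining sum collapses to $T_{\mathcal{L}_i} \equiv 1$ and then iterating the product rule. No discrepancy with the paper's approach.
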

\begin{proof}
These follow from an induction argument using Proposition~\ref{prop:tl-one} and Proposition~\ref{prop:poly-closed}.
\end{proof}

\begin{proposition}\label{prop:polys-dense}
The set of all functions with degree $\mathcal{L}$, as $\mathcal{L}$ ranges over positive tensor product line bundles, is dense in $C(M, \mathbb{C})$. The set of all functions which are degree $\mathcal{L}$ sums-of-squares is dense in $C(M, \mathbb{R}_{\geq 0})$.
\end{proposition}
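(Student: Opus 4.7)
The overall approach is to invoke the complex Stone--Weierstrass theorem for the first claim and then promote it to the non-negative setting using a positivity-implies-sum-of-squares input.

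For the first statement, let $\mathcal{A} \subseteq C(M,\mathbb{C})$ denote the union, over all positive tensor product line bundles $\mathcal{L}$, of the degree-$\mathcal{L}$ functions. The preceding propositions immediately give that $\mathcal{A}$ contains the constant $1$, is closed under complex conjugation (Proposition~\ref{prop:poly-closed}), and is closed under products (Proposition~\ref{prop:poly-closed}); closure under addition follows by using the inclusion ``degree $\mathcal{L}_{1}$ implies degree $\mathcal{L}_{1} \otimes \mathcal{L}_{2}$'' to lift two summands into a common degree. Thus $\mathcal{A}$ is a unital $*$-subalgebra, and by complex Stone--Weierstrass it suffices to show $\mathcal{A}$ separates points. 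For this I would use the Kodaira map $\iota_{\mathcal{L}}(x) = c_{\mathcal{L},x}$: assuming the $\mathcal{L}_{i}$ are sufficiently positive, so that by Kodaira embedding some tensor product $\mathcal{L}$ makes $\iota_{\mathcal{L}}$ into an embedding $M \hookrightarrow \mathbb{P}(\mathcal{H}_{\mathcal{L}})$, then for any distinct $x,y \in M$ one can pick $z \in M$ with $|\langle c_{\mathcal{L},z}, c_{\mathcal{L},x}\rangle|^{2} \neq |\langle c_{\mathcal{L},z}, c_{\mathcal{L},y}\rangle|^{2}$, whence the degree-$\mathcal{L}$ function associated to the rank-one PSD operator $|c_{\mathcal{L},z}\rangle\langle c_{\mathcal{L},z}|$ separates $x$ from $y$.

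For the second statement, given $f \in C(M, \mathbb{R}_{\geq 0})$ and $\epsilon > 0$, the plan is first to use the first claim together with conjugation closure (replacing an approximant $g$ by $(g+\bar g)/2$, which remains degree-$\mathcal{L}_{0}$) to produce a real-valued degree-$\mathcal{L}_{0}$ function $g$ satisfying $g \geq \epsilon/4$ and $\|g - f\|_{\infty} < \epsilon$. The remaining step is to realize such a strictly positive $g$ as a sum-of-squares in some higher degree $\mathcal{L}_{0} \otimes \mathcal{L}^{\otimes k}$. This is exactly the content of Quillen's positivity theorem and its K\"{a}hler generalizations (Catlin--D'Angelo, Finski): any strictly positive Hermitian polynomial of given degree, after tensoring with a sufficiently high power of an ample line bundle, admits a Hermitian sum-of-squares representation. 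Since tensoring by the constant $1$ (an SoS) preserves PSD structure, this promotes $g$ to a genuine degree-$\mathcal{L}_{0} \otimes \mathcal{L}^{\otimes k}$ SoS approximating $f$ within $\epsilon$.

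The principal obstacle is supplying this positivity-implies-SoS input, which requires non-trivial complex-analytic content (Bergman-kernel asymptotics or Positivstellensatz-type results); by contrast the Stone--Weierstrass step is routine once separation of points has been arranged. An alternative route that avoids citing Quillen is to use Berezin--Toeplitz quantization directly: the operator $A_{f} = \int_{M} f(x)\, T_{\mathcal{L}}(x)\, c_{\mathcal{L},x}\, d\mu(x)$ is automatically PSD whenever $f \geq 0$, so its Husimi Q-function is a degree-$\mathcal{L}$ SoS, and the Bergman-kernel asymptotics of Catlin and Zelditch guarantee this Q-function (the Berezin transform of $f$) converges uniformly to $f$ along a sequence of high tensor powers, giving the density directly.
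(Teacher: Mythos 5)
Your treatment of the first statement coincides with the paper's: both arguments observe that the degree-$\mathcal{L}$ functions form a unital $*$-algebra (Proposition~\ref{prop:poly-closed} and linearity in the operator $A$), that the Kodaira map being a closed embedding (a standing assumption in the paper, not something you need Kodaira's embedding theorem for) gives separation of points, and conclude by Stone--Weierstrass. The divergence is in the second statement, and there your proposal has a real gap rather than just a stylistic difference. The paper's proof is elementary and self-contained: it Tietze-extends $f$ to a continuous $g$ on the ambient $\mathbb{P}(\mathcal{H}_{\mathcal{L}})$ and approximates $g$ uniformly by the functions $h_{k}(z) = C_{k}\int g(x)\,|\langle z, x\rangle|^{2k}\,d\mu_{FS}(x)$, which are manifestly sums of Hermitian squares (an integral of the squares $|\langle z,x\rangle|^{2k}$ against the nonnegative weight $g$, i.e.\ a visibly positive operator), with a quantitative error controlled by the modulus of continuity of $g$ and the growth $C_{k} = O(k^{\dim \mathcal{H}_{\mathcal{L}}})$; restricting to $M$ stays a degree-$\mathcal{L}^{k}$ SoS by Proposition~\ref{prop:poly-proj}. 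No Positivstellensatz and no Bergman-kernel asymptotics are needed.

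The gap in your route is that the black boxes you invoke do not obviously deliver sums-of-squares in the paper's sense or in the paper's setting. First, the paper's ``degree-$\mathcal{L}$ sum-of-squares'' requires a positive operator acting through the multiplication map $M_{\mathcal{L}}$, so the Hermitian squares must be squares of sections lying in the image of $M_{\mathcal{L}}$; Quillen/Catlin--D'Angelo-type theorems produce squares of arbitrary holomorphic sections of $\mathcal{L}_{0}\otimes\mathcal{L}^{\otimes k}$, which need not lie in that image when the relevant multiplication maps are not surjective. Second, those theorems (and their compact-manifold generalizations, which is what you actually need, since your strictly positive $g$ lives on $M$ and its natural extension to the ambient projective space need not stay positive off $\iota(M)$) are proved for the natural $L^{2}$ structures attached to the K\"{a}hler volume and the given metrics, whereas here the inner products on $\mathcal{H}_{\mathcal{L}}$ are prescribed abstractly so that $M^{*}_{\mathcal{L}_{1},\mathcal{L}_{2}}$ is an isometry. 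The same mismatch undercuts your Berezin-transform alternative: uniform convergence of the Berezin transform is a Bergman-kernel statement tied to the K\"{a}hler volume measure and induced $L^{2}$ products, and the paper itself notes (in the discussion of the HSoS hierarchy on projective varieties) that its quantization rules coincide with that setting only in the homogeneous case. So as written, both of your proposed inputs for the second statement require additional justification to land in the paper's definitions, and the paper's explicit kernel argument is structured precisely to avoid relying on them; your reduction of $f$ to a strictly positive real degree-$\mathcal{L}_{0}$ approximant is fine, but it is the positivity-implies-SoS step that still has to be supplied.
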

\begin{proof}
Set $\mathcal{L} = \bigotimes _{i = 1} ^{m} \mathcal{L}_{i}$. By assumption, $\iota_{\mathcal{L}}$ defines a closed embedding into $\mathbb{P}(\mathcal{H}_{\mathcal{L}})$. In particular, this shows that functions of degree $\mathcal{L}$ separate points in $M$. By the previous propositions, the set of functions of degree $\mathcal{L}^{k}$ for $k \geq 0$ is an algebra closed under multiplication and conjugation. Thus by the Stone-Weierstrass theorem 
such functions are dense in $C(M, \mathbb{C})$.

Suppose that $f : M \to \mathbb{R}_{\geq 0}$ is continuous and nonnegative, and let $B = \sup _{x \in M} f(x)$. Let $\iota : M \to \mathbb{P}(\mathcal{H}_{\mathcal{L}})$ be the Kodaira map. By the Tietze extension theorem 
there is a continuous extension $g : \mathbb{P}(\mathcal{H}_{\mathcal{L}}) \to [0, B]$ such that $f = g \circ \iota$. Define
\[ l_{z}(x) \coloneqq 1 - d_{FS}(x, z) = |\langle z, x \rangle|^{2} \]
which is a sum-of-squares on $\mathbb{P}(\mathcal{H}_{\mathcal{L}})$. Let $C_{k}$ be such that $C_{k}\int _{\mathbb{P}(\mathcal{H}_{\mathcal{L}})} l_{z}(x)^{k} = 1$ and let $h_{k} : \mathbb{P}(\mathcal{H}_{\mathcal{L}}) \to \mathbb{R}$ be defined by
\[ h_{k}(z) \coloneqq C_{k}\int _{\mathbb{P}(\mathcal{H}_{\mathcal{L}})} g(x)l_{z}(x)^{k}\,d\mu_{FS}(x) \]
which is again a sum-of-squares. Let $\delta \in [0, \epsilon]$ be the modulus of uniform continuity for $g$ with respect to $d_{FS}$ (since $M$ is compact). Then
\begin{align*}
|h_{k}(z) - g(z)| &= C_{k}\left| \int _{\mathbb{P}(\mathcal{H}_{\mathcal{L}})} (g(x) - g(z))l_{z}(x)^{k}\,d\mu_{FS}(x) \right| \\
&\leq \epsilon C_{k} \int _{\{|\langle x, z \rangle| \geq \delta\}} l_{z}(x)^{k}\,d\mu_{FS}(x) + BC_{k} \int _{\{|\langle x, z \rangle| \leq \delta\}} l_{z}(x)^{k}\,d\mu_{FS}(x) \\
&\leq \epsilon + BC_{k}\delta^{k} \\
&\leq \epsilon + BC_{k}\epsilon^{k}.
\end{align*}
We know that $C_{k} = O(k^{\dim \mathcal{H}_{\mathcal{L}}})$ for $k$ large, so the right-hand side tends uniformly to 0 as $k \to \infty$. Thus $g$ is approximated uniformly by sum-of-squares polynomials. By Proposition~\ref{prop:poly-proj} the restriction of a sum-of-squares polynomial to $M$ is again a sum-of-squares polynomial, so this finishes the proof.
\end{proof}


\subsection{The generalized quantum de Finetti theorem}
In this section, we now state and prove a general version of the quantum de Finetti theorem. The statement of the theorem depends on the modified definition of coherent states, with respect to the inner products which make the multiplication maps into isometries.

\begin{theorem}\label{thm:main-thm}
Let $\rho_{12}$ be a mixed state on $\mathcal{H}_{\mathcal{L}_{1} \otimes \mathcal{L}_{2}}$ and $\rho_{1}$ be the reduced density matrix of $M_{\mathcal{L}_{1},\mathcal{L}_{2}}^{*}\rho_{12}M_{\mathcal{L}_{1},\mathcal{L}_{2}}$ over $\mathcal{H}_{1}$. Then $\rho_{1}$ is $\epsilon$-close in trace distance to a mixture over coherent states on $\mathcal{H}_{\mathcal{L}_{1}}$, where $r = d\mu_{\mathcal{L}_{2}}/d\mu_{\mathcal{L}_{1} \otimes \mathcal{L}_{2}}$ is the Radon-Nikodym derivative, $R = \sup _{x \in M} (1 - r(x))$ and $\epsilon = 2R$.
\end{theorem}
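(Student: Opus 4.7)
The plan is to take as candidate mixture the Husimi Q-function of $\rho_{12}$ on the tensor-product Hilbert space, P-quantized against coherent states on the subsystem:
\[
\sigma_1 := \int_M Q_{\rho_{12}}(x)\, c_{\mathcal{L}_1, x}\, d\mu_{\mathcal{L}_1\otimes\mathcal{L}_2}(x),
\]
where $Q_{\rho_{12}}(x) = \operatorname{Tr}(c_{\mathcal{L}_1\otimes\mathcal{L}_2, x}\,\rho_{12})$. This is pointwise nonnegative and by Proposition~\ref{prop:resolution-identity} integrates to $\operatorname{Tr}\rho_{12} = 1$, so it is a probability mixture of coherent states on $\mathcal{H}_{\mathcal{L}_1}$.

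The structural heart of the argument is the exact tensor-product factorization $M_{\mathcal{L}_1,\mathcal{L}_2}^* c_{\mathcal{L}_1\otimes\mathcal{L}_2, x} M_{\mathcal{L}_1,\mathcal{L}_2} = c_{\mathcal{L}_1, x} \otimes c_{\mathcal{L}_2, x}$ (property~(2)), which holds exactly because $M^*$ is an isometry; partial-tracing yields $c_{\mathcal{L}_1, x} = \operatorname{Tr}_{\mathcal{H}_{\mathcal{L}_2}} M^* c_{\mathcal{L}_1\otimes\mathcal{L}_2, x} M$. Substituting into the definition of $\sigma_1$ gives $\sigma_1 = \operatorname{Tr}_{\mathcal{H}_{\mathcal{L}_2}} M^* B(\rho_{12}) M$, where $B(\rho_{12}) := \int Q_{\rho_{12}}(x)\, c_{\mathcal{L}_1\otimes\mathcal{L}_2, x}\, d\mu_{\mathcal{L}_1\otimes\mathcal{L}_2}(x)$ is the Berezin transform on the large Hilbert space. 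Since $\rho_1 = \operatorname{Tr}_{\mathcal{H}_{\mathcal{L}_2}} M^*\rho_{12} M$, this reduces the theorem to estimating the trace norm of $\operatorname{Tr}_{\mathcal{H}_{\mathcal{L}_2}} M^*(\rho_{12} - B(\rho_{12}))M$.

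To bring in the Radon--Nikodym factor I would unfold this partial trace using the coherent-state resolution of identity on $\mathcal{H}_{\mathcal{L}_2}$. Setting $V_y := M_{\mathcal{L}_1,\mathcal{L}_2}(\operatorname{id}_{\mathcal{H}_{\mathcal{L}_1}} \otimes \ket{c_{\mathcal{L}_2, y}})$ and using the consequence of property~(2) that $V_y^* c_{\mathcal{L}_1\otimes\mathcal{L}_2, x} V_y = |\braket{c_{\mathcal{L}_2, y}|c_{\mathcal{L}_2, x}}|^2\, c_{\mathcal{L}_1, x}$, the partial trace of $\rho_{12}$ (with measure $\mu_{\mathcal{L}_2}$) produces a double integral in which the inner $y$-integral equals $\int |\braket{c_{\mathcal{L}_2, y}|c_{\mathcal{L}_2, x}}|^2\, d\mu_{\mathcal{L}_2}(y) = 1$, whereas the analogous inner integral coming from $B(\rho_{12})$ differs by a factor of $r(y)$ when rewritten against $\mu_{\mathcal{L}_2}$. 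Collecting the discrepancy produces the pointwise multiplier $(1 - r(y))$ in the integrand for $\rho_1 - \sigma_1$. Bounding this pointwise by $R$ and using $\operatorname{Tr}\rho_1 = \operatorname{Tr}\sigma_1 = 1$ together with a standard positive/negative-part split of the resulting signed integrand then yields $\|\rho_1 - \sigma_1\|_1 \leq 2R$.

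The main obstacle is producing the factor $(1-r)$ cleanly rather than via a lossy intermediate bound. A direct application of $\|\operatorname{Tr}_2 M^* X M\|_1 \leq \|X\|_1$ to $\rho_1 - \sigma_1 = \operatorname{Tr}_{\mathcal{H}_{\mathcal{L}_2}} M^*(\rho_{12} - B(\rho_{12}))M$ would only yield $\|\rho_1 - \sigma_1\|_1 \leq \|\rho_{12} - B(\rho_{12})\|_1$, which can be strictly weaker than $2R$: for example, in the symmetric-subspace case with $k=l=1$, $d=2$ one may verify $\|\rho_{12} - B(\rho_{12})\|_1 = 4/5$ while $2R = 2/3$. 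The sharper constant must therefore come from the cancellations produced by the two incompatible resolutions of identity on $\mathcal{H}_{\mathcal{L}_2}$ with measures $\mu_{\mathcal{L}_2}$ and $\mu_{\mathcal{L}_1\otimes\mathcal{L}_2}$, which is the technical heart of the bound.
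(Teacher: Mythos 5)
Your setup coincides with the paper's: the candidate mixture is the same (Husimi measurement of $\rho_{12}$ re-prepared as coherent states on $\mathcal{H}_{\mathcal{L}_1}$), the exact tensorization $M_{\mathcal{L}_1,\mathcal{L}_2}^*c_{\mathcal{L}_1\otimes\mathcal{L}_2,x}M_{\mathcal{L}_1,\mathcal{L}_2}=c_{\mathcal{L}_1,x}\otimes c_{\mathcal{L}_2,x}$ plays the same structural role, and your observation that the crude contraction bound $\|\rho_1-\sigma_1\|_1\leq\|\rho_{12}-B(\rho_{12})\|_1$ is strictly lossy (the $4/5$ versus $2/3$ computation is correct) shows you understand where the difficulty sits. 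But the step that actually yields the theorem is missing. Your identity gives only $\rho_1-\sigma_1=\int_M V_y^*(\rho_{12}-B(\rho_{12}))V_y\,d\mu_{\mathcal{L}_2}(y)$, and this integrand carries \emph{no} pointwise factor of $(1-r(y))$: for fixed $y$ the operator $V_y^*(\rho_{12}-B(\rho_{12}))V_y$ is not small in any sense. Equivalently, writing $w(x)=V_x^*\rho_{12}V_x$ and using that $c_{\mathcal{L}_1,x}$ is a rank-one projection, one has $\rho_1-\sigma_1=\int_M\bigl(r(x)\,w(x)-c_{\mathcal{L}_1,x}w(x)c_{\mathcal{L}_1,x}\bigr)\,d\mu_{\mathcal{L}_1\otimes\mathcal{L}_2}(x)$; only the piece proportional to $(1-r)\,c_{\mathcal{L}_1,x}w(x)c_{\mathcal{L}_1,x}$ is pointwise controlled by $R$, while the remaining piece $\int_M\bigl(w(x)-c_{\mathcal{L}_1,x}w(x)c_{\mathcal{L}_1,x}\bigr)\,d\mu_{\mathcal{L}_2}(x)$ has no pointwise smallness at all. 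That piece is exactly the ``technical heart'' your last paragraph defers to unspecified cancellations, so the proposal asserts the conclusion rather than proving it.

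For comparison, the paper handles that piece with the operator identity $A-BAB=(A-BA)+(A-AB)-(1-B)A(1-B)$ applied with $A=w(x)$, $B=c_{\mathcal{L}_1,x}$. The cross terms acquire the factor $(1-r)$ only after a \emph{second} use of the resolutions of identity: since $\int_M (c_{\mathcal{L}_1,x}\otimes c_{\mathcal{L}_2,x})\,d\mu_{\mathcal{L}_1\otimes\mathcal{L}_2}(x)=M_{\mathcal{L}_1,\mathcal{L}_2}^*M_{\mathcal{L}_1,\mathcal{L}_2}$ acts as the identity on $M_{\mathcal{L}_1,\mathcal{L}_2}^*\rho_{12}M_{\mathcal{L}_1,\mathcal{L}_2}$, one gets $\int_M c_{\mathcal{L}_1,x}w(x)\,d\mu_{\mathcal{L}_1\otimes\mathcal{L}_2}(x)=\rho_1$, hence $\int_M\bigl(w(x)-c_{\mathcal{L}_1,x}w(x)\bigr)\,d\mu_{\mathcal{L}_2}(x)=\int_M c_{\mathcal{L}_1,x}w(x)\,(1-r(x))\,d\mu_{\mathcal{L}_1\otimes\mathcal{L}_2}(x)$, and similarly for the adjoint term; the leftover term $\int_M(1-c_{\mathcal{L}_1,x})w(x)(1-c_{\mathcal{L}_1,x})\,d\mu_{\mathcal{L}_2}(x)$ is positive semidefinite and its trace equals that of a cross term, so it is controlled by the same quantity. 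Each of the four resulting pieces is bounded by $R/2$ in trace distance, giving $2R$. Your proposal contains none of this decomposition, and its final accounting is also inconsistent: if the difference really were a single integral with a pointwise $(1-r)$ multiplier against an integrand of total trace norm one, the triangle inequality would give $R$, not $2R$, which is a sign that the source of the factor $2$ (four separate error terms) has not been traced through.
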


The ``tensoriztion'' property of the coherent states, which is needed to prove Theorem~\ref{thm:main-thm}, is proven in the following Proposition~\ref{prop:tensor-coh}.

\begin{proposition}\label{prop:tensor-coh}
The tensor product of two coherent states is another coherent state centered at the same point, in the sense that
\[ M_{\mathcal{L}_{1}, \mathcal{L}_{2}}e_{\mathcal{L}_{1} \otimes \mathcal{L}_{2}, x}M_{\mathcal{L}_{1}, \mathcal{L}_{2}}^{*} = e_{\mathcal{L}_{1},x} \otimes e_{\mathcal{L}_{2},x}. \]
\end{proposition}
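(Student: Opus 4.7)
The plan is to prove the claim by unwinding the definition of $e_{\mathcal{L},x}$ and exploiting two compatibility properties of the multiplication map with respect to fiberwise evaluation and fiberwise Hermitian metric. (As a side remark, one should really read the claim as $M_{\mathcal{L}_1,\mathcal{L}_2}^{*}\, e_{\mathcal{L}_1\otimes \mathcal{L}_2,x}\, M_{\mathcal{L}_1,\mathcal{L}_2} = e_{\mathcal{L}_1,x}\otimes e_{\mathcal{L}_2,x}$, since $M_{\mathcal{L}_1,\mathcal{L}_2}$ goes from $\mathcal{H}_{\mathcal{L}_1}\otimes \mathcal{H}_{\mathcal{L}_2}$ to $\mathcal{H}_{\mathcal{L}_1\otimes \mathcal{L}_2}$; I will prove this version.)

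First I would reinterpret $e_{\mathcal{L},x}$ as follows. Let $\phi_{\mathcal{L},x}\colon \mathcal{H}_{\mathcal{L}}\to \mathcal{L}_x$ be the ``evaluation at $x$'' map $s\mapsto s(x)$, and let $\phi_{\mathcal{L},x}^{*}\colon \mathcal{L}_x \to \mathcal{H}_{\mathcal{L}}$ be its adjoint with respect to $\langle\cdot,\cdot\rangle_{\mathcal{L}}$ on the source and $h_{\mathcal{L},x}$ on the target. From the definition $Q_\rho(x) = h_{\mathcal{L}}(s_\rho(x))$ one verifies that $e_{\mathcal{L},x}$, viewed as an operator on $\mathcal{H}_{\mathcal{L}}$ via the inner product, equals the rank-one operator $\phi_{\mathcal{L},x}^{*}\phi_{\mathcal{L},x}$; indeed both sides reproduce $\langle s, s\rangle_{\mathcal{L}} \mapsto h_{\mathcal{L}}(s(x),s(x))$ as a positive quadratic form.

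Next I would use two compatibility properties. The first is that since $M_{\mathcal{L}_1,\mathcal{L}_2}$ is defined by pointwise tensor product of sections, the fiberwise evaluation factors as
\[
\phi_{\mathcal{L}_1\otimes \mathcal{L}_2,x}\circ M_{\mathcal{L}_1,\mathcal{L}_2} = \phi_{\mathcal{L}_1,x}\otimes \phi_{\mathcal{L}_2,x},
\]
because $M_{\mathcal{L}_1,\mathcal{L}_2}(s_1\otimes s_2)(x) = s_1(x)\otimes s_2(x)$. The second is that the Hermitian metric on the tensor product bundle factors as $h_{\mathcal{L}_1\otimes \mathcal{L}_2,x} = h_{\mathcal{L}_1,x}\otimes h_{\mathcal{L}_2,x}$ (by construction), which implies that taking adjoints of the factored evaluation gives
\[
M_{\mathcal{L}_1,\mathcal{L}_2}^{*}\circ \phi_{\mathcal{L}_1\otimes \mathcal{L}_2,x}^{*} = \phi_{\mathcal{L}_1,x}^{*}\otimes \phi_{\mathcal{L}_2,x}^{*}.
\]
Combining these,
\[
M_{\mathcal{L}_1,\mathcal{L}_2}^{*}\, e_{\mathcal{L}_1\otimes \mathcal{L}_2,x}\, M_{\mathcal{L}_1,\mathcal{L}_2}
= \bigl(\phi_{\mathcal{L}_1,x}^{*}\otimes \phi_{\mathcal{L}_2,x}^{*}\bigr)\bigl(\phi_{\mathcal{L}_1,x}\otimes \phi_{\mathcal{L}_2,x}\bigr) = e_{\mathcal{L}_1,x}\otimes e_{\mathcal{L}_2,x}.
\]

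An alternative route I would keep in reserve is a direct verification by pairing: it suffices to show that $\langle \sigma,\, M^{*}e_{\mathcal{L}_1\otimes \mathcal{L}_2,x} M\rangle = \langle \sigma, e_{\mathcal{L}_1,x}\otimes e_{\mathcal{L}_2,x}\rangle$ for $\sigma = \ket{s_1\otimes s_2}\bra{s_1\otimes s_2}$ spanning a dense set, which reduces to the identity $Q_{M(\ket{s_1 s_2}\bra{s_1 s_2})M^{*}}(x) = Q_{\ket{s_1}\bra{s_1}}(x)\,Q_{\ket{s_2}\bra{s_2}}(x)$; this in turn is immediate from $M(s_1\otimes s_2)(x) = s_1(x)\otimes s_2(x)$ and the tensor-product structure of $h$. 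I expect the only real bookkeeping obstacle to be carefully tracking which inner products are used to form the various adjoints (the fiber metrics $h_{\mathcal{L}_i,x}$ on the target of $\phi$ versus the global $\langle\cdot,\cdot\rangle_{\mathcal{L}_i}$ on the source), since the factorization of adjoints relies on both inner products factoring simultaneously across the tensor product.
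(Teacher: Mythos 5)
Your proposal is correct, and your primary route differs in form from the paper's. The paper proves the identity by computing the expectation values $\braket{s_{1}\otimes s_{2}|\cdot|s_{1}\otimes s_{2}}$ of both sides on product vectors, observing that each equals $h_{\mathcal{L}_{1}}(s_{1}(x),s_{1}(x))\,h_{\mathcal{L}_{2}}(s_{2}(x),s_{2}(x))$, and then appealing to the fact that product vectors span $\mathcal{H}_{\mathcal{L}_{1}}\otimes\mathcal{H}_{\mathcal{L}_{2}}$ --- essentially your ``reserve'' argument. Your main argument instead factors $e_{\mathcal{L},x}=\phi_{\mathcal{L},x}^{*}\phi_{\mathcal{L},x}$ through the evaluation map and derives the claim operatorially from $\phi_{\mathcal{L}_{1}\otimes\mathcal{L}_{2},x}\circ M_{\mathcal{L}_{1},\mathcal{L}_{2}}=\phi_{\mathcal{L}_{1},x}\otimes\phi_{\mathcal{L}_{2},x}$ together with the tensor factorization of $h_{\mathcal{L}_{1}\otimes\mathcal{L}_{2},x}$ and of the Hilbert-space inner product. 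Both rest on the same two facts (fiberwise tensoring of sections and of metrics), but your version gives operator equality directly, without the polarization-in-each-factor step that the paper's spanning argument implicitly needs (agreement of $\braket{v|A|v}$ on product vectors alone requires polarizing separately in $s_{1}$ and $s_{2}$ to conclude $A=B$), and it makes explicit exactly which inner products the adjoints are taken against. You are also right about the direction issue: as typed, $M_{\mathcal{L}_{1},\mathcal{L}_{2}}e_{\mathcal{L}_{1}\otimes\mathcal{L}_{2},x}M_{\mathcal{L}_{1},\mathcal{L}_{2}}^{*}$ is type-inconsistent, and the paper's own computation (which rewrites the sandwich as $\braket{M_{\mathcal{L}_{1},\mathcal{L}_{2}}(s_{1}\otimes s_{2})|e_{\mathcal{L}_{1}\otimes\mathcal{L}_{2},x}|M_{\mathcal{L}_{1},\mathcal{L}_{2}}(s_{1}\otimes s_{2})}$) in fact establishes the version you prove, $M_{\mathcal{L}_{1},\mathcal{L}_{2}}^{*}e_{\mathcal{L}_{1}\otimes\mathcal{L}_{2},x}M_{\mathcal{L}_{1},\mathcal{L}_{2}}=e_{\mathcal{L}_{1},x}\otimes e_{\mathcal{L}_{2},x}$.
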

\begin{proof}
Let $s_{1} \in \mathcal{H}_{\mathcal{L}_{1}}$ and $s_{2} \in \mathcal{H}_{\mathcal{L}_{2}}$. We have
\begin{align*}
\braket{s_{1} \otimes s_{2} | e_{\mathcal{L}_{1},x} \otimes e_{\mathcal{L}_{2},x} | s_{1} \otimes s_{2}}_{\mathcal{H}_{\mathcal{L}_{1}} \otimes \mathcal{H}_{\mathcal{L}_{2}}} &= \braket{s_{1} | e_{\mathcal{L}_{1},x} | s_{1}}_{\mathcal{L}_{1}}\braket{s_{2} | e_{\mathcal{L}_{2},x} | s_{2}}_{\mathcal{L}_{2}} \\
&= h_{\mathcal{L}_{1}}(s_{1}(x), s_{1}(x))h_{\mathcal{L}_{2}}(s_{1}(x), s_{1}(x))
\end{align*}
and
\begin{align*}
\braket{s_{1} \otimes s_{2}|M_{\mathcal{L}_{1}, \mathcal{L}_{2}}e_{\mathcal{L}_{1} \otimes \mathcal{L}_{2}, x}M_{\mathcal{L}_{1}, \mathcal{L}_{2}}^{*}|s_{1} \otimes s_{2}}
&= \braket{M_{\mathcal{L}_{1},\mathcal{L}_{2}}(s_{1} \otimes s_{2})|e_{\mathcal{L}_{1} \otimes \mathcal{L}_{2},x}|M_{\mathcal{L}_{1},\mathcal{L}_{2}}(s_{1} \otimes s_{2})} \\
&= h_{\mathcal{L}_{1} \otimes \mathcal{L}_{2}}((s_{1}\otimes s_{2})(x), (s_{1} \otimes s_{2})(x)) \\
&= h_{\mathcal{L}_{1} \otimes \mathcal{L}_{2}}(s_{1}(x) \otimes s_{2}(x), s_{1}(x) \otimes s_{2}(x)) \\
&= h_{\mathcal{L}_{1}}(s_{1}(x), s_{1}(x))h_{\mathcal{L}_{2}}(s_{1}(x), s_{1}(x)).
\end{align*}
Since set of all $s_{1} \otimes s_{2}$ spans $\mathcal{H}_{\mathcal{L}_{1}} \otimes \mathcal{H}_{\mathcal{L}_{2}}$, this is sufficient for the equality.
\end{proof}

\begin{proof}[Proof of Theorem~\ref{thm:main-thm}]
We take the classical probability distribution $P$ defining the mixture to be the Husimi $Q$-function of $\rho$. From Section~\ref{sec:qfn-coh} and Proposition~\ref{prop:tensor-coh}, this distribution is characterized by the density
\[ P(x)\,d\mu_{\mathcal{L}_{1} \otimes \mathcal{L}_{2}}(x) = \operatorname{Tr}(e_{\mathcal{L}_{1} \otimes \mathcal{L}_{2},x}\rho_{12})\, d\mu_{\mathcal{L}_{1} \otimes \mathcal{L}_{2}}(x). \]
Define
\[ \sigma \coloneqq \int_{M} e_{\mathcal{L}_{1},x}P(x)\,d\mu_{\mathcal{L}_{1} \otimes \mathcal{L}_{2}}(x) \]
to be the corresponding mixture of coherent states of $\mathcal{H}_{\mathcal{L}_{1}}$. We will show that $\sigma$ is close in trace distance to $\rho_{1}$.

To this end, we first rewrite $\rho_{1}$ as a mixture with respect to $\mu_{\mathcal{L}_{2}}$, giving
\begin{align*}
\rho_{1} &= \operatorname{Tr}_{2} ((\operatorname{id}_{\mathcal{H}_{\mathcal{L}_{1}}} \otimes \operatorname{id}_{\mathcal{H}_{\mathcal{L}_{2}}}) M_{\mathcal{L}_{1},\mathcal{L}_{2}}^{*}\rho_{12}M_{\mathcal{L}_{1},\mathcal{L}_{2}}) \\
&= \operatorname{Tr}_{2} \left(\left(\operatorname{id}_{\mathcal{H}_{\mathcal{L}_{1}}} \otimes \int _{M} e_{\mathcal{L}_{2},x}\,d\mu_{\mathcal{L}_{2}}(x)\right) \rho\right) \\
&= \int _{M} \operatorname{Tr}_{2} ((\operatorname{id}_{\mathcal{H}_{\mathcal{L}_{1}}} \otimes e_{\mathcal{L}_{2},x}) M_{\mathcal{L}_{1},\mathcal{L}_{2}}^{*}\rho_{12}M_{\mathcal{L}_{1},\mathcal{L}_{2}})\,d\mu_{\mathcal{L}_{2}}(x) \numberthis \label{eq:mainthm-povm1} \\
&= \int _{M} w(x)\,d\mu_{\mathcal{L}_{2}}(x) 
\end{align*}
where we have defined
\[ w(x) \coloneqq \operatorname{Tr}_{2} ((\operatorname{id}_{\mathcal{H}_{\mathcal{L}_{1}}} \otimes e_{\mathcal{L}_{2},x})) M_{\mathcal{L}_{1},\mathcal{L}_{2}}^{*}\rho_{12}M_{\mathcal{L}_{1},\mathcal{L}_{2}}) \in \mathcal{D}(\mathcal{H}_{\mathcal{L}_{1}}). \]

To bound the trace distance between $\rho_{1}$ and $\sigma$, we have
\begin{align*}
|\rho_{1} - \sigma| &= \left| \int _{M} w(x)\,d\mu_{\mathcal{L}_{2}}(x) - \int _{M} e_{\mathcal{L}_{1},x}P(x)\,d\mu_{\mathcal{L}_{1} \otimes \mathcal{L}_{2}}(x) \right| \\
&\leq \left| \int _{M} w(x)\,d\mu_{\mathcal{L}_{2}}(x) - \int _{M} e_{\mathcal{L}_{1},x}P(x)r(x)\,d\mu_{\mathcal{L}_{1} \otimes \mathcal{L}_{2}}(x)\right| \\
&\phantom{{}=} + \left|\int _{M} e_{\mathcal{L}_{1},x}P(x)(1 - r(x))\,d\mu_{\mathcal{L}_{1} \otimes \mathcal{L}_{2}}(x) \right|. \numberthis \label{eq:mainthm-povm2}
\end{align*}
For the second term, we have
\begin{align*}
\left|\int _{M} e_{\mathcal{L}_{1},x}P(x)(1 - r(x))\,d\mu_{\mathcal{L}_{1} \otimes \mathcal{L}_{2}}(x) \right| &\leq R \left|\int _{M} e_{\mathcal{L}_{1},x}P(x)\,d\mu_{\mathcal{L}_{1} \otimes \mathcal{L}_{2}}(x) \right| \\
&\leq R \int _{M} |e_{\mathcal{L}_{1},x}|P(x)\,d\mu_{\mathcal{L}_{1} \otimes \mathcal{L}_{2}}(x) \\
&= R/2
\end{align*}
since each coherent state has rank 1. Thus it remains to bound the first term, for which we have
\begin{align*}
&\phantom{{}={}} \left| \int _{M} w(x)\,d\mu_{\mathcal{L}_{2}}(x) - \int _{M} e_{\mathcal{L}_{1},x}P(x)r(x)\,d\mu_{\mathcal{L}_{1} \otimes \mathcal{L}_{2}}(x)\right| \\
&= \left| \int _{M} (w(x) - e_{\mathcal{L}_{1},x}P(x))\,d\mu_{\mathcal{L}_{2}}(x) \right| \\
&= \left| \int _{M} (w(x) - e_{\mathcal{L}_{1},x}\operatorname{Tr}(e_{\mathcal{L}_{1} \otimes \mathcal{L}_{2},x}\rho_{12})\,d\mu_{\mathcal{L}_{2}}(x) \right| \\
&= \left| \int _{M} (w(x) - e_{\mathcal{L}_{1},x}\operatorname{Tr}((e_{\mathcal{L}_{1},x} \otimes e_{\mathcal{L}_{2},x})M^{*}_{\mathcal{L}_{1} \otimes \mathcal{L}_{2}}\rho_{12}M_{\mathcal{L}_{1} \otimes \mathcal{L}_{2}})\,d\mu_{\mathcal{L}_{2}}(x) \right| \\
&= \left| \int _{M} (w(x) - e_{\mathcal{L}_{1},x}\operatorname{Tr}(e_{\mathcal{L}_{1},x} w(x)))\,d\mu_{\mathcal{L}_{2}}(x) \right| \\
&= \left| \int _{M} (w(x) - e_{\mathcal{L}_{1},x}w(x)e_{\mathcal{L}_{1},x})\,d\mu_{\mathcal{L}_{2}}(x) \right|.
\end{align*}

Using the identity
\[ A - BAB = (A - BA) + (A - AB) - (1 - B)A(1 - B) \] 
for square matrices $A$ and $B$ (also used in the CKMR proof), we have
\[ \left| \int _{M} (w(x) - e_{\mathcal{L}_{1},x}w(x)e_{\mathcal{L}_{1},x})\,d\mu_{\mathcal{L}_{2}}(x) \right| \leq |\alpha| + |\beta| + |\gamma| \]
where
\begin{align*}
\alpha &\coloneqq \int _{M} (w(x) - e_{\mathcal{L}_{1},x}w(x))\,d\mu_{\mathcal{L}_{2}}(x) \\
\beta &\coloneqq \int _{M} (w(x) - w(x)e_{\mathcal{L}_{1},x})\,d\mu_{\mathcal{L}_{2}}(x) \\
\gamma &\coloneqq \int _{M} (\operatorname{id} - e_{\mathcal{L}_{1},x})w(x)(\operatorname{id} - e_{\mathcal{L}_{1},x})\,d\mu_{\mathcal{L}_{2}}(x).
\end{align*}
For convenience, define $\rho' \coloneqq M_{\mathcal{L}_{1},\mathcal{L}_{2}}^{*}\rho_{12}M_{\mathcal{L}_{1},\mathcal{L}_{2}}$. We then have
\begin{align*}
&\phantom{{}={}} |\alpha| \\
&= \left| \rho_{1} - \int _{M} e_{\mathcal{L}_{1},x}\operatorname{Tr}_{2} ((\operatorname{id}_{\mathcal{H}_{\mathcal{L}_{1}}} \otimes e_{\mathcal{L}_{2},x}) \rho') \, d\mu_{\mathcal{L}_{2}}(x) \right| \\
&= \left| \operatorname{Tr}_{2}(\rho') - \int _{M} \operatorname{Tr}_{2} ((e_{\mathcal{L}_{1},x} \otimes e_{\mathcal{L}_{2},x})\rho') \, d\mu_{\mathcal{L}_{2}}(x) \right| \\
&= \left| \int _{M} \operatorname{Tr}_{2}((e_{\mathcal{L}_{1},x} \otimes e_{\mathcal{L}_{2},x})\rho')\,d\mu_{\mathcal{L}_{1} \otimes \mathcal{L}_{2}}(x) - \int _{M} \operatorname{Tr}_{2} ((e_{\mathcal{L}_{1},x} \otimes e_{\mathcal{L}_{2},x}) \rho')) \, d\mu_{\mathcal{L}_{2}}(x) \right| \numberthis \label{eq:mainthm-povm3} \\
&= \left| \int _{M} \operatorname{Tr}_{2}((e_{\mathcal{L}_{1},x} \otimes e_{\mathcal{L}_{2},x})\rho') (1 - r(x))\,d\mu_{\mathcal{L}_{1} \otimes \mathcal{L}_{2}}(x) \right| \\
&\leq R \int _{M} \operatorname{Tr}_{2}((e_{\mathcal{L}_{1},x} \otimes e_{\mathcal{L}_{2},x})\rho') \,d\mu_{\mathcal{L}_{1} \otimes \mathcal{L}_{2}}(x) \\
&= R/2 |\rho_{1}| \\
&= R/2.
\end{align*}
Similarly, we can show
\[ w(x) = \operatorname{Tr}_{2} ( M_{\mathcal{L}_{1},\mathcal{L}_{2}}^{*}\rho_{12}M_{\mathcal{L}_{1},\mathcal{L}_{2}}(\operatorname{id}_{\mathcal{H}_{\mathcal{L}_{1}}} \otimes e_{\mathcal{L}_{2},x})) \]
and then by a similar argument show that $|\beta| \leq R/2$. For any $\xi \succeq 0$ and projector $P$, we have $P^{*}\xi P \succeq 0$, so
\[ |P^{*} \xi P| = \frac{1}{2}\operatorname{Tr}(P \xi P) = \frac{1}{2} \operatorname{Tr} (P \xi). \]
Thus
\begin{align*}
|\gamma| &\leq \int _{M} |(\operatorname{id} - e_{\mathcal{L}_{1},x})w(x)(\operatorname{id} - e_{\mathcal{L}_{1},x})|\,d\mu_{\mathcal{L}_{2}}(x) \\
&= \frac{1}{2} \operatorname{Tr} \left( \int _{M} (\operatorname{id} - e_{\mathcal{L}_{1},x})w(x)(\operatorname{id} - e_{\mathcal{L}_{1},x})\,d\mu_{\mathcal{L}_{2}}(x) \right) \\
&= \frac{1}{2} \operatorname{Tr}(\alpha) \\
&\leq |\alpha| \\
&\leq R/2. \numberthis \label{eq:mainthm-povm4}
\end{align*}
Combining everything, this gives $|\rho_{1} - \sigma| \leq 2R$.
\end{proof}

\subsection{Explicit formulas for homogeneous spaces}
Our statement of the generalized quantum de Finetti theorem (Theorem~\ref{thm:main-thm}) requires bounds the error in terms of two measures which make the two families of coherent states into POVMs. Later we will show that such measures always exist approximately. However, the resulting measures are not necessarily explicit and a simple formula seems unlikely to exist in general. In the case when the $M$ and $\mathcal{L}_{1}, \dots, \mathcal{L}_{m}$ are suitably symmetric, we will show that the measure can be chosen to be a fixed scalar multiple of the invariant measure on $M$, which is unique. The Radon-Nikodym of two such measures becomes the ratio of the two scalars, which is easier to calculate. While the symmetry condition may seem restrictive, many of the earlier quantum de Finetti theorems can be deduced as special cases when this symmetry condition holds. The details for several such de Finetti theorems are covered in Section~\ref{sec:applications}.

The precise symmetry condition we use is
\begin{definition}
The pair $(M, \{\mathcal{L}_{1}, \dots, \mathcal{L}_{m}\})$ is \textbf{homogeneous} if there is a compact group $G$ acting transitively on $M$ by K\"{a}hler automorphisms and acting on each $\mathcal{L}_{i}$ by bundle automorphisms, such that the hermitian metric on $\mathcal{L}_{i}$ is invariant under $G$ and the inner product on $\mathcal{H}_{\mathcal{L}}$ is also invariant under $G$.
\end{definition}

\begin{proposition}\label{prop:coh-equivariant}
Suppose that $M$ and $\mathcal{L}$ are homogeneous with respect to $G$. Then $c_{\mathcal{L},g\cdot z} = g \cdot c_{\mathcal{L}, z}$ for all $z \in M$.
\end{proposition}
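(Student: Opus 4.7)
The plan is to unwind the three ingredients that make up $c_{\mathcal{L}, x}$ (namely $Q_\rho$, the identification $e_{\mathcal{L}, x}$, and the normalization $T_{\mathcal{L}}$) and check that each one transforms correctly under $G$. Write the action of $g \in G$ on $\mathcal{H}_{\mathcal{L}}$ as $(g \cdot s)(x) = g \cdot s(g^{-1} x)$, where on the right $g$ denotes the fiberwise action on $\mathcal{L}$; by assumption this is unitary with respect to $\langle \cdot, \cdot \rangle_{\mathcal{L}}$, and the induced action on $\operatorname{End}(\mathcal{H}_{\mathcal{L}})$ is conjugation $\rho \mapsto g \rho g^{-1}$, which is what ``$g \cdot c_{\mathcal{L}, z}$'' means in the statement.

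First I would show that the Husimi $Q$-function is equivariant in the sense $Q_{g \rho g^{-1}}(x) = Q_{\rho}(g^{-1} x)$. For a pure state this is immediate from the definition together with $G$-invariance of $h_{\mathcal{L}}$:
\[
h_{\mathcal{L}}\bigl((g \cdot s)(x), (g \cdot s)(x)\bigr) = h_{\mathcal{L}}\bigl(g \cdot s(g^{-1} x), g \cdot s(g^{-1} x)\bigr) = h_{\mathcal{L}}\bigl(s(g^{-1} x), s(g^{-1} x)\bigr),
\]
and the general case follows by linearity in $\rho$. Translating this statement through the identification of $e_{\mathcal{L}, x}$ with an operator via $\langle e_{\mathcal{L}, x}, \rho \rangle = Q_{\rho}(x)$, and using unitarity of the $G$-action to move $g$ across the inner product, one obtains
\[
e_{\mathcal{L}, g \cdot x} = g\, e_{\mathcal{L}, x}\, g^{-1}.
\]

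Second I would show that $T_{\mathcal{L}}$ is $G$-invariant. If $\{s_i\}$ is an orthonormal basis of $\mathcal{H}_{\mathcal{L}}$, then so is $\{g^{-1} \cdot s_i\}$ by $G$-invariance of $\langle \cdot, \cdot \rangle_{\mathcal{L}}$, and the relation $s_i(g \cdot x) = g \cdot (g^{-1} \cdot s_i)(x)$ combined with $G$-invariance of $h_{\mathcal{L}}$ gives $|s_i(g \cdot x)|^2_{\mathcal{L}} = |(g^{-1} \cdot s_i)(x)|^2_{\mathcal{L}}$. Summing over $i$ (and using that the value of $T_{\mathcal{L}}$ does not depend on the choice of orthonormal basis) yields $T_{\mathcal{L}}(g \cdot x) = T_{\mathcal{L}}(x)$.

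Finally, combining the two pieces:
\[
c_{\mathcal{L}, g \cdot x} = \frac{e_{\mathcal{L}, g \cdot x}}{T_{\mathcal{L}}(g \cdot x)} = \frac{g\, e_{\mathcal{L}, x}\, g^{-1}}{T_{\mathcal{L}}(x)} = g \cdot c_{\mathcal{L}, x}.
\]
There is no real obstacle here; the only subtle point is the bookkeeping for the identification $e_{\mathcal{L}, x} \in \operatorname{End}(\mathcal{H}_{\mathcal{L}})$ (as opposed to a sesquilinear form), which is where unitarity of the $G$-action with respect to $\langle \cdot, \cdot \rangle_{\mathcal{L}}$ is essential in passing from equivariance of the form to conjugation-equivariance of the operator.
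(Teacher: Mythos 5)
Your proof is correct and takes essentially the same route as the paper's: both reduce the claim to equivariance of the evaluation/Husimi form under the $G$-action, using invariance of $h_{\mathcal{L}}$ and unitarity of the action on $\mathcal{H}_{\mathcal{L}}$ (the paper compresses this into a single computation testing $\operatorname{Tr}(c_{\mathcal{L},g\cdot z}\,\psi)$ against pure states $\psi$). Your explicit check that $T_{\mathcal{L}}$ is $G$-invariant is careful bookkeeping that the paper's displayed computation glosses over, but it does not change the substance of the argument.
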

\begin{proof}
For any pure state $\psi$, we have
\begin{align*}
\operatorname{Tr}(c_{\mathcal{L},g\cdot z} \psi) &= |\psi(g \cdot z)|^{2} \\
&= |(g^{-1} \cdot \psi)(z)|^{2} \\
&= \operatorname{Tr}(c_{\mathcal{L}, z} (g^{-1} \psi g)) \\
&= \operatorname{Tr}((g \cdot c_{\mathcal{L}, z}) \psi).
\end{align*}
\end{proof}

\begin{proposition}
Suppose that $M$ and $\mathcal{L}$ are homogeneous with respect to $G$. Then the Hilbert space $\mathcal{H}_{\mathcal{L}}$ is an irreducible representation of $G$.
\end{proposition}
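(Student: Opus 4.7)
The plan is to apply Schur's lemma in its ``if every intertwiner is a scalar, then the representation is irreducible'' form. That is, I would show that every $G$-equivariant endomorphism $A : \mathcal{H}_{\mathcal{L}} \to \mathcal{H}_{\mathcal{L}}$ is a scalar multiple of the identity, using the coherent states as a probing set.

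First, note that the function $T_{\mathcal{L}} : M \to \mathbb{R}$ is $G$-invariant: changing orthonormal basis to the $G$-translate of a given orthonormal basis shows that $T_{\mathcal{L}}(g \cdot x) = T_{\mathcal{L}}(x)$, and by transitivity of the $G$-action $T_{\mathcal{L}}$ is in fact a positive constant. In particular $e_{\mathcal{L},x}$ and $c_{\mathcal{L},x}$ differ only by this global constant, so by Proposition~\ref{prop:coh-equivariant} both families of rank-$1$ operators transform equivariantly: $c_{\mathcal{L}, g \cdot x} = g \, c_{\mathcal{L},x} \, g^{-1}$. Consequently, for any $G$-equivariant operator $A$, the scalar function
\[
f_A(x) \coloneqq \operatorname{Tr}(c_{\mathcal{L},x} A)
\]
satisfies $f_A(g \cdot x) = \operatorname{Tr}(g c_{\mathcal{L},x} g^{-1} A) = \operatorname{Tr}(c_{\mathcal{L},x} g^{-1} A g) = f_A(x)$, so $f_A$ is $G$-invariant and therefore constant, say equal to $\lambda \in \mathbb{C}$.

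Now consider the operator $B \coloneqq A - \lambda \, \operatorname{id}_{\mathcal{H}_{\mathcal{L}}}$. Using $\operatorname{Tr}(c_{\mathcal{L},x}) = 1$ from Proposition~\ref{prop:coherent-state}, we have $\operatorname{Tr}(c_{\mathcal{L},x} B) = \lambda - \lambda = 0$ for every $x \in M$. The uniqueness argument for the Glauber--Sudarshan P-function then applies verbatim: the function $(x, y) \mapsto \braket{c_{\mathcal{L},x} | B | c_{\mathcal{L},y}}$ is holomorphic on $\overline{M} \times M$ and vanishes on the diagonal, hence vanishes identically, and by Proposition~\ref{prop:coh-span} the $\{c_{\mathcal{L},x}\}$ span $\mathcal{H}_{\mathcal{L}}$, forcing $B = 0$. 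Thus $A = \lambda \, \operatorname{id}$, and Schur's lemma gives irreducibility.

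I do not expect a real obstacle here: the main ingredients are already in place (equivariance of coherent states, spanning, the holomorphic-in-$(x,y)$ rigidity used for P-uniqueness), and the only minor subtlety is noting that the uniqueness argument did not require $B$ to be self-adjoint, which is immediate from its proof. An alternative route, which could be given in a single line, is to observe that if $V \subsetneq \mathcal{H}_{\mathcal{L}}$ were a nontrivial $G$-invariant subspace, then the orthogonal projection $P_V$ would be a $G$-equivariant operator that is not a scalar, contradicting what was just shown.
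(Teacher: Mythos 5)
Your proof is correct and follows essentially the same route as the paper: the paper takes an invariant subspace, uses equivariance of the coherent states (Proposition~\ref{prop:coh-equivariant}) plus transitivity to force $z \mapsto \operatorname{Tr}(P c_{\mathcal{L},z})$ to be constant, and concludes from the spanning property that $P$ is a scalar multiple of the identity --- exactly the argument you run for a general intertwiner $A$, and your closing remark about the projection $P_V$ collapses the Schur-converse packaging back to the paper's version. If anything your write-up is slightly more explicit, since you point out that the step ``constant expectation against all coherent states implies scalar'' rests on the holomorphic-off-diagonal rigidity from the P-function uniqueness argument, which the paper's proof leaves implicit.
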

\begin{proof}
Suppose $\mathcal{H}' \subseteq \mathcal{H}_{\mathcal{L}}$ is a $G$-invariant subspace. Let $P$ be the orthogonal projection onto $\mathcal{H}'$. Then
\begin{align*}
\operatorname{Tr}(P c_{\mathcal{L},g \cdot z}) &= \operatorname{Tr}(P (g \cdot c_{\mathcal{L},z})) \\
&= \operatorname{Tr}((g^{-1}Pg) c_{\mathcal{L},z}) \\
&= \operatorname{Tr}(P c_{\mathcal{L},z})
\end{align*}
for all $g \in G$ and $z \in M$. Since $G$ acts transitively on $M$, this implies that the function $z \mapsto \operatorname{Tr}(P c_{\mathcal{L},z})$ is constant. By Proposition~\ref{prop:coh-span} the coherent states $c_{\mathcal{L},z}$ span $\mathcal{H}_{\mathcal{L}}$, so $P$ is a constant multiple of the identity. Thus $\mathcal{H}'$ is either $0$ or $\mathcal{H}_{\mathcal{L}}$.
\end{proof}

\begin{proposition}
Suppose that $M$ and $\mathcal{L}$ are homogeneous with respect to $G$. Let $\mu_{\text{vol}}$ be the volume measure on $M$ normalized to have mass 1, which is the unique $G$-invariant probability measure on $M$. Then taking $\mu_{\mathcal{L}} = (\dim \mathcal{H}_{\mathcal{L}})\mu_{\text{vol}}$ makes the coherent states into a normalized POVM.
\end{proposition}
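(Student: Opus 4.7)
The plan is to apply Schur's lemma to the operator $A := \int_M c_{\mathcal{L},x}\,d\mu_{\text{vol}}(x)$, exploiting the irreducibility of $\mathcal{H}_{\mathcal{L}}$ from the preceding proposition.

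First I would verify that $A$ commutes with the $G$-action. Proposition~\ref{prop:coh-equivariant} supplies the equivariance $c_{\mathcal{L}, g \cdot x} = g \cdot c_{\mathcal{L},x}$, and $\mu_{\text{vol}}$ is $G$-invariant, so the substitution $y = g \cdot x$ yields $gAg^{-1} = A$ for every $g \in G$. Schur's lemma and irreducibility then force $A = \lambda \operatorname{id}$. Taking the trace, and using that each $c_{\mathcal{L},x}$ is a rank-one, trace-one projection by Proposition~\ref{prop:coherent-state} together with the fact that $\mu_{\text{vol}}$ has mass $1$, I get $\lambda \dim \mathcal{H}_{\mathcal{L}} = 1$. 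Multiplying through by $\dim \mathcal{H}_{\mathcal{L}}$ then gives the resolution of identity $\int_M c_{\mathcal{L},x}\,d\mu_{\mathcal{L}}(x) = \operatorname{id}_{\mathcal{H}_{\mathcal{L}}}$ with $\mu_{\mathcal{L}} = (\dim \mathcal{H}_{\mathcal{L}}) \mu_{\text{vol}}$.

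To match the POVM element $T_{\mathcal{L}}(x) c_{\mathcal{L},x}$ appearing in Proposition~\ref{prop:resolution-identity}, I still need $T_{\mathcal{L}} \equiv 1$ on $M$. Since the basis-independent expression $T_{\mathcal{L}}(x) = \sum_i |s_i(x)|^2_{h_{\mathcal{L}}}$ is $G$-invariant (by $G$-invariance of $h_{\mathcal{L}}$ and unitarity of the $G$-action on $\mathcal{H}_{\mathcal{L}}$), transitivity makes it constant. For the base bundles this constant equals $1$ by Proposition~\ref{prop:tl-one}. For a positive tensor product $\mathcal{L}_1 \otimes \mathcal{L}_2$, the pointwise evaluation $E_x : \mathcal{H}_{\mathcal{L}_1} \otimes \mathcal{H}_{\mathcal{L}_2} \to (\mathcal{L}_1 \otimes \mathcal{L}_2)_x$ factors through the multiplication map $M_{\mathcal{L}_1,\mathcal{L}_2}$ and so vanishes on $\ker M_{\mathcal{L}_1,\mathcal{L}_2}$; combined with the isometry of $M^*_{\mathcal{L}_1,\mathcal{L}_2}$ (whose image is the orthogonal complement of this kernel), summing $|s_\alpha(x)|^2$ over an ON basis of $\mathcal{H}_{\mathcal{L}_1 \otimes \mathcal{L}_2}$ gives $T_{\mathcal{L}_1 \otimes \mathcal{L}_2}(x) = T_{\mathcal{L}_1}(x) \cdot T_{\mathcal{L}_2}(x)$, and induction from the base case finishes the argument.

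The main obstacle is the last normalization step, but it ultimately reduces to the elementary observation that multiplication in a line bundle fiber is commutative, so $\ker M_{\mathcal{L}_1,\mathcal{L}_2}$ is annihilated by $E_x$ at every $x$. Once this is in hand, the Schur argument and the trace computation combine cleanly to yield the desired POVM normalization.
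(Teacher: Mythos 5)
Your proposal is correct and its core is the same argument as the paper's: use the equivariance $c_{\mathcal{L},g\cdot x} = g\cdot c_{\mathcal{L},x}$ together with $G$-invariance of $\mu_{\text{vol}}$ to show the integral commutes with the (unitary) $G$-action, invoke Schur's lemma via irreducibility of $\mathcal{H}_{\mathcal{L}}$, and pin down the scalar by the trace-one property of each coherent state, giving $\int_{M} c_{\mathcal{L},x}\,d\mu_{\text{vol}}(x) = (\dim \mathcal{H}_{\mathcal{L}})^{-1}\operatorname{id}$. Your extra verification that $T_{\mathcal{L}} \equiv 1$ (constancy by homogeneity, value $1$ via $T_{\mathcal{L}_{1}\otimes\mathcal{L}_{2}} = T_{\mathcal{L}_{1}}T_{\mathcal{L}_{2}}$, which indeed follows from $M_{\mathcal{L}_{1},\mathcal{L}_{2}}M_{\mathcal{L}_{1},\mathcal{L}_{2}}^{*} = \operatorname{id}$ and the tensor-product Hermitian metric) does not appear in the paper's proof, which simply treats the POVM element as $c_{\mathcal{L},x}\,d\mu_{\mathcal{L}}(x)$ in the modified setup; your addition is a correct and harmless reconciliation with the earlier definition of the coherent-state POVM that carried the factor $T_{\mathcal{L}}(x)$.
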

\begin{proof}
We have
\begin{align*}
g \cdot \left(\int _{M} c_{\mathcal{L},z}\,d\mu_{\text{vol}}(z)\right) &= \int _{M} (g \cdot c_{\mathcal{L},z})\,d\mu_{\text{vol}}(z) \\
&= \int _{M} c_{\mathcal{L},g \cdot z}\,d\mu_{\text{vol}}(z) \\
&= \int _{M} c_{\mathcal{L},z}\,d\mu_{\text{vol}}(z)
\end{align*}
using Proposition~\ref{prop:coh-equivariant} and the $G$-invariance of $\mu_{\text{vol}}$. Since the inner product on $\mathcal{H}_{\mathcal{L}}$ is $G$-invariant, by Schur's lemma the integral must be a scalar multiple of $\operatorname{id}_{\mathcal{H}_{\mathcal{L}}}$. Then
\[ \operatorname{Tr} \left(\int _{M} c_{\mathcal{L},z}\,d\mu_{\text{vol}}(z) \right) = \int _{M} \operatorname{Tr}(c_{\mathcal{L},z})\,d\mu_{\text{vol}}(z) = 1 \]
so the integral is $(\dim \mathcal{H}_{\mathcal{L}})^{-1} \operatorname{id}_{\mathcal{H}_{\mathcal{L}}}$. Rescaling $\mu_{\text{vol}}$ by $\dim \mathcal{H}_{\mathcal{L}}$ then makes the coherent states into a normalized POVM.
\end{proof}

\begin{corollary}\label{cor:homog-err}
If $M$ and $\mathcal{L}_{1}, \dots, \mathcal{L}_{m}$ are homogeneous, then the error in the generalized de Finetti theorem (Theorem~\ref{thm:main-thm}) is bounded by $1 - (\dim \mathcal{H}_{\mathcal{L}_{1}})/(\dim \mathcal{H}_{\mathcal{L}_{2}})$.
\end{corollary}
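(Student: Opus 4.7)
The plan is to apply Theorem~\ref{thm:main-thm} after making the quantity $R$ completely explicit using the preceding proposition. The key observation is that in the homogeneous case every measure $\mu_{\mathcal{L}}$ for which the coherent states form a normalized POVM is a scalar multiple of the same underlying $G$-invariant probability measure $\mu_{\text{vol}}$, so the Radon-Nikodym derivatives appearing in Theorem~\ref{thm:main-thm} collapse to constants and the supremum defining $R$ becomes trivial to evaluate.

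First I would verify that homogeneity is preserved under the tensor product operation inside the fixed family of line bundles: the induced diagonal $G$-action on $\mathcal{L}_{1} \otimes \mathcal{L}_{2}$ is still by bundle automorphisms and preserves the tensor-product Hermitian metric $h_{\mathcal{L}_{1}} \otimes h_{\mathcal{L}_{2}}$, while $G$-invariance of the inner product on $\mathcal{H}_{\mathcal{L}_{1} \otimes \mathcal{L}_{2}}$ is built into the homogeneity hypothesis on the family. The preceding proposition then applies to every line bundle in play, giving $\mu_{\mathcal{L}} = (\dim \mathcal{H}_{\mathcal{L}})\,\mu_{\text{vol}}$; in particular this holds simultaneously for $\mathcal{L}_{2}$ and for $\mathcal{L}_{1} \otimes \mathcal{L}_{2}$.

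Taking the quotient of these two explicit formulas, the Radon-Nikodym derivative needed in Theorem~\ref{thm:main-thm} becomes the constant function $r(x) \equiv (\dim \mathcal{H}_{\mathcal{L}_{2}})/(\dim \mathcal{H}_{\mathcal{L}_{1} \otimes \mathcal{L}_{2}})$, so that $R = \sup_{x}(1 - r(x)) = 1 - (\dim \mathcal{H}_{\mathcal{L}_{2}})/(\dim \mathcal{H}_{\mathcal{L}_{1} \otimes \mathcal{L}_{2}})$. Substituting into the trace-distance bound of Theorem~\ref{thm:main-thm} yields the desired error estimate expressed as a ratio of Hilbert space dimensions; converting to the exact form written in the corollary is then a dimension comparison using that $M_{\mathcal{L}_{1},\mathcal{L}_{2}}^{*}$ is an isometry, which bounds $\dim \mathcal{H}_{\mathcal{L}_{1} \otimes \mathcal{L}_{2}}$ in terms of $\dim \mathcal{H}_{\mathcal{L}_{1}}$ and $\dim \mathcal{H}_{\mathcal{L}_{2}}$.

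I do not anticipate any real obstacle here: the entire argument is substitution into the already-proved theorem. The only step that warrants a sentence of justification is the closure of homogeneity under tensor products, and even that is formal once one unpacks the definition.
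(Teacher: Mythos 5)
Your core argument is exactly the paper's (implicit) one: the corollary is stated without proof precisely because it is immediate from the preceding proposition, which normalizes the coherent-state POVM as $\mu_{\mathcal{L}} = (\dim \mathcal{H}_{\mathcal{L}})\,\mu_{\text{vol}}$ for every bundle in the family, so that the Radon--Nikodym derivative in Theorem~\ref{thm:main-thm} is the constant $r \equiv \dim \mathcal{H}_{\mathcal{L}_{2}}/\dim \mathcal{H}_{\mathcal{L}_{1}\otimes\mathcal{L}_{2}}$ and $R = 1 - r$; your remark that homogeneity passes to tensor products of the $\mathcal{L}_{i}$ is the only hypothesis-check needed, and you handle it correctly.

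The one piece that does not work as described is your final ``conversion'' step. Invoking the isometry $M_{\mathcal{L}_{1},\mathcal{L}_{2}}^{*}$ only yields $\dim \mathcal{H}_{\mathcal{L}_{1}\otimes\mathcal{L}_{2}} \leq \dim \mathcal{H}_{\mathcal{L}_{1}}\dim \mathcal{H}_{\mathcal{L}_{2}}$, which turns $1 - \dim\mathcal{H}_{\mathcal{L}_{2}}/\dim\mathcal{H}_{\mathcal{L}_{1}\otimes\mathcal{L}_{2}}$ into the much weaker bound $1 - 1/\dim\mathcal{H}_{\mathcal{L}_{1}}$, not into the expression in the corollary. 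No such comparison is intended: the corollary simply reuses the labels $\mathcal{L}_{1},\mathcal{L}_{2}$ with a different meaning than in Theorem~\ref{thm:main-thm} (there $\mathcal{H}_{\mathcal{L}_{1}}$ should be read as the Hilbert space of the traced-out factor and $\mathcal{H}_{\mathcal{L}_{2}}$ as that of the total bundle $\mathcal{L}_{1}\otimes\mathcal{L}_{2}$), as one sees from how it is invoked in Propositions~\ref{prop:origdf} and~\ref{prop:multisym}, where the bound used is $2\bigl(1 - \dim\mathcal{H}_{\text{small}}/\dim\mathcal{H}_{\text{total}}\bigr)$. Relatedly, your derivation gives $\epsilon = 2R$, consistent with the theorem and with those applications; the absence of the factor $2$ in the corollary's literal wording is an inconsistency in the paper, not something you can (or should try to) recover by a dimension argument. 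So: drop the isometry step, state the relabeling, and your proof is complete and matches the intended one.
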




\section{Applications}\label{sec:applications}
\subsection{The original quantum de Finetti theorem}
We begin by showing how the generalized quantum de Finetti theorem, Theorem~\ref{thm:main-thm}, has the usual quantum de Finetti theorem as a special case. We will work this case out in slightly more detail than the other applications. To deduce the original quantum de Finetti theorem we will take $M = \mathbb{C}P^{d - 1}$, which is the manifold of pure states in a $d$-dimensional Hilbert space. We will take $\mathcal{L}$ to be the dual of the canonical line bundle over $\mathbb{C}P^{d - 1}$, where the canonical line bundle has the total space
\[ \{ (x, v) : x \in \mathbb{C}P^{d - 1}, v \in \operatorname{span} _{\mathbb{C}} \{z\} \}. \]
Equivalently, the fiber over a point $z \in \mathbb{C}P^{d - 1}$ the 1-dimensional vector space $\operatorname{span} _{\mathbb{C}} \{z\}$. We will take the hermitian metric on $\mathcal{L}$ to be the usual Fubini-Study metric with respect to the standard hermitian inner product on $\mathbb{C}^{d}$.

\subsubsection{Explicit charts for complex projective space}
We first explicitly describe the K\"{a}hler structure on $\mathbb{C}P^{d - 1}$. Using homogeneous coordinates, let $[ z_{1} : \cdots : z_{d} ]$ denote the point in $\mathbb{C}P^{d - 1}$ corresponding to the subspace spanned by $(z_{1}, \dots, z_{d})$. An explicit family of charts are defined on the open subsets
\[ U_{i} \coloneqq \{[z_{1} : \cdots : z_{d} ] \,\mid\, z_{i} \neq 0\} \]
with maps $\phi_{i} : U_{i} \to \mathbb{C}^{d - 1}$ defined by
\[ \phi_{i}([z_{1} : \cdots : z_{d}]) \coloneqq \frac{1}{z_{i}} (z_{1}, \dots, z_{i - 1}, z_{i + 1}, \dots, z_{d}). \]
For notational convenience, we will sometimes take the range of $\phi_{i}$ to be $\{ z \in \mathbb{C}^{d} : z_{i} = 1\}$, inserting a 1 at the $i$-th position in the equation above. With this convention, we can calculate that the transition maps $\phi_{i \to j}$ between the charts are
\[ \phi_{i \to j}(w) = \frac{w_{i}}{w_{j}}w \]
which are holomorphic on their domains. To define local trivializations of the canonical line bundle, we will use the same domains $U_{i}$. The local trivialization $\phi'_{i} : \pi_{\mathcal{L}}^{-1}(U_{i}) \to \mathbb{C}^{d - 1} \times \mathbb{C}$ is given by
\[ \phi'_{i}(w, v) = (\phi_{i}(w), v_{i}) \]
and the transition maps are given by
\[ \phi'_{i \to j}(w, v) = \left(\phi_{i \to j}(w), \frac{w_{i}}{w_{j}}v\right). \]
For the line bundle $\mathcal{L}$, we then obtain local trivializations
\[ \phi''_{i}(w, v) = (\phi_{i}(w), 1/v_{i}) \]
and transition maps
\[ \phi''_{i \to j}(w, v) = \left(\phi_{i \to j}(w), \frac{w_{j}}{w_{i}}v\right). \]
Lastly, the Fubini-Study hermitian metric on $\mathcal{L}$ is a smooth global section of $\bar{\mathcal{L}} \otimes \mathcal{L}$ given in the chart $U_{i}$ by
\[ h_{\mathcal{L},z}(v, v') = \frac{\bar{v}v'}{1 + |z|^{2}}.\]

\subsubsection{The disk bundle point of view}
There is an alternative description of the Hilbert spaces as subspaces of a single larger Hilbert space. For a slightly more detailed reference, see \cite[Section~2]{Zelditch_1998}. Consider $\mathcal{L}^{*}$, which is the canonical bundle over $\mathbb{C}P^{d}$, with the dual hermitian metric. The unit circle bundle of $\mathcal{L}^{*}$ is
\[ \{ (z, a) \in \mathcal{L}^{*} : |a|_{h_{\mathcal{L}^{*}}} = 1 \} \]
which turns out to be the same as the unit sphere
\[ \{ z \in \mathbb{C}^{d} : |z| = 1\} = S^{2d - 1}. \]
The unit sphere is a real-analytic submanifold of $\mathbb{C}^{d}$, and thus it has the structure of a Cauchy-Riemann manifold (CR manifold). In particular, this gives a notion of holomorphic function over $S^{2d - 1}$, which are restrictions of holomorphic functions defined on the open unit disk bundle whose boundary is the circle bundle. We can then consider the Hilbert space
\[ L^{2}_{\text{hol}}(S^{2d - 1}) \coloneqq \left\{ \text{$f : S^{2d - 1} \to \mathbb{C}$ holomorphic} \middle| \int _{S^{2d - 1}} |f|^{2}\,d\mu < \infty \right\} \]
where $\mu$ is now the $U(d)$-invariant measure on $S^{2d - 1}$. In particular, $\mu$ is invariant under the $U(1)$ action which acts fiberwise (thinking of $S^{2d - 1}$ as a circle bundle over $\mathbb{C}P^{d - 1}$), or which acts by scalar multiplication (thinking of $S^{2d - 1}$ as a subset of $\mathbb{C}^{d}$).
Then there is a $U(1)$-action on $L^{2}_{\text{hol}}(S^{2d - 1})$, and one can consider the decomposition into irreducible representations of $U(1)$. It is known that the space of holomorphic sections of Fourier weight $k$ is exactly $\mathcal{H}_{\mathcal{L}^{k}}$ \cite{Zelditch_1998}.

In the case of $\mathbb{C}P^{d - 1}$, we use this to explicitly determine $\mathcal{H}_{\mathcal{L}^{k}}$. We see that $L^{2}_{\text{hol}}(S^{2d - 1})$ is spanned by all monomials $z^{\alpha}$ for $\alpha \in \mathbb{Z}_{\geq 0}^{d}$. If $a \in U(1)$, then
\[ a \cdot z^{\alpha} = (az)^{\alpha} = a^{|\alpha|}z^{\alpha} \]
so $z^{\alpha}$ has Fourier weight equal to the total degree $|\alpha|$. Thus $\mathcal{H}_{L^{k}}$ is spanned by monomials of total degree $k$, which can be identified with the symmetric subspace $\operatorname{Sym}^{k}(\mathbb{C}^{d})$.

\subsubsection{Deducing the quantum de Finetti theorem}
We will treat the pair $(\mathbb{C}P^{d - 1}, \mathcal{L})$ as being homogeneous under the group action of the unitary group $U(d)$, and then take $\mu$ to be the unitarily invariant measure.

\begin{proposition}
The unitary group $U(d)$ acts transitively on the pair $(\mathbb{C}P^{d - 1}, \mathcal{L})$.
\end{proposition}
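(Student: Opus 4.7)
The plan is to first define the action of $U(d)$ on $\mathbb{C}P^{d-1}$ coming from the standard linear action of $U(d)$ on $\mathbb{C}^{d}$, lift it to a bundle automorphism action on $\mathcal{L}$, and check that both transitivity on the base and compatibility with the Hermitian metric on $\mathcal{L}$ (and hence with the induced inner product on $\mathcal{H}_{\mathcal{L}}$) hold. I will also verify that the action on $\mathbb{C}P^{d-1}$ is by Kähler automorphisms, which is what the notion of ``homogeneous'' in the earlier definition requires.

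For the base, I would define $U\cdot[z] \coloneqq [Uz]$ for $U\in U(d)$ and $[z]\in\mathbb{C}P^{d-1}$. Well-definedness on equivalence classes is immediate since $U$ is linear. For transitivity, given two points $[v],[w]\in\mathbb{C}P^{d-1}$, I normalize representatives so that $|v| = |w| = 1$, extend each to an orthonormal basis of $\mathbb{C}^{d}$, and take $U\in U(d)$ to be the change-of-basis unitary sending the first basis to the second; then $U\cdot[v] = [w]$. That the action preserves the Kähler structure follows from the fact that $U(d)$ preserves the standard Hermitian inner product on $\mathbb{C}^{d}$, which is what is used to define the Fubini-Study metric (via the symplectic reduction description mentioned in the earlier example).

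For the bundle, I would first describe the action on the dual bundle $\mathcal{L}^{*} = \mathcal{O}(-1)$, whose total space is $\{(x,v)\in\mathbb{C}P^{d-1}\times\mathbb{C}^{d}:v\in\mathbb{C}x\}$. The natural lift is $U\cdot(x,v) \coloneqq (Ux, Uv)$, which is a well-defined bundle automorphism because $Uv\in\mathbb{C}(Ux)$. Dualizing fiberwise produces the required bundle-automorphism action on $\mathcal{L}$. To check that this action preserves the Fubini-Study Hermitian metric $h_{\mathcal{L}}$, I would either work in the local trivializations $\phi''_{i}$ written out in the preceding subsection, or, more conceptually, observe that $h_{\mathcal{L}^{*}}$ on $\mathcal{O}(-1)$ is the restriction of the standard Hermitian form on the trivial bundle $\mathbb{C}P^{d-1}\times\mathbb{C}^{d}$, which is manifestly $U(d)$-invariant; the dual metric on $\mathcal{L}$ is then also invariant. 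The induced inner product on $\mathcal{H}_{\mathcal{L}} = \Gamma_{\mathrm{hol}}(\mathcal{L})$ (or its tensor powers, identified with homogeneous polynomials via the disk-bundle description) is invariant because it is built out of the $U(d)$-invariant Hermitian metric $h_{\mathcal{L}}$ and the $U(d)$-invariant measure on $M$.

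No step should be a real obstacle: transitivity reduces to the standard fact that $U(d)$ acts transitively on the unit sphere in $\mathbb{C}^{d}$, and invariance of $h_{\mathcal{L}}$ is a direct consequence of the unitary invariance of the standard Hermitian form on $\mathbb{C}^{d}$. The only point requiring a bit of care is keeping track of which bundle ($\mathcal{L}$ vs $\mathcal{L}^{*}$) one is acting on when passing between the description via subspaces of $\mathbb{C}^{d}$ and the description via global sections identified with polynomials, but this is just bookkeeping with the charts and transition functions written out above.
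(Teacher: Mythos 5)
Your proposal is correct, and the core of it coincides with the paper's proof: the action is defined the same way (the standard linear action on the base, lifted to the total space of the canonical bundle by $g\cdot(x,v)=(g\cdot x,\,gv)$ and then dualized), and transitivity is proved by exactly the same orthonormal-basis-extension argument. The one place where you diverge is the verification that the Fubini--Study metric on $\mathcal{L}$ is invariant. The paper does this in the local trivializations: it restricts to the subgroups $G_{i}\cong U(d-1)$ preserving the $i$-th coordinate, notes $\phi_{i}(g\cdot w)=g\cdot\phi_{i}(w)$, and checks invariance of $h_{\mathcal{L},z}(v,v')=\bar{v}v'/(1+|z|^{2})$ chart by chart, relying on the claim that these subgroups generate $U(d)$. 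You instead observe that $h_{\mathcal{L}^{*}}$ on $\mathcal{O}(-1)$ is the restriction of the standard Hermitian form on the trivial bundle $\mathbb{C}P^{d-1}\times\mathbb{C}^{d}$, which is manifestly $U(d)$-invariant, and dualize. Your global argument is cleaner: it is coordinate-free, applies to the whole group at once, and sidesteps the paper's generation claim (which requires some care, e.g.\ for $d=2$ the coordinate-fixing subgroups only generate a torus, so the chart computation as literally stated needs to be supplemented). You also explicitly check the K\"{a}hler-automorphism property and the invariance of the induced inner product on $\mathcal{H}_{\mathcal{L}}$, which the paper's proof of this proposition leaves implicit; that extra bookkeeping is what the homogeneity definition actually demands, so including it is a mild improvement rather than a detour.
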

\begin{proof}
We define the $U(d)$ action by $g \cdot vv^{*} = (gv)(gv)^{*}$, where $vv^{*}$ is the projector onto a 1-dimensional subspace of $\mathbb{C}^{d}$ (which represents a point in $\mathbb{C}P^{d - 1}$). This extends to an action on the total space of the canonical bundle by
\[ g \cdot (x, v) \coloneqq (g \cdot x, gv) \]
where $gv$ denotes applying a unitary to a vector.

To show that the action is transitive, let $z_{1}, z_{2} \in \mathbb{C}^{d}$ be any two unit vectors. There exist two orthonormal bases of $\mathbb{C}^{d}$ enlarging the sets $\{z_{1}\}$ and $\{z_{2}\}$ respectively. The unitary which takes one basis to the other will then take $z_{1}$ to $z_{2}$, and thus take the corresponding points of $\mathbb{C}^{d - 1}$ to each other.

To show that the Fubini-Study metric is invariant under $G$, it suffices to consider the subgroups $G_{i} \cong U(d - 1)$ which preserve the $i$-th coordinate, since these subgroups generate all of $U(d)$. For $g \in G_{i}$, we have $\phi_{i}(g \cdot w) = g \cdot \phi_{i}(w)$. Thus
\[ h_{\mathcal{L}, g \cdot z}(v, v') = \frac{\bar{v}v'}{1 + |g \cdot z|^{2}} = \frac{\bar{v}v'}{1 + |z|^{2}} = h_{\mathcal{L}, z}(v, v') \]
where the middle equality uses the fact that $g$ preserves the $i$-th coordinate.
\end{proof}

The original quantum de Finetti theorem is stated for states on the symmetric subspace, so to recover it we will need to show that this agrees with the Hilbert space constructed in geometric quantization.

\begin{proposition}\label{prop:sym-subspace}
The Hilbert space $\mathcal{H}_{\mathcal{L}^{k}}$ is isomorphic to the symmetric subspace $\operatorname{Sym}^{k} \mathbb{C}^{d}$.
\end{proposition}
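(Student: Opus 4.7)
The plan is to use the disk bundle description of $\mathcal{H}_{\mathcal{L}^{k}}$ from the preceding subsection and exhibit an explicit $U(d)$-equivariant isometry to $\operatorname{Sym}^{k}\mathbb{C}^{d}$. First, I would invoke the identification $\mathcal{H}_{\mathcal{L}^{k}} \cong L^{2}_{\text{hol}}(S^{2d-1})_{k}$, where the subscript $k$ denotes the Fourier weight-$k$ isotypic component under the fiberwise $U(1)$ action on $S^{2d-1} \subset \mathbb{C}^{d}$. Holomorphic CR functions on $S^{2d-1}$ extend uniquely to holomorphic functions on the closed unit ball, so $L^{2}_{\text{hol}}(S^{2d-1})$ has an orthogonal basis given by the monomials $\{z^{\alpha}\}_{\alpha \in \mathbb{Z}_{\geq 0}^{d}}$, and under $a \in U(1)$ we have $a \cdot z^{\alpha} = a^{|\alpha|}z^{\alpha}$. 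Hence the weight-$k$ subspace is spanned precisely by the monomials of total degree $k$.

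The second step is to identify the space $\mathbb{C}[z_{1},\dots,z_{d}]_{k}$ of homogeneous degree-$k$ polynomials with $\operatorname{Sym}^{k}\mathbb{C}^{d}$. Fixing the standard basis $e_{1},\dots,e_{d}$ of $\mathbb{C}^{d}$, I would define the linear map
\[
\Phi : \operatorname{Sym}^{k}\mathbb{C}^{d} \to \mathbb{C}[z_{1},\dots,z_{d}]_{k}, \qquad e_{i_{1}} \odot \cdots \odot e_{i_{k}} \mapsto z_{i_{1}} \cdots z_{i_{k}},
\]
which is a well-defined bijection on bases (both sides are indexed by multisets of size $k$ from $\{1,\dots,d\}$). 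Equivalently, $\Phi$ sends a symmetric tensor $v^{\otimes k}$ to the polynomial $z \mapsto \langle v, z\rangle^{k}$ (up to a constant), which manifests the $U(d)$-equivariance: the standard $U(d)$-action on $\mathbb{C}^{d}$ induces the diagonal action on $\operatorname{Sym}^{k}\mathbb{C}^{d}$, and by construction $\Phi(g \cdot x) = g \cdot \Phi(x)$ for all $g \in U(d)$.

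Finally, I would argue that $\Phi$ is an isometry after rescaling. Both $\mathcal{H}_{\mathcal{L}^{k}}$ and $\operatorname{Sym}^{k}\mathbb{C}^{d}$ carry $U(d)$-invariant Hermitian inner products (the former because the Fubini-Study metric, the round measure on $S^{2d-1}$, and the $U(1)$-decomposition are all $U(d)$-equivariant; the latter by restriction of the standard product inner product on $(\mathbb{C}^{d})^{\otimes k}$). Since $\operatorname{Sym}^{k}\mathbb{C}^{d}$ is a well-known irreducible representation of $U(d)$, Schur's lemma forces the two invariant inner products to agree up to a positive scalar, and rescaling yields an isometric $U(d)$-equivariant isomorphism.

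The only mildly delicate step is the first one: verifying that the identification with the weight-$k$ subspace of $L^{2}_{\text{hol}}(S^{2d-1})$ is really an isomorphism of $U(d)$-representations and not merely of vector spaces. This follows because the CR structure on $S^{2d-1}$, the $U(1)$-action, and the invariant measure are all preserved by $U(d)$, so the isotypic decomposition under the center $U(1)$ is $U(d)$-equivariant; once this is in place, the remainder of the argument is essentially bookkeeping on monomial bases.
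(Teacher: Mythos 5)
Your argument is correct, but it follows a genuinely different route from the paper's proof of Proposition~\ref{prop:sym-subspace}. The paper works entirely in the explicit charts $U_{i}$: a global holomorphic section is a collection of holomorphic functions $s_{i}$ related by the transition functions $(z_{i}/z_{j})^{k}$, and the gluing condition forces $z_{i}^{k}s_{i}(z_{1}/z_{i},\dots,z_{d}/z_{i})$ to be a single degree-$k$ homogeneous polynomial independent of $i$; this gives the linear isomorphism directly, with no appeal to the circle-bundle picture or to representation theory. You instead start from the Hardy-space identification $\mathcal{H}_{\mathcal{L}^{k}} \cong L^{2}_{\text{hol}}(S^{2d-1})_{k}$, which the paper only sketches in the preceding subsubsection and attributes to Zelditch, so your proof inherits that external input rather than being self-contained; in exchange, your route buys more, namely a $U(d)$-equivariant identification and, via Schur's lemma, the statement that the geometric-quantization inner product agrees with the standard one on $\operatorname{Sym}^{k}\mathbb{C}^{d}$ up to a positive scalar, which the paper's proof does not address. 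One small imprecision: as written, $v^{\otimes k} \mapsto \langle v, z\rangle^{k}$ is antilinear in $v$ if $\langle\cdot,\cdot\rangle$ is the Hermitian product; you want the bilinear pairing $v\cdot z = \sum_{i} v_{i}z_{i}$, and then the natural geometric $U(d)$-action on sections corresponds to the \emph{conjugate} (equivalently dual) of the standard action on $\operatorname{Sym}^{k}\mathbb{C}^{d}$, so the identity $\Phi(g\cdot x) = g\cdot\Phi(x)$ holds only after replacing $g$ by $\bar{g}$ on one side. This does not damage the Schur argument (the conjugate of an irreducible representation is irreducible, and invariance of the pulled-back inner product under $\{\bar{g} : g \in U(d)\} = U(d)$ is all you need), but the equivariance claim should be stated with that twist.
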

\begin{proof}
Let $s \in \mathcal{H}_{\mathcal{L}^{k}}$ be global section. In the chart $U_{i}$, the restriction of $s$ is some holomorphic function $s_{i} : \mathbb{C}^{d - 1} \to \mathbb{C}$. Using the transition map between $U_{i}$ and $U_{j}$, we have
\begin{align*}
s_{j}(z_{1}, \dots, z_{j - 1}, z_{j + 1}, \dots, z_{d}) = \left(\frac{z_{i}}{z_{j}}\right)^{k} s_{i}\left( \frac{z_{j}}{z_{i}} (z_{1}, \dots, z_{i - 1}, z_{i + 1}, \dots, z_{d}) \right)
\end{align*}
on the appropriate domain. In order for each $s_{i}$ to be holomorphic on all of $\mathbb{C}^{d - 1}$, we must have that
\[ z_{i}^{k}s_{i}(z_{1}/z_{i}, \dots, z_{i - 1}/z_{i}, z_{i + 1}/z_{i}, \dots, z_{d}/z_{i}) \]
is a homogeneous polynomial of degree $k$ which does not depend on $i$. Thus the isomorphism can be taken to be the linear map which sends $s \in \mathcal{H}_{\mathcal{L}^{k}}$ to the corresponding homogeneous polynomial of degree $k$, which can be identified with an element of the symmetric subspace.
\end{proof}

\begin{proposition}
The coherent state at a point $z \in \mathbb{C}P^{d - 1}$ is $z^{\otimes k} \in \mathcal{H}_{\mathcal{L}^{k}} \subseteq (\mathbb{C}^{d})^{\otimes k}$, where we identify the coherent state with its images under the multiplication maps.
\end{proposition}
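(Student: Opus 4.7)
The plan is to handle the base case $k = 1$ by a direct computation of the Husimi $Q$-function, and then propagate to general $k$ by iterating the tensorization identity of Proposition~\ref{prop:tensor-coh}. By Proposition~\ref{prop:sym-subspace}, $\mathcal{H}_{\mathcal{L}}$ is identified with the space of degree-$1$ homogeneous polynomials on $\mathbb{C}^d$, which via the standard Hermitian inner product on $\mathbb{C}^d$ I further identify with $\mathbb{C}^d$ itself (sending $v \in \mathbb{C}^d$ to the polynomial $w \mapsto \langle v, w\rangle$). The adjoints of the iterated multiplication maps then give isometric embeddings $\mathcal{H}_{\mathcal{L}^k} \hookrightarrow \mathcal{H}_{\mathcal{L}}^{\otimes k} \cong (\mathbb{C}^d)^{\otimes k}$ with image the symmetric subspace, and this is the embedding implicit in the statement.

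For the base case, I would take a unit section $s$ corresponding to some $v \in \mathbb{C}^d$ under the above identification. Using the local trivialization $\phi''_i$ in a chart $U_i$ with $z_i \neq 0$ and the Fubini-Study formula $h_{\mathcal{L},z}(\xi,\xi') = \bar{\xi}\xi'/(1 + |\phi_i(z)|^2)$ derived earlier, together with the homogeneity of the corresponding polynomial, a short calculation yields
\[ Q_s([z]) = \frac{|\langle v, z \rangle|^2}{|z|^2} \]
for any representative $z \in \mathbb{C}^d$. By the rank-one characterization of Proposition~\ref{prop:coherent-state}, $c_{\mathcal{L}, [z]}$ is the rank-one projector onto the unit vector $z/|z| \in \mathcal{H}_{\mathcal{L}} \cong \mathbb{C}^d$, matching the claimed $k = 1$ case.

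For the inductive step, Proposition~\ref{prop:tensor-coh} with $\mathcal{L}_1 = \mathcal{L}$ and $\mathcal{L}_2 = \mathcal{L}^{k-1}$ reads
\[ M_{\mathcal{L}, \mathcal{L}^{k-1}}\, e_{\mathcal{L}^k, [z]}\, M^*_{\mathcal{L}, \mathcal{L}^{k-1}} = e_{\mathcal{L}, [z]} \otimes e_{\mathcal{L}^{k-1}, [z]}. \]
Writing $e_{\mathcal{L}^j, [z]} = T_{\mathcal{L}^j}([z])\, |\psi_j\rangle \langle \psi_j|$ for the unit coherent state vector $\psi_j \in \mathcal{H}_{\mathcal{L}^j}$ and comparing matrix elements on product vectors $s_1 \otimes s_2$, I would conclude that $M^*_{\mathcal{L}, \mathcal{L}^{k-1}} \psi_k$ is a unit-modulus scalar multiple of $\psi_1 \otimes \psi_{k-1}$ and that $T_{\mathcal{L}^k} = T_{\mathcal{L}}\, T_{\mathcal{L}^{k-1}}$. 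Iterating this factorization down to the base case, the image of $\psi_k$ inside $(\mathbb{C}^d)^{\otimes k}$ equals, up to a global phase, exactly $z^{\otimes k}$, which gives $c_{\mathcal{L}^k, [z]} = |z^{\otimes k}\rangle\langle z^{\otimes k}|$ in the embedded symmetric subspace.

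The main delicate point is consistently using the Hermitian (antilinear) identification $\mathcal{H}_{\mathcal{L}} \cong \mathbb{C}^d$, so that it is $z$ rather than $\bar{z}$ that appears in the final answer; the explicit form of the $Q$-function computation in the base case pins down this choice uniquely. Phase ambiguities arising in the inductive step drop out at the level of rank-one projectors, so no further adjustment is needed.
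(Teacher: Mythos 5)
Your proposal is correct and follows essentially the same route as the paper: reduce to $k=1$ by iterating the tensorization identity of Proposition~\ref{prop:tensor-coh}, then compute the Husimi $Q$-function of a section in a chart via Proposition~\ref{prop:sym-subspace} to identify the rank-one coherent state with the projector onto $z$. Your write-up is somewhat more explicit than the paper's (spelling out the phase/normalization bookkeeping in the induction and the antilinear identification $\mathcal{H}_{\mathcal{L}} \cong \mathbb{C}^{d}$ that makes the answer $z^{\otimes k}$ rather than $\bar{z}^{\otimes k}$), but the substance is the same.
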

\begin{proof}
By applying Proposition~\ref{prop:tensor-coh} and taking tensor products, it suffices to prove the statement for $k = 1$. Let $w \in \mathbb{C}^{d}$ be a unit vector $i$ be some index such that $w_{i} \neq 0$. Let $s \in \mathcal{H}_{\mathcal{L}}$ be some section. By Proposition~\ref{prop:sym-subspace}, there is some $a$ such that $s(z) = a \cdot z$ in the chart $U_{i}$, using the convention that $z_{i} = 1$. Calculating in the chart $U_{i}$, we have
\[ |s(w)|^{2} = \frac{1}{|w/w_{i}|^{2}} |a \cdot (w/w_{i})|^{2} = |a \cdot w|^{2} \]
finishing the proof.
\end{proof}

\begin{proposition}\label{prop:origdf}
If $\rho$ is a mixed state on $\operatorname{Sym}^{n}(\mathbb{C}^{d})$, then $\operatorname{Tr}_{n - k} \rho$ is $\epsilon$-close in trace distance to a mixture over coherent states, where $\epsilon \leq 2dk/n$.
\end{proposition}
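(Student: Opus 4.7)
The plan is to specialize the machinery already built up, with $\mathcal{L}_1 = \mathcal{L}^k$ and $\mathcal{L}_2 = \mathcal{L}^{n-k}$ on $M = \mathbb{C}P^{d-1}$, and then invoke Corollary~\ref{cor:homog-err} together with a direct calculation of the dimension ratio. By Proposition~\ref{prop:sym-subspace}, $\mathcal{H}_{\mathcal{L}^m}$ is canonically identified with $\operatorname{Sym}^m(\mathbb{C}^d)$, so the three relevant Hilbert spaces $\mathcal{H}_{\mathcal{L}_1}$, $\mathcal{H}_{\mathcal{L}_2}$, $\mathcal{H}_{\mathcal{L}_1 \otimes \mathcal{L}_2}$ become $\operatorname{Sym}^k(\mathbb{C}^d)$, $\operatorname{Sym}^{n-k}(\mathbb{C}^d)$, and $\operatorname{Sym}^n(\mathbb{C}^d)$ respectively. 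The multiplication map $M_{\mathcal{L}_1,\mathcal{L}_2}^{*}$ agrees with the natural inclusion of the symmetric subspace $\operatorname{Sym}^n \hookrightarrow \operatorname{Sym}^k \otimes \operatorname{Sym}^{n-k}$ (up to normalization), so tracing out the last $n-k$ factors of $M^{*}_{\mathcal{L}_1,\mathcal{L}_2}\rho M_{\mathcal{L}_1,\mathcal{L}_2}$ is the same as the standard partial trace $\operatorname{Tr}_{n-k}\rho$.

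Next, the pair $(\mathbb{C}P^{d-1}, \mathcal{L})$ is homogeneous under the $U(d)$-action, as verified earlier in this subsection, and the tensor powers $\mathcal{L}_1, \mathcal{L}_2$ inherit this homogeneity (the induced Hilbert space inner products on the symmetric subspaces are $U(d)$-invariant). Hence Corollary~\ref{cor:homog-err} applies and reduces the problem to bounding a ratio of dimensions in the form $2R$ with $R = 1 - \dim\mathcal{H}_{\mathcal{L}_2}/\dim\mathcal{H}_{\mathcal{L}_1 \otimes \mathcal{L}_2}$.

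The remaining step is elementary: writing $\dim\operatorname{Sym}^m(\mathbb{C}^d) = \binom{m+d-1}{d-1} = \prod_{i=1}^{d-1} (m+i)/i$, the ratio becomes
\[
\frac{\dim\operatorname{Sym}^{n-k}(\mathbb{C}^d)}{\dim\operatorname{Sym}^{n}(\mathbb{C}^d)} = \prod_{i=1}^{d-1} \frac{n-k+i}{n+i} = \prod_{i=1}^{d-1}\left(1 - \frac{k}{n+i}\right).
\]
Using the standard inequality $1 - \prod_i(1-x_i) \leq \sum_i x_i$ for $x_i \in [0,1]$, we obtain $R \leq \sum_{i=1}^{d-1} k/(n+i) \leq (d-1)k/(n+1) \leq dk/n$, so $\epsilon = 2R \leq 2dk/n$ as required.

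There is no real obstacle here; the content of the proposition has already been absorbed into Theorem~\ref{thm:main-thm}, Corollary~\ref{cor:homog-err}, and the identifications worked out earlier. The only genuine work is matching the geometric quantization data ($\mathcal{L}^k$ on $\mathbb{C}P^{d-1}$, the Fubini--Study metric, $U(d)$-invariance) to the classical setup on the symmetric subspace and performing the short binomial estimate above.
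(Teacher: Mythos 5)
Your proposal is correct and follows essentially the same route as the paper: specialize Theorem~\ref{thm:main-thm} and Corollary~\ref{cor:homog-err} to $(\mathbb{C}P^{d-1},\mathcal{L})$ with the symmetric-subspace identification of Proposition~\ref{prop:sym-subspace}, and then bound $1 - \binom{n-k+d-1}{d-1}/\binom{n+d-1}{d-1}$ by $dk/n$. The only (immaterial) difference is the final estimate, where you use $1-\prod_i(1-x_i)\leq\sum_i x_i$ directly while the paper lower-bounds the product by $\bigl(1-\tfrac{k}{n+1}\bigr)^{d-1}$ and applies Bernoulli's inequality; both yield $\epsilon \leq 2dk/n$.
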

\begin{proof}
We apply Theorem~\ref{thm:main-thm} and Corollary~\ref{cor:homog-err} to $\mathbb{C}P^{d - 1}$ and $\mathcal{L}$.  We get the statement for some $\epsilon$ satisfying
\begin{align*}
\epsilon &\leq 2\left(1 - \frac{\dim \operatorname{Sym}^{n - k}(\mathbb{C}^{d})}{\dim \operatorname{Sym}^{n}(\mathbb{C}^{d})} \right) \\
&= 2\left(1 - \frac{\binom{n - k + d - 1}{d - 1}}{\binom{n + d - 1}{d - 1}} \right).
\end{align*}
Then
\begin{align*}
\frac{\binom{n - k + d - 1}{d - 1}}{\binom{n + d - 1}{d - 1}} &= \frac{(n - k + d - 1)!/((n - k)!(d - 1)!)}{(n + d - 1)!/(n!(d - 1)!)} \\
&= \frac{(n - k + d - 1)!/(n - k)!}{(n + d - 1)!/n!} \\
&= \prod _{i = 1} ^{d - 1} \frac{n - k + i}{n + i} \\
&\geq \left( \frac{n - k + 1}{n + 1} \right)^{d - 1} \\
&= \left(1 - \frac{k}{n + 1}\right)^{d - 1} \\
&\geq 1 - \frac{(d - 1)k}{n + 1} \\
&\geq 1 - \frac{dk}{n}
\end{align*}
using the binomial theorem in second-to-last line, so $\epsilon \leq 2dk/n$.
\end{proof}

\subsection{A multi-symmetric quantum de Finetti theorem}\label{sec:multisym}
One simple way to construct new K\"{a}hler manifolds from existing ones is by taking products. Standard results \cite{Wells_2008} show that the product of two compact K\"{a}hler manifolds is again a compact K\"{a}hler manifold in a canonical way. In particular, if $M_{1}$ and $M_{2}$ are compact K\"{a}hler manifolds each with lines bundles $\mathcal{L}_{1}$ and $\mathcal{L}_{2}$, then we have two maps
\begin{align*}
p_{1} : M_{1} \times M_{2} \to M_{1} &&  p_{2} : M_{1} \times M_{2} \to M_{2}
\end{align*}
which project onto the first and second coordinate respectively, and we also have the corresponding pullback bundles $p_{1}^{*}\mathcal{L}_{1}$ and $p_{2}^{*}\mathcal{L}_{2}$. The \textbf{external tensor product} of $\mathcal{L}_{1}$ and $\mathcal{L}_{2}$ is defined to be
\[ \mathcal{L}_{1} \boxtimes \mathcal{L}_{2} \coloneqq p_{1}^{*}\mathcal{L}_{1} \otimes p_{2}^{*}\mathcal{L}_{2}. \]

\begin{remark}
The external tensor product should not be confused with the standard tensor product of two line bundles over the same manifold. The relation between them is as follows: if $\mathcal{L}_{1}$ and $\mathcal{L}_{2}$ is line bundles over the same space $M$, then $\mathcal{L}_{1} \boxtimes \mathcal{L}_{2}$ is a line bundle over $M \times M$. The restriction of $\mathcal{L}_{1} \boxtimes \mathcal{L}_{2}$ to the diagonal subset $\{(m, m) : m \in M\} \cong M$ is isomorphic to $\mathcal{L}_{1} \otimes \mathcal{L}_{2}$ over $M$.
\end{remark}

We will not prove the following result, but the proof can be found in standard complex geometry texts \cite{Wells_2008}.

\begin{proposition} \label{prop:tensor-hilb}
Let $\mathcal{L}_{i}$ be a holomorphic line bundle over $M_{i}$ for $i \in [n]$, and assume that all $M_{i}$ are compact. Denote $\mathcal{L} = \boxtimes _{i = 1} ^{n} \mathcal{L}_{i}$. Then
\[ \mathcal{H}_{\mathcal{L}} \cong \bigotimes _{i = 1} ^{n} \mathcal{H}_{\mathcal{L}_{i}}. \]
\end{proposition}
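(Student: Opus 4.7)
The plan is to induct on $n$, reducing to the two-factor case via the associativity isomorphisms for $\boxtimes$ and $\otimes$, and then to construct an explicit canonical isomorphism for $n = 2$. Define
\[ \Phi : \mathcal{H}_{\mathcal{L}_1} \otimes \mathcal{H}_{\mathcal{L}_2} \to \mathcal{H}_{\mathcal{L}_1 \boxtimes \mathcal{L}_2}, \qquad \Phi(s_1 \otimes s_2)(x_1, x_2) := s_1(x_1) \otimes s_2(x_2), \]
using the canonical fiber identification $(\mathcal{L}_1 \boxtimes \mathcal{L}_2)_{(x_1, x_2)} = (\mathcal{L}_1)_{x_1} \otimes (\mathcal{L}_2)_{x_2}$. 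In product-compatible local trivializations, $\Phi$ is just the external product of two holomorphic functions, so it is well-defined.

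Injectivity is short. Pick a basis $\{e_i\}$ of $\mathcal{H}_{\mathcal{L}_1}$ and write an element of $\ker \Phi$ as $\sum_i e_i \otimes t_i$. Fixing any $x_2 \in M_2$, the expression $\sum_i t_i(x_2) \, e_i$ vanishes as a global section of $\mathcal{L}_1 \otimes (\mathcal{L}_2)_{x_2}$ on $M_1$; linear independence of the $e_i$ in $\mathcal{H}_{\mathcal{L}_1}$ forces $t_i(x_2) = 0$ for every $x_2$, so each $t_i = 0$.

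Surjectivity is the main work. The crucial input is that $\mathcal{H}_{\mathcal{L}_i}$ is finite-dimensional, by the Cartan--Serre finiteness theorem for coherent sheaves on compact complex manifolds. Given $s \in \mathcal{H}_{\mathcal{L}_1 \boxtimes \mathcal{L}_2}$ and a basis $\{e_i\}_{i = 1}^{N}$ of $\mathcal{H}_{\mathcal{L}_1}$, for each $x_2 \in M_2$ the slice $s(\cdot, x_2)$ is a holomorphic section of $\mathcal{L}_1$ with values in the one-dimensional space $(\mathcal{L}_2)_{x_2}$, hence can be expanded uniquely as $\sum_i e_i \otimes \phi_i(x_2)$ with $\phi_i(x_2) \in (\mathcal{L}_2)_{x_2}$. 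Then tautologically $s = \Phi(\sum_i e_i \otimes \phi_i)$, provided each $\phi_i$ is a holomorphic section of $\mathcal{L}_2$.

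The hardest step, and the main obstacle, is verifying the holomorphicity of the $\phi_i$. The strategy is to choose finitely many points $x_1^{(1)}, \dots, x_1^{(K)} \in M_1$ and local trivializations of $\mathcal{L}_1$ near each, such that the joint evaluation $\mathcal{H}_{\mathcal{L}_1} \to \bigoplus_k (\mathcal{L}_1)_{x_1^{(k)}}$ is injective (possible since global sections of $\mathcal{L}_1$ separate points outside the base locus, and only finitely many are needed by finite-dimensionality). Then $\phi_i(x_2)$ is expressible locally as a fixed linear combination of the evaluations $s(x_1^{(k)}, x_2)$, each of which is manifestly holomorphic in $x_2$ since $s$ is. The remaining bookkeeping is to check compatibility across different trivialization patches, so that the locally constructed $\phi_i$ glue into genuine global holomorphic sections of $\mathcal{L}_2$; uniqueness from the injectivity argument guarantees this. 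A slicker alternative is to invoke the Künneth formula for coherent sheaf cohomology on compact complex manifolds in degree $0$, which yields the isomorphism directly.
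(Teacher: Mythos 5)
Your argument is correct, but it is worth noting that the paper does not prove Proposition~\ref{prop:tensor-hilb} at all: it explicitly defers to standard complex geometry references (Wells), i.e.\ essentially the K\"unneth-type statement $H^{0}(M_{1}\times M_{2}, \mathcal{L}_{1}\boxtimes\mathcal{L}_{2}) \cong H^{0}(M_{1},\mathcal{L}_{1})\otimes H^{0}(M_{2},\mathcal{L}_{2})$ that you mention as the ``slicker alternative.'' What you supply instead is a self-contained elementary proof: the explicit map $\Phi$, injectivity by expanding along a basis and evaluating slices, and surjectivity via the slice expansion $s(\cdot,x_{2}) = \sum_{i} e_{i}\otimes\phi_{i}(x_{2})$ combined with finite-dimensionality of $\mathcal{H}_{\mathcal{L}_{1}}$ (Cartan--Serre) to pick finitely many points $x_{1}^{(k)}$ with injective joint evaluation. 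This buys a proof that avoids sheaf-cohomological machinery, at the cost of the holomorphicity bookkeeping you describe; the paper's citation buys brevity. Two small remarks on your write-up: the justification ``sections separate points outside the base locus'' is not the relevant fact (and not needed) --- what you actually use is that a nonzero global section is nonzero at some point, so a greedy choice of at most $\dim\mathcal{H}_{\mathcal{L}_{1}}$ points makes the joint evaluation injective; and the final ``compatibility across patches'' step is vacuous, since once a left inverse $L$ of the joint evaluation is fixed, $\phi_{i}(x_{2})$ is a single \emph{globally} defined linear combination (with constant coefficients, after fixing a basis vector of each fiber $(\mathcal{L}_{1})_{x_{1}^{(k)}}$) of the holomorphic sections $x_{2}\mapsto s(x_{1}^{(k)},x_{2})$ of $\mathcal{L}_{2}$, so holomorphicity is immediate and no gluing is required. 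With those simplifications your proof is complete, and the induction on $n$ via associativity of $\boxtimes$ is unproblematic.
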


\begin{proposition}
For $i \in [n]$ let $z_{i} \in M_{i}$ and $c_{\mathcal{L}_{i},z_{i}}$ be the corresponding coherent state for some line bundles $\mathcal{L}_{i}$ over $M_{i}$. Denote $M = \prod _{i = 1} ^{n} M_{i}$, $\mathcal{L} = \boxtimes _{i = 1} ^{n} \mathcal{L}_{i}$, and $z = (z_{1}, \dots, z_{n})$. Then $c_{\mathcal{L},z} = \bigotimes_{i = 1} ^{n} c_{\mathcal{L}_{i},z_{i}}$.
\end{proposition}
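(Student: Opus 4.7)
The plan is to mimic the proof of Proposition~\ref{prop:tensor-coh}, which handled the case of internal tensor products on a single manifold, and adapt the computation to the external tensor product situation. Concretely, I use the characterization of $c_{\mathcal{L},z}$ from Proposition~\ref{prop:coherent-state} as $e_{\mathcal{L},z}/T_{\mathcal{L}}(z)$, where $e_{\mathcal{L},z}(\rho) = Q_\rho(z) = h_{\mathcal{L}}(s_\rho(z))$ on the diagonal. It then suffices to prove (a) that $e_{\mathcal{L},z}$ factors as $\bigotimes_i e_{\mathcal{L}_i,z_i}$ under the Hilbert space identification of Proposition~\ref{prop:tensor-hilb}, and (b) that the normalizations $T_{\mathcal{L}}(z)$ and $\prod_i T_{\mathcal{L}_i}(z_i)$ agree; dividing gives the result.

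First I would note that by Proposition~\ref{prop:tensor-hilb}, $\mathcal{H}_{\mathcal{L}} \cong \bigotimes_i \mathcal{H}_{\mathcal{L}_i}$, and that under this identification a product of sections $\bigotimes_i s_i \in \bigotimes_i \mathcal{H}_{\mathcal{L}_i}$ corresponds to the section of $\boxtimes_i \mathcal{L}_i$ whose value at $z = (z_1,\dots,z_n)$ is $\bigotimes_i s_i(z_i)$ in the fiber $\bigotimes_i (\mathcal{L}_i)_{z_i}$. Since the Hermitian metric on an external tensor product bundle is, by construction, the tensor product of the pullback metrics, one has the pointwise identity
\[
h_{\mathcal{L}}\bigl(\textstyle\bigotimes_i s_i(z_i),\ \bigotimes_i t_i(z_i)\bigr) = \prod_i h_{\mathcal{L}_i}(s_i(z_i), t_i(z_i)).
\]
Applied with $s_i = t_i$, this gives $Q_{s}(z) = \prod_i Q_{s_i}(z_i)$ for any product section, and hence
\[
\braket{\textstyle\bigotimes_i s_i\ |\ e_{\mathcal{L},z}\ |\ \bigotimes_i s_i} = \prod_i \braket{s_i | e_{\mathcal{L}_i,z_i} | s_i} = \braket{\textstyle\bigotimes_i s_i\ \big|\ \bigotimes_i e_{\mathcal{L}_i,z_i}\ \big|\ \bigotimes_i s_i}.
\]
Because product vectors span $\bigotimes_i \mathcal{H}_{\mathcal{L}_i}$ and both sides are sesquilinear (and Hermitian) in $\bigotimes_i s_i$, the polarization identity upgrades this diagonal equality to the operator identity $e_{\mathcal{L},z} = \bigotimes_i e_{\mathcal{L}_i, z_i}$.

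For the normalization, I would take any orthonormal bases $\{s_i^{(\alpha_i)}\}$ of each $\mathcal{H}_{\mathcal{L}_i}$; Proposition~\ref{prop:tensor-hilb} (together with the assumption that the multiplication/external-tensor maps are isometries, as in Section~\ref{sec:main-sec}) implies that the products $\bigotimes_i s_i^{(\alpha_i)}$ form an orthonormal basis of $\mathcal{H}_\mathcal{L}$. Then
\[
T_{\mathcal{L}}(z) = \sum_{\alpha_1,\dots,\alpha_n} \bigl| \textstyle\bigotimes_i s_i^{(\alpha_i)}(z_i) \bigr|^2_{\mathcal{L}} = \prod_i \Bigl( \sum_{\alpha_i} |s_i^{(\alpha_i)}(z_i)|^2_{\mathcal{L}_i} \Bigr) = \prod_i T_{\mathcal{L}_i}(z_i),
\]
using the factorization of the Hermitian metric just established. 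Combining the two identities yields
\[
c_{\mathcal{L},z} = \frac{e_{\mathcal{L},z}}{T_{\mathcal{L}}(z)} = \frac{\bigotimes_i e_{\mathcal{L}_i,z_i}}{\prod_i T_{\mathcal{L}_i}(z_i)} = \bigotimes_{i=1}^n c_{\mathcal{L}_i,z_i},
\]
which is the claim. There is no real obstacle here; the only point that requires care is making sure one is using the external-tensor identifications of Proposition~\ref{prop:tensor-hilb} consistently and that the inner products chosen on the $\mathcal{H}_{\mathcal{L}_i}$ are compatible so that tensor products of orthonormal bases are orthonormal, both of which are built into the setup of Section~\ref{sec:main-sec}.
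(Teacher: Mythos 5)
Your proof is correct and follows essentially the same route as the paper's: evaluate the defining quadratic form on product sections, use that the Hermitian metric of the external tensor product factors over the factors, and conclude from the fact that product sections span $\mathcal{H}_{\mathcal{L}}$ (via Proposition~\ref{prop:tensor-hilb}). The only difference is that you explicitly track the normalizations by proving $T_{\mathcal{L}}(z) = \prod_i T_{\mathcal{L}_i}(z_i)$, a point the paper's proof glosses over, which is a welcome bit of extra care rather than a different argument.
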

\begin{proof}
Consider sections $s_{i} \in \mathcal{H}_{\mathcal{L}_{i}}$ for $i \in [n]$. Denote $s = \bigotimes _{i = 1} ^{n} s_{i}$. Then
\begin{align*}
\left|\left\langle s, \bigotimes _{i = 1} ^{n} c_{\mathcal{L}_{i},z_{i}} \right\rangle_{\mathcal{L}}\right|^{2} &= \prod _{i = 1} ^{n} |\langle s_{i}, c_{\mathcal{L}_{i},z_{i}} \rangle_{\mathcal{L}_{i}}|^{2} \\
&= \prod _{i = 1} ^{n} h_{\mathcal{L}_{i}}(s_{i}(z_{i}), s_{i}(z_{i})) \\
&= h_{\mathcal{L}}(s, z).
\end{align*}
The coherent states are uniquely characterized by the linear functional they define, and by Proposition~\ref{prop:tensor-hilb} the tensor product sections span $\mathcal{H}_{\mathcal{L}}$. Thus the equality follows.
\end{proof}

Some standard facts on group actions give the following proposition for homogeneous spaces.

\begin{proposition}
Suppose for $i \in [n]$ that $M_{i}$ and $\mathcal{L}_{i}$ are homogeneous for a $G_{i}$-action on $M_{i}$. Then $\prod _{i = 1} ^{n} M_{i}$ and $\boxtimes_{i = 1} ^{n} \mathcal{L}_{i}$ are homogeneous for the action of $\prod _{i = 1} ^{n} G_{i}$.
\end{proposition}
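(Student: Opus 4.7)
The plan is to verify each clause of the definition of ``homogeneous'' for the product data, noting that everything is built componentwise so each required property transfers from the factors.

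First I would check the group-theoretic and geometric conditions. The product $G \coloneqq \prod_{i=1}^{n} G_{i}$ is compact because a finite product of compact groups is compact. Define the action of $G$ on $M \coloneqq \prod_{i=1}^{n} M_{i}$ coordinatewise by $(g_{1}, \dots, g_{n}) \cdot (x_{1}, \dots, x_{n}) \coloneqq (g_{1} \cdot x_{1}, \dots, g_{n} \cdot x_{n})$. Transitivity is immediate: given $(x_{1}, \dots, x_{n})$ and $(y_{1}, \dots, y_{n})$, pick $g_{i} \in G_{i}$ with $g_{i}\cdot x_{i} = y_{i}$ for each $i$, which exists by transitivity of $G_{i}$ on $M_{i}$. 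That each $g \in G$ acts by Kähler automorphisms follows because the Kähler structure on $M$ is the product of the Kähler structures on the $M_{i}$ (the complex structure is $\bigoplus_{i} J_{i}$ on the tangent bundle, the metric is $\bigoplus_{i} g_{i}$, and the symplectic form is $\sum_{i} p_{i}^{*}\omega_{i}$), and each $G_{i}$ preserves the corresponding factor by hypothesis.

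Next I would handle the bundle action. Since $\boxtimes_{i} \mathcal{L}_{i} = \bigotimes_{i} p_{i}^{*}\mathcal{L}_{i}$ and each $G_{i}$ acts on $\mathcal{L}_{i}$ by bundle automorphisms covering its action on $M_{i}$, the group $G$ acts on each pullback $p_{i}^{*}\mathcal{L}_{i}$ covering the projection action on $M$, and hence diagonally on the tensor product. The Hermitian metric on $\boxtimes_{i}\mathcal{L}_{i}$ is by construction the tensor product of the pulled-back metrics $h_{\mathcal{L}_{i}}$; since each $h_{\mathcal{L}_{i}}$ is $G_{i}$-invariant, the tensor product metric is $G$-invariant.

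Finally, for the inner product on the quantized Hilbert space, I would invoke Proposition~\ref{prop:tensor-hilb} to identify $\mathcal{H}_{\mathcal{L}} \cong \bigotimes_{i=1}^{n}\mathcal{H}_{\mathcal{L}_{i}}$, with $G$ acting through the tensor product of the $G_{i}$-representations on the factors. The inner product on the tensor product is the tensor product of the inner products on the factors, and each factor is $G_{i}$-invariant by assumption, so the product inner product is $G$-invariant. This checks all four clauses of the definition, completing the proof.

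The argument is essentially a bookkeeping exercise; the only mild subtlety is ensuring that the identification in Proposition~\ref{prop:tensor-hilb} is $G$-equivariant and that the inner product used on $\mathcal{H}_{\boxtimes_{i}\mathcal{L}_{i}}$ is compatible with the tensor product inner product under this isomorphism. This is automatic from the construction of the isomorphism (which sends $s_{1} \otimes \cdots \otimes s_{n}$ to the section $(x_{1}, \dots, x_{n}) \mapsto s_{1}(x_{1}) \otimes \cdots \otimes s_{n}(x_{n})$ and is therefore manifestly equivariant), but it is the only nontrivial step and I would state it explicitly.
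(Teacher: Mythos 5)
Your proof is correct; the paper itself gives no argument for this proposition (it simply appeals to ``standard facts on group actions''), and your componentwise verification -- product compactness, transitivity, the product K\"ahler structure, the action on $\bigotimes_{i} p_{i}^{*}\mathcal{L}_{i}$, invariance of the tensor-product Hermitian metric, and $G$-equivariance of the isomorphism $\mathcal{H}_{\boxtimes_{i}\mathcal{L}_{i}} \cong \bigotimes_{i}\mathcal{H}_{\mathcal{L}_{i}}$ -- is exactly the routine argument being invoked. The only point worth adding is that the definition of homogeneity requires invariance of the inner product on $\mathcal{H}_{\mathcal{L}}$ for every positive tensor product $\bigotimes_{i}(p_{i}^{*}\mathcal{L}_{i})^{k_{i}}$ (these are the bundles actually used in Proposition~\ref{prop:multisym}), not just for $\boxtimes_{i}\mathcal{L}_{i}$; your argument extends verbatim to that case, but you should state it.
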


By combining this with the proof of the original quantum de Finetti theorem, we get the following ``multi-symmetric'' quantum de Finetti theorem. A two-sided version of the quantum de Finetti theorem was studied in \cite{Jakab_Solymos_Zimboras_2022}. A similar result to ours was given in \cite[Theorem~14]{Johnston_Lovitz_Vijayaraghavan_2023} based on the general theorem in \cite{Koenig_Mitchison_2009} with a proof closer to the original proof of the quantum de Finetti theorem. For qubits the coherent states can also be described in spherical coordinates on the Bloch sphere, which are known as Bloch coherent states. A version of this de Finetti theorem written in terms of Bloch coherent states was also shown in \cite{Lieb_1973}. For simplicity of notation we prove our result only when certain indices and parameters are equal, but one can easily generalize it to recover the results of \cite{Johnston_Lovitz_Vijayaraghavan_2023} or \cite{Lieb_1973}.

\begin{proposition}\label{prop:multisym}
Suppose $\rho$ is a quantum state on $m$ registers, where the $j$-th register contains $n$ qudits of dimension $d$, which is simultaneously invariant under permutations of the $n$ qudits within each register. Then the reduced state of $\rho$ on the first $k$ qudits within each register
is $\epsilon$-close to a mixture over coherent states, where
$\epsilon \leq 2\frac{mdk}{n}$.
\end{proposition}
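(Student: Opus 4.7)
The plan is to bootstrap from the single-register case (Proposition~\ref{prop:origdf}) using the external tensor product construction of the previous subsection. First, I would take the phase space to be $M = (\mathbb{C}P^{d-1})^{m}$ and the line bundle to be $\mathcal{L} = \boxtimes_{i=1}^{m} \mathcal{L}_{i}^{n}$, where each $\mathcal{L}_{i}$ is the dual of the canonical bundle on the $i$-th factor. By Proposition~\ref{prop:tensor-hilb} together with Proposition~\ref{prop:sym-subspace}, this identifies
\[ \mathcal{H}_{\mathcal{L}} \cong \bigotimes_{i=1}^{m} \operatorname{Sym}^{n}(\mathbb{C}^{d}), \]
which is exactly the subspace of $(\mathbb{C}^{d})^{\otimes mn}$ on which $\rho$ is supported by virtue of the permutation invariance within each register.

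Next I would observe that $(M, \mathcal{L})$ is homogeneous under the product group $U(d)^{m}$, using single-factor homogeneity from the proof of the original quantum de Finetti theorem together with the closure of homogeneity under products. Splitting $\mathcal{L} = \mathcal{L}_{A} \otimes \mathcal{L}_{B}$ with $\mathcal{L}_{A} = \boxtimes_{i} \mathcal{L}_{i}^{k}$ and $\mathcal{L}_{B} = \boxtimes_{i} \mathcal{L}_{i}^{n-k}$, I apply Theorem~\ref{thm:main-thm}. By the factorization proposition for coherent states on products, the coherent states on $\mathcal{H}_{\mathcal{L}_{A}}$ are tensor products $\bigotimes_{i} z_{i}^{\otimes k}$, so a mixture over them has exactly the form claimed.

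Finally, I would estimate the error via Corollary~\ref{cor:homog-err}. Because the dimensions and $G$-invariant probability measures all factor across registers, the Radon--Nikodym derivative is the constant
\[ \frac{d\mu_{\mathcal{L}_{B}}}{d\mu_{\mathcal{L}}} = \left( \frac{\binom{n-k+d-1}{d-1}}{\binom{n+d-1}{d-1}} \right)^{m}. \]
The single-register estimate already carried out in the proof of Proposition~\ref{prop:origdf} gives that each factor in the base of this expression is at least $1 - dk/n$. Combining with Bernoulli's inequality $(1-x)^{m} \geq 1 - mx$ for $x \in [0,1]$ then yields $\epsilon \leq 2mdk/n$; outside this range the bound is vacuous.

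The main obstacle is purely bookkeeping: verifying that the external tensor product construction genuinely makes $(M, \mathcal{L})$ homogeneous with invariant probability measure equal to the product of the per-factor invariant probability measures, so that the constant Radon--Nikodym derivative above is correct and Corollary~\ref{cor:homog-err} applies. No new analytic input is required beyond what is used in the single-register de Finetti theorem.
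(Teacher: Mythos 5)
Your proposal is correct and follows essentially the same route as the paper: apply Theorem~\ref{thm:main-thm} and Corollary~\ref{cor:homog-err} to $(\mathbb{C}P^{d-1})^{m}$ with the external tensor product line bundles, note that the relevant dimension ratio is the $m$-th power of the single-register ratio, and combine the bound $\geq 1 - dk/n$ from the proof of Proposition~\ref{prop:origdf} with Bernoulli's inequality to get $\epsilon \leq 2mdk/n$. The supporting facts you spell out (the identification $\mathcal{H}_{\mathcal{L}} \cong \bigotimes_{i}\operatorname{Sym}^{n}(\mathbb{C}^{d})$, product homogeneity, and factorization of coherent states) are exactly the propositions the paper establishes just before this result and uses implicitly.
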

\begin{proof}
We apply Theorem~\ref{thm:main-thm} and Corollary~\ref{cor:homog-err} to $(\mathbb{C}P^{d - 1})^{m}$ and the line bundles $\mathcal{L}_{2} \coloneqq \bigotimes _{i = 1} ^{n} (p_{i}^{*}\mathcal{L}_{i})^{n}$ and $\mathcal{L}_{1} \coloneqq \bigotimes _{i = 1} ^{n} (p_{i}^{*}\mathcal{L}_{i})^{n - k}$.  We get the statement for some $\epsilon$ satisfying
\begin{align*}
\epsilon &\leq 2\left(1 - \frac{\dim \mathcal{H}_{\mathcal{L}_{1}}}{\dim \mathcal{H}_{\mathcal{L}_{2}}}\right) \\
&= 2\left(1 - \frac{\dim (\operatorname{Sym}^{n - k}(\mathbb{C}^{d}))^{\otimes m}}{\dim (\operatorname{Sym}^{n}(\mathbb{C}^{d}))^{\otimes m}} \right) \\
&= 2\left(1 - \left(\frac{\dim \operatorname{Sym}^{n - k}(\mathbb{C}^{d})}{\dim \operatorname{Sym}^{n}(\mathbb{C}^{d})}\right)^{m} \right).
\end{align*}
From the proof of \ref{prop:origdf}, we then get
\begin{align*}
\left(\frac{\dim \operatorname{Sym}^{n - k}(\mathbb{C}^{d})}{\dim \operatorname{Sym}^{n}(\mathbb{C}^{d})}\right)^{m} &\geq \left(1 - \frac{dk}{n}\right)^{m} \\
&\geq 1 - \frac{mdk}{n}
\end{align*}
so $\epsilon \leq 2\frac{mdk}{n}$.
\end{proof}

\subsubsection{Integrality gaps for ncSoS relaxations of \textsc{Quantum Max-Cut} and \textsc{Quantum Max-\texorpdfstring{$d$}{d}-Cut}}
As an application of the multi-symmetric de Finetti theorem, we will show that the quantum Heisenberg model on a sufficiently symmetric graph approximates the classical Heisenberg model. From this, one can construct integrality gaps for SDP relaxations of \textsc{Quantum Max-Cut} and \textsc{Quantum Max-$d$-Cut} from integrality gaps for relaxations of the limiting classical problems. The technique will in fact also give some hardness of approximation results.

Previously, it was known that the integrality gap for \textsc{Quantum Max-Cut} is a particular constant related to the integrality gap for the rank-3 Grothendieck problem \cite{Hwang_Neeman_Parekh_Thompson_Wright_2023}, but only assuming an isoperimetric-type inequality which is still conjectural. That inequality implies that a specific family of instances gives the integrality gaps for the level-1 ncSoS relaxation of \textsc{Quantum Max-Cut}. The result shown here does not make such an assumption, but also does not determine the precise numerical value of the integrality gap.

\begin{proposition}\label{prop:maxdcut-approx}
Let $G$ be a weighted graph and $G' = K_{k} \times G$ be the cartesian product of $K_{k}$ with $G$. Then the maximum energy of the \textsc{Quantum Max-$d$-Cut} Hamiltonian on $G'$ is contained in the interval $[C, C + O(1/k)]$, where $C$ is the optimal solution of the rank-$(2d - 1)$ Grothendieck problem on $G$.
\end{proposition}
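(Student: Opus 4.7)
The plan is to prove matching upper and lower bounds on the maximum energy of $H_{G'}$ and identify both with the classical rank-$(2d-1)$ Grothendieck value $C$ on $G$. The lower bound will come from evaluating $H_{G'}$ on an explicit product coherent state, while the upper bound will use the multi-symmetric de Finetti theorem (Proposition~\ref{prop:multisym}) to show that the optimization over all states on $G'$ is, up to $O(1/k)$ error, the same as an optimization over mixtures of product coherent states, which is classically the Grothendieck problem.

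For the lower bound, I would start with any near-optimal assignment $\{x_v\}_{v \in V(G)}$ to the rank-$(2d-1)$ Grothendieck problem and translate each $x_v$ into a pure state $|z_v\rangle \in \mathbb{C}^{d}$ via the standard embedding of $\mathbb{C}P^{d-1}$ into Euclidean space of real dimension $2d-1$ (for $d=2$ this is the Bloch sphere $S^{2} \subseteq \mathbb{R}^{3}$). The trial state $\bigotimes_{v \in V(G)} |z_v\rangle^{\otimes k}$ is symmetric on each $k$-qudit cluster, so every SWAP associated to a $K_k$-edge acts as $+1$ and these intra-cluster terms contribute a fixed state-independent value. The inter-cluster terms sum $k$ identical copies of the per-edge expectation of the form $(1 - |\langle z_v | z_w\rangle|^{2})/c$, which after the normalization of $H_{G'}$ reproduces the classical Grothendieck objective, giving total energy at least $C$.

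For the upper bound, let $\rho^{*}$ be a maximum-energy state. Since $H_{G'}$ commutes with the product permutation group $S = (S_k)^{|V(G)|}$ acting on the $k$ qudits inside each cluster (each $S_k$ being an automorphism of the corresponding $K_k$-subgraph of $G'$), I would replace $\rho^{*}$ by its $S$-symmetrization, preserving the energy. The 2-local form of $H_{G'}$ together with this symmetry means the energy is a function of the reduced state on only two qudits per cluster: intra-cluster terms depend on the reduction to two qudits inside a single cluster, while inter-cluster terms depend on one qudit from each of two clusters. Applying Proposition~\ref{prop:multisym} with $m = |V(G)|$ registers of $k$ qudits each and reducing to two qudits per register, this reduced state is $O(|V(G)|d/k)$-close in trace distance to a mixture $\int \bigotimes_{v} |z_v\rangle\langle z_v|^{\otimes 2}\, d\mu$ of product coherent states. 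Evaluating $H_{G'}$ against this mixture, the intra-cluster SWAPs again act as $+1$ on each product coherent state and contribute the same state-independent value as in the lower bound, while the inter-cluster expectations give the classical Grothendieck objective; by convexity the mixture's value is at most its value at a single deterministic assignment $\{z_v\}$, which is at most $C$. The trace-distance error times the per-edge operator norms, summed over edges, produces the $O(1/k)$ slack.

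The main obstacle, in my view, is the precise identification of $|\langle z_v | z_w\rangle|^{2}$ with the rank-$(2d-1)$ Grothendieck inner products via an explicit Bloch-type embedding $\mathbb{C}P^{d-1} \hookrightarrow \mathbb{R}^{2d-1}$, together with bookkeeping the normalization of $H_{G'}$ so that the intra-cluster contribution (which scales like $|V(G)|\binom{k}{2}$ times a per-edge value) cancels cleanly on both sides and the inter-cluster contribution matches $C$ in the limit. Once these are settled, the remaining analysis is a routine application of convexity and the de Finetti trace-distance bound from Proposition~\ref{prop:multisym}.
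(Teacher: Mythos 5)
There are two genuine gaps, and both trace back to the structure of $G'$ and how the symmetric subspace enters. First, the graph you analyze is not the one the proposition (as used in the paper) is about: despite the phrase ``cartesian product,'' the construction in the remark and proof has \emph{no} intra-cloud edges, and each edge $(v,w)\in E(G)$ is replaced by the complete bipartite graph between the two clouds with weight rescaled by $1/k^{2}$. Your reading --- intra-cluster $K_{k}$ edges plus $k$ parallel copies of $G$ --- changes the problem: the intra-cluster terms are \emph{not} a state-independent constant (a term of the form $1-\sum_\alpha X_\alpha\otimes X_\alpha$ vanishes on the symmetric subspace but is strictly positive on, e.g., singlet-type pairs), so a maximizing state would generically exploit intra-cluster entanglement and the claimed cancellation between your upper and lower bounds does not occur. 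The complete-bipartite, $1/k^{2}$-weighted structure is exactly what makes the paper's key identity work: summing each edge term over all $k^{2}$ pairs $(i,j)$ yields $H_{G'}=P_{\mathcal{H}'}(H_{G}\otimes\operatorname{id})P_{\mathcal{H}'}$ with $\mathcal{H}'=(\operatorname{Sym}^{k}(\mathbb{C}^{d}))^{\otimes n}$, which simultaneously (i) lets one take the maximizing state supported on $\mathcal{H}'$ and (ii) reduces the energy to a function of the one-qudit-per-cloud marginal.

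Second, your symmetrization step does not close the gap that identity closes. Averaging $\rho^{*}$ over $(S_{k})^{|V(G)|}$ produces a permutation-\emph{invariant} state, not a state supported on $(\operatorname{Sym}^{k}(\mathbb{C}^{d}))^{\otimes |V(G)|}$; Proposition~\ref{prop:multisym} is proved via Theorem~\ref{thm:main-thm} for states on $\mathcal{H}_{\mathcal{L}_{1}\otimes\mathcal{L}_{2}}$, i.e.\ it needs support on the symmetric subspaces, and merely invariant states (e.g.\ tensor products of singlets inside a cloud) are not approximated by mixtures of coherent \emph{pure} product states at the two-qudits-per-register level. So as written, your de Finetti application is not justified, and this is precisely the missing idea supplied by $H_{G'}=P_{\mathcal{H}'}(H_{G}\otimes\operatorname{id})P_{\mathcal{H}'}$ in the paper's proof. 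Your lower bound via the trial state $\bigotimes_{v}\ket{z_{v}}^{\otimes k}$ matches the paper, and your concern about identifying $|\langle z_{v}|z_{w}\rangle|^{2}$ with rank-$(2d-1)$ Grothendieck inner products is fair (the paper is terse there too), but the upper-bound argument needs to be rebuilt around the compression identity rather than around symmetrizing the optimal state.
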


\begin{remark}
One can visualize $G'$ as being $G$ where each vertex $v$ has been replaced by a ``cloud'' $\{v\} \times [k]$ of $k$ vertices, and an edge connects $(v, i)$ to $(w, j)$ with weight $w/k^{2}$ whenever $(v, w) \in E(G)$ has weight $w$.
\end{remark}

\begin{proof}
Let $H_{G'}$ be the \textsc{Quantum Max-$d$-Cut} Hamiltonian for $G'$. Since $H_{G'}$ is invariant under the $(S_{k})^{n}$ action where the $i$-th permutation permutes the $i$-th cloud, it is supported on the subspace $\mathcal{H}' \coloneqq (\operatorname{Sym}^{k}(\mathbb{C}^{d}))^{\otimes n} \subseteq (\mathbb{C}^{d})^{\otimes kn}$. Moreover, let $H_{v, w,1} \coloneqq 1 - \sum _{\alpha = 1} ^{d^{2}} (X_{\alpha})_{v,1}(X_{\alpha})_{w,1}$ be the term corresponding to an edge $((v, 1), (w, 1))$. Then the sum of terms corresponding to $\{((v, i), (w, j)) : i, j \in [k]\}$ is
\begin{align*}
\sum _{\sigma, \sigma', \tau, \tau' \in S_{k}} \sigma'_{v}\tau'_{w}H_{v,w,1}\sigma_{v}^{*}\tau_{w}^{*} &= \sum _{g \in (S_{k})^{n}} gH_{v,w,1}g^{*} \\
&= P_{\mathcal{H}'}H_{v,w,1}P_{\mathcal{H}'}.
\end{align*}
By summing over edges in $G$, if we let $H_{G}$ denote the \textsc{Quantum Max-$d$-Cut} Hamiltonian for $G$ then we have $H_{G'} = P_{\mathcal{H}'}(H_{G} \otimes \operatorname{id}_{(\mathbb{C}^{d})^{\otimes (k - 1)n}})P_{\mathcal{H}'}$.

Letting $x \in (S^{2d - 1})^{n}$ be the optimal solution of the rank-$(2d - 1)$ Grothendieck problem over $G$ and $z \in (\mathbb{C}P^{d})^{n}$ its image under the quotient map, we have that $\braket{c_{\mathcal{L}^{k},z} | H_{G'} | c_{\mathcal{L}^{k},z}}$ is equal to the value of $x$. Thus the largest eigenvalue of $H_{G'}$ is at least $C$.

For the other inequality, let $\rho$ be a state with maximal energy. Then $\rho$ is supported on $\mathcal{H}'$, so by Proposition~\ref{prop:multisym} the reduced state $\rho_{1}$ on the set of subsystems $V(G) \times \{1\}$ is $\epsilon$-close in trace distance to a mixture $\rho'$ over coherent states. Thus
\begin{align*}
|\operatorname{Tr}(H_{G'}\rho) - \operatorname{Tr}(H_{G}\rho')| &= |\operatorname{Tr}(P_{\mathcal{H}'}(H_{G}\otimes \operatorname{id}_{(\mathbb{C}^{d})^{\otimes (k - 1)n}})P_{\mathcal{H}'}\rho) - \operatorname{Tr}(H_{G}\rho')| \\
&= |\operatorname{Tr}(H_{G}\rho_{1}) - \operatorname{Tr}(H_{G}\rho')| \\
&\leq |H_{G}|_{\text{op}} |\rho_{1} - \rho'| \\
&\leq |H_{G}|_{\text{op}} \epsilon.
\end{align*}
Lastly, by the argument in the previous paragraph $\operatorname{Tr}(H_{G}\rho')$ is equal to the expected objective function value of the rank-$(2d - 1)$ Grothendieck problem with respect to the Husimi Q-distribution of $\rho$. Thus there is some feasible solution $x$ whose value is within $\epsilon$ of $\operatorname{Tr}(H_{G'}\rho)$, and the optimal solution also satisfies this.
\end{proof}

\begin{theorem}
The integrality gap for the level-1 ncSoS relaxation of \textsc{Quantum Max-$d$-Cut} is equal to the integrality gap for the level-1 SoS relaxation of the rank-$(2d - 1)$ Grothendieck problem.
\end{theorem}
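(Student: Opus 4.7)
The plan is to combine Proposition~\ref{prop:maxdcut-approx} with a direct identification of the two SDPs at level 1, and then use the $K_k$-blowup to convert Grothendieck instances into Quantum Max-$d$-Cut instances whose integrality gap realizes the Grothendieck one asymptotically. First I would establish the key structural claim that for every weighted graph $H$, the level-1 ncSoS relaxation of \textsc{Quantum Max-$d$-Cut} on $H$ coincides with the level-1 SoS relaxation of the rank-$(2d-1)$ Grothendieck problem on $H$. Concretely, a level-1 ncSoS pseudo-state is determined by the two-site pseudo-moments $\widetilde{\mathbb{E}}[(X_\alpha)_v (X_\beta)_w]$; after using the on-site normalization constraints and the symmetries of the Max-$d$-Cut Hamiltonian, the contracted quantity $\sum_\alpha \widetilde{\mathbb{E}}[(X_\alpha)_v(X_\alpha)_w]$ that enters the objective can be repackaged as the Gram matrix entry of unit vectors on $S^{2d-1}$, and the PSD moment-matrix condition matches the Gram PSD condition. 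Write $\mathrm{SDP}(H)$ for this common value, and let $Q(H)$ and $C(H)$ denote the true Quantum Max-$d$-Cut and rank-$(2d-1)$ Grothendieck optima respectively.

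Next I would record two easy observations. First, $Q(H) \geq C(H)$ for every graph $H$: given a Grothendieck-feasible $z = (z_1,\dots,z_n)$, evaluating the Max-$d$-Cut Hamiltonian on the product coherent state $\bigotimes_v c_{\mathcal{L},z_v}$ reproduces $C(H)$, as already used in the proof of Proposition~\ref{prop:maxdcut-approx}. Second, the $K_k$-blowup with edge weights rescaled by $1/k^2$ preserves both $\mathrm{SDP}$ and $C$: since both optimization problems on $G' = K_k \times G$ are convex and invariant under the cloud-permutation action of $(S_k)^n$, any feasible solution can be symmetrized to one that is constant within each cloud, which is in value-preserving bijection with feasible solutions on $G$.

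Granted these, both integrality gap inequalities follow. For $\mathrm{gap}(\text{ncSoS Max-}d\text{-Cut}) \leq \mathrm{gap}(\text{SoS rank-}(2d{-}1)\text{ Grothendieck})$, for any graph $H$ we have
\[
\frac{\mathrm{SDP}(H)}{Q(H)} \;\leq\; \frac{\mathrm{SDP}(H)}{C(H)},
\]
and taking suprema bounds one gap by the other. For the reverse direction, fix $\varepsilon > 0$ and a graph $G$ whose Grothendieck integrality gap is within $\varepsilon$ of its supremum; applying Proposition~\ref{prop:maxdcut-approx} to $G' = K_k \times G$ and using the blowup invariance of $\mathrm{SDP}$ yields
\[
\frac{\mathrm{SDP}(G')}{Q(G')} \;=\; \frac{\mathrm{SDP}(G)}{C(G) + O(1/k)} \;\xrightarrow[k\to\infty]{}\; \frac{\mathrm{SDP}(G)}{C(G)},
\]
which exceeds the sup minus $\varepsilon$. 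Taking suprema and letting $\varepsilon \to 0$ completes the argument.

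The main obstacle, and the place where real work is required, is the first step: writing down the level-1 ncSoS SDP for Max-$d$-Cut explicitly and checking that, after stripping away trivial constants and on-site normalizations, it coincides with the level-1 SoS SDP for the rank-$(2d-1)$ Grothendieck problem. In particular one has to explain why the effective dimension for the Grothendieck vectors is $2d-1$ rather than the naive $d^2$ arising from a basis of local Hermitian operators, which should fall out of the specific SWAP-type structure of the Max-$d$-Cut Hamiltonian together with the $U(1)$-gauge symmetry of the coherent states. Once this identification is set up, the rest of the argument is the routine blowup reduction sketched above.
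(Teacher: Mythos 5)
The linchpin of your argument -- that for \emph{every} weighted graph $H$ the level-1 ncSoS value of \textsc{Quantum Max-$d$-Cut} on $H$ coincides with the level-1 SoS value of the rank-$(2d-1)$ Grothendieck problem on $H$ -- is false, and you correctly identified it as the place where the real work lies. Take $d = 2$ and $H$ a single edge with the standard edge term $\tfrac14(1 - \sum_{\alpha}(X_\alpha)_u(X_\alpha)_w)$. The level-1 noncommutative moment matrix has a $3\times 3$ block per vertex, and the cross block between $u$ and $w$ can be taken to be $-I_3$ (the $6\times 6$ matrix with identity diagonal blocks and cross blocks $-I_3$ is PSD and consistent with the on-site Pauli relations), so the relaxation attains $\tfrac14(1+3)=1$, the true singlet energy; the classical basic SDP for the same edge is capped at $\tfrac14(1+1)=\tfrac12$. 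More generally, nothing at level 1 forces the contracted correlation $\sum_\alpha \widetilde{\mathbb{E}}[(X_\alpha)_u(X_\alpha)_w]$ into $[-1,1]$: it can reach $-(d^2-1)$, which is precisely the level-1 quantum advantage that the theorem asserts disappears only at the level of worst-case ratios. So there is no common quantity $\mathrm{SDP}(H)$, and your inequality $\mathrm{SDP}(H)/Q(H)\le \mathrm{SDP}(H)/C(H)$ has no content as written; the $2d-1$ versus $d^2-1$ issue you flagged is not a bookkeeping detail but the heart of the matter (and for $d>2$ the coherent-state correlations sweep out $\mathbb{C}P^{d-1}$ inside $\mathbb{R}^{d^2-1}$, not a sphere, so no pointwise repackaging of the moment matrix can work).

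Your second structural claim -- blowup-invariance of the noncommutative SDP by symmetrization -- is also not established by the bijection you describe: a cloud-symmetric level-1 pseudo-moment matrix on $G' = K_k\times G$ carries intra-cloud blocks with no counterpart on $G$, and positive semidefiniteness of the enlarged matrix genuinely couples these to the cross-cloud correlations, so cloud-symmetric solutions are not in value-preserving bijection with solutions on $G$. The paper's proof diverges from yours at exactly this point: it never compares the two relaxations on the same graph, but instead proves $SDP_1(G) = ncSDP_1(G')$ -- classical value on $G$ equals quantum value on the \emph{blowup} -- using invariance under $O(2d)$ and $U(d)\times S_k$ together with Schur's lemma to force both optimal moment matrices into the form $I\otimes M$ with the same $M$ and the same objective; it then obtains one gap inequality from Proposition~\ref{prop:maxdcut-approx} ($\lambda_0(G')$ lies within $O(1/k)$ of $OPT(G)$, which rests on the multi-symmetric de Finetti theorem) and the other from evaluating $H_G$ on a product coherent state. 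Note that if both of your structural claims were true, the blowup and the de Finetti input would be superfluous; the actual content of the theorem is that blowing up collapses the level-1 quantum SDP, and simultaneously the quantum optimum, down to the classical values, and that is the phenomenon your reduction skips. Your outer limiting argument (near-optimal Grothendieck instance, then $k\to\infty$) is fine and parallels the paper's, but it cannot be salvaged without replacing the two structural claims by statements of the paper's kind.
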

\begin{proof}
Fix an integer $d \geq 2$. For $G$ a weighted graph, we denote by $\lambda_{0}(G)$ the maximum energy of the \textsc{Quantum Max-$d$-Cut} Hamiltonian of $G$, $ncSDP_{1}(G)$ the value of the optimal solution to the level-1 ncSoS SDP relaxation of \textsc{Quantum Max-$d$-Cut}, $OPT(G)$ the optimal solution of the rank-$(2d - 1)$ Grothendieck problem, and $SDP_{1}(G)$ the value of the optimal solution to the level-1 classical SoS SDP relaxation of the rank-$(2d - 1)$ Grothendieck problem.

Let $G$ be a weighted graph, $k \geq 0$ an integer, and $G' = K_{k} \times G$ be the cartesian product of $K_{k}$ with $G$, where all the weighted have been rescaled by $1/k^{2}$. We first claim that $SDP_{1}(G) = ncSDP_{1}(G')$. To see this, note that the level-1 SDP for the rank-$(2d - 1)$ Grothendieck problem is invariant under an $O(2d)$-action which simultaneously rotates the vector associated to each vertex. By Schur's lemma, the SDP solution is of the form $I_{2d} \otimes M$ for some matrix $M$. Similarly, the level-1 SDP for \textsc{Quantum Max-$d$-Cut} is invariant under a $U(d) \times S_{k}$-action where $U(d)$ acts on the Lie algebra $\mathfrak{u}(d)$ by the coadjoint action and $S_{k}$ acts by permuting each ``cloud.'' Thus the SDP solution is of the form $I_{(d^{2} - 1)k} \otimes M'$ for some matrix $M'$. For both SDPs the objective functions are of the form $C - \operatorname{Tr}(AM)$ and $C - \operatorname{Tr}(AM')$ respectively, where $A$ is the weighted adjacency matrix of $G$ and $C$ is the sum of all edge weights. These objective functions are equal, so in fact $M = M'$.

By Proposition~\ref{prop:maxdcut-approx}, we have that $|\lambda_{0}(G') - OPT(G)| \leq O(1/k)$. By taking $k \to \infty$ we see that the ratio $\lambda_{0}(G')/ncSDP_{1}(G')$ converges to $OPT(G)/SDP_{1}(G)$. Since the integrality gap of the SDP is an infimum over all instances and since this is true for all $G$, it follows that the integrality gap for \textsc{Quantum Max-$d$-Cut} is at most that of the rank-$(2d - 1)$ Grothendieck problem. To show the inequality in the other direction, let $G$ be a graph and $x \in (S^{2d - 1})^{n}$ be the optimal solution to the rank-$k$ Grothendieck problem on $G$. Let $z \in (\mathbb{C}P^{d})^{n}$ be the image of $x$ under the quotient and $c_{z}$ be the coherent state centered at $z$ for $\bigotimes \mathcal{L}_{i}$. Then $\braket{c_{z}|H_{G}|c_{z}}$ is equal to $OPT(G)$, so
\[ \frac{\lambda_{0}(G)}{ncSDP_{1}(G)} \geq \frac{OPT(G)}{SDP_{1}(G)}. \]
\end{proof}

The approximation of the rank-$(2d - 1)$ Grothendieck problem by \textsc{Quantum Max-$d$-Cut} also allows us to show hardness, assuming classical hardness of the Grothendieck problem.

\begin{proposition}
Suppose that it is $\mathsf{NP}$-hard to approximate the rank-$(2d - 1)$ Grothendieck problem to within $\alpha_{0}$. Then it is also $\mathsf{NP}$-hard to approximate \textsc{Quantum Max-$d$-Cut} to within any $\alpha > \alpha_{0}$.
\end{proposition}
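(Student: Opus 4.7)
The plan is to establish a polynomial-time gap-preserving reduction from the rank-$(2d-1)$ Grothendieck problem to \textsc{Quantum Max-$d$-Cut}, directly leveraging the approximation from Proposition~\ref{prop:maxdcut-approx}. Given an instance $G$ of the Grothendieck promise problem --- namely a weighted graph together with a threshold $c > 0$ such that one is promised either $OPT(G) \geq c$ or $OPT(G) \leq \alpha_{0} c$ --- the reduction outputs the \textsc{Quantum Max-$d$-Cut} instance on $G' \coloneqq K_{k} \times G$ (with edge weights rescaled by $1/k^{2}$ as in Proposition~\ref{prop:maxdcut-approx}) for a parameter $k$ to be chosen.

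By Proposition~\ref{prop:maxdcut-approx} we have $\lambda_{0}(G') \in [OPT(G),\, OPT(G) + \eta]$, where tracing through its proof (which invokes Proposition~\ref{prop:multisym} with $|V(G)|$ registers of $k$ qudits each, reducing to one qudit per register, together with the bound $|H_{G}|_{\text{op}} = O(|E(G)|)$) gives $\eta = O(|V(G)|\,|E(G)|\,d/k)$. Consequently, yes instances yield $\lambda_{0}(G') \geq c$ while no instances yield $\lambda_{0}(G') \leq \alpha_{0} c + \eta$. Fixing any $\alpha > \alpha_{0}$, one chooses $k$ polynomially large in the input size, in $d$, and in $1/(\alpha - \alpha_{0})$ to force $\eta < (\alpha - \alpha_{0}) c$, so that $\alpha_{0} c + \eta < \alpha c$. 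A purported polynomial-time $\alpha$-approximation algorithm $A$ for \textsc{Quantum Max-$d$-Cut} would, on input $G'$, return a value $v \geq \alpha \lambda_{0}(G')$; comparing $v$ against the threshold $\alpha c$ then distinguishes the two Grothendieck cases (since $v \geq \alpha c$ in the yes case and $v \leq \lambda_{0}(G') < \alpha c$ in the no case), contradicting the assumed NP-hardness of Grothendieck.

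The only real obstacle is to confirm that $k$ is genuinely polynomial, which in turn requires $c$ to be at least inverse-polynomial in the input size. This is essentially automatic: the Grothendieck value is bounded above by the total edge weight $W$ of $G$, and standard reductions for Grothendieck-type problems witness hardness with $c$ a constant fraction of $W$, which is itself polynomial in the input. Under such a normalization $\eta/c = O(|V(G)|\,d/k)$, so a polynomial $k$ suffices; the resulting graph $G'$ has $k \cdot |V(G)|$ vertices and $k^{2}\cdot|E(G)|$ edges, and the entire reduction runs in polynomial time, completing the argument.
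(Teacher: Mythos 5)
Your proposal is correct and follows essentially the same route as the paper: reduce via the cloud construction $G' = K_{k} \times G$ of Proposition~\ref{prop:maxdcut-approx}, bound the additive error via Proposition~\ref{prop:multisym}, and take the cloud size large enough that this error is negligible against the promise gap. The only cosmetic difference is that the paper fixes the cloud size as $An$ with $A$ a sufficiently large constant (so the error is $2d/A$, implicitly assuming normalized instances), whereas you take $k$ polynomially large and explicitly track the $|H_{G}|_{\text{op}}$ factor and the normalization of the threshold $c$ --- a slightly more careful bookkeeping of the same argument.
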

\begin{proof}
Let $A \geq 1$ be an arbitrary integer. Consider the reduction from the rank-$(2d - 1)$ Grothendieck problem to \textsc{Quantum Max-$d$-Cut} which takes a graph $G$ on $n$ vertices and constructs a new graph $G'$ where each vertex is replaced with a cloud of size $An$ vertices, as defined in Proposition~\ref{prop:maxdcut-approx}. Then by Proposition~\ref{prop:multisym} and Proposition~\ref{prop:maxdcut-approx}, the maximum energy of $G'$ is at least $C$ and at most
\[ C + \frac{2nd}{An} = C + \frac{2d}{A}. \]
For any fixed $A$, the existence of this reduction shows that it is $\mathsf{NP}$-hard to approximate \textsc{Quantum Max-$d$-Cut} to within $\alpha + \frac{2d}{A}$. Taking $A$ to a sufficiently large constant then proves the statement.
\end{proof}


\subsection{The de Finetti theorem for representations of \texorpdfstring{$U(n)$}{U(n)}}
In \cite{Christandl_Koenig_Mitchison_Renner_2007}, a de Finetti theorem for irreducible representations of the unitary group $U(n)$ is shown. To fit this result into the framework of geometric quantization we will use the Borel-Weil theorem \cite{Baston_Eastwood_2016, Timchenko_2014, Snow} from geometric representation theory, which realizes the irreducible representations of $U(n)$ as spaces of global sections of holomorphic line bundles. We will not prove the theorem here but give some intuition relating it to the original quantum de Finetti theorem.

\begin{definition}
The \textbf{flag variety} $F_{n}$ (in type A) is the quotient $GL_{n}(\mathbb{C})/B_{n}(\mathbb{C})$, where $B_{n}$ is the group of upper-triangular matrices.
\end{definition}

\begin{proposition}
The flag variety is a smooth projective variety and K\"{a}hler manifold with a transitive action by $U(n)$. As a manifold, it can also be realized as $U(n)/U(1)^{n}$, where $U(1)^{n}$ is the group of diagonal unitary matrices.
\end{proposition}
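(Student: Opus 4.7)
The plan is to verify each of the four claims in turn, drawing on standard Lie-theoretic facts. Throughout I will identify $F_n$ with the space of complete flags $0 = V_0 \subset V_1 \subset \cdots \subset V_n = \mathbb{C}^n$ with $\dim V_i = i$, which is in bijection with $GL_n(\mathbb{C})/B_n(\mathbb{C})$ via $g \mapsto (gV_i^{\text{std}})$ where $V_i^{\text{std}}$ is the span of the first $i$ standard basis vectors and $B_n$ is exactly the stabilizer of the standard flag.

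First I would establish smoothness and the manifold structure. Since $B_n(\mathbb{C})$ is a closed (algebraic) subgroup of the complex Lie group $GL_n(\mathbb{C})$, the quotient $GL_n(\mathbb{C})/B_n(\mathbb{C})$ carries a canonical smooth (in fact, complex) manifold structure, with $GL_n(\mathbb{C}) \to GL_n(\mathbb{C})/B_n(\mathbb{C})$ a principal $B_n(\mathbb{C})$-bundle. For projectivity, I would construct a closed embedding of $F_n$ into a product of Grassmannians $\prod_{i=1}^{n-1} \mathrm{Gr}(i, n)$ by sending a flag to the tuple of its subspaces; each Grassmannian embeds into a projective space by the Pl\"ucker embedding, and the product of projective spaces embeds into a single projective space via the Segre embedding. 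The image is cut out by the incidence relations $V_i \subseteq V_{i+1}$, which are polynomial (Pl\"ucker) conditions, hence closed. This exhibits $F_n$ as a smooth projective variety, and the restriction of a Fubini--Study metric from the ambient projective space gives a K\"ahler structure.

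Next, for transitivity of the $U(n)$-action, I would invoke (a form of) the Iwasawa or $QR$ decomposition: every $g \in GL_n(\mathbb{C})$ factors as $g = u \cdot b$ with $u \in U(n)$ and $b \in B_n(\mathbb{C})$, obtained by Gram--Schmidt orthonormalization of the columns of $g$. Thus every coset $gB_n(\mathbb{C})$ has a unitary representative, so $U(n)$ acts transitively on $GL_n(\mathbb{C})/B_n(\mathbb{C})$. For the final identification, the stabilizer in $U(n)$ of the standard flag is $U(n) \cap B_n(\mathbb{C})$: a unitary matrix which is also upper triangular must be diagonal (since $u^{-1} = u^*$ forces the strictly upper triangular part to vanish), and a diagonal unitary matrix is exactly an element of $U(1)^n$. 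The orbit--stabilizer theorem then gives a $U(n)$-equivariant diffeomorphism $U(n)/U(1)^n \to F_n$.

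I don't anticipate a serious obstacle: every step is a standard consequence of the $QR$ decomposition together with the Pl\"ucker/Segre embedding machinery. The only part that requires a little care is checking that the bijection $U(n)/U(1)^n \to GL_n(\mathbb{C})/B_n(\mathbb{C})$ is not merely set-theoretic but a diffeomorphism, which follows because both sides are homogeneous spaces of compact or complex Lie groups and the map is smooth and equivariant with closed stabilizer; alternatively one can note that $U(n) \to F_n$ is a proper submersion between manifolds of equal dimension modulo the stabilizer, hence descends to a diffeomorphism.
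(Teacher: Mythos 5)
Your proof is correct. The paper actually states this proposition without proof, treating it as a standard fact from geometric representation theory (the surrounding discussion simply cites references for Borel--Weil), so there is no argument in the paper to compare against; your route --- the complex-manifold structure of $GL_{n}(\mathbb{C})/B_{n}(\mathbb{C})$ as a quotient by a closed subgroup, projectivity via the incidence embedding into a product of Grassmannians followed by the Pl\"ucker and Segre embeddings, the K\"ahler structure by restricting a Fubini--Study metric to the smooth projective image, transitivity of $U(n)$ from the $QR$/Iwasawa decomposition, and the stabilizer computation $U(n) \cap B_{n}(\mathbb{C}) = U(1)^{n}$ --- is precisely the standard argument the paper implicitly relies on. Your closing remark about upgrading the equivariant bijection $U(n)/U(1)^{n} \to F_{n}$ from set-theoretic to a diffeomorphism addresses the only step that genuinely needs care, and the equivariance/constant-rank (or compactness) justification you give is adequate.
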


\begin{theorem}[Borel-Weil]
For every partition $\lambda$ with at most $n$ rows, there is a line bundle $\mathcal{L}_{\lambda}$ over $F_{n}$ such that $\Gamma_{\text{hol}}(\mathcal{L}_{\lambda}) \cong V_{\lambda}$, where $V_{\lambda}$ is the irreducible representation of $U(n)$ indexed by $\lambda$. Moreover, $\mathcal{L}_{\lambda} \otimes \mathcal{L}_{\mu} \cong \mathcal{L}_{\lambda + \mu}$ for all $\lambda, \mu$, and $U(n)$ acts on $\mathcal{L}_{\lambda}$ by bundle automorphisms making the pair $(F_{n}, \mathcal{L}_{\lambda})$ homogeneous.
\end{theorem}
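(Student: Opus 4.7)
The plan is to use the standard construction of associated line bundles on $GL_n(\mathbb{C})/B_n(\mathbb{C})$ and then apply highest-weight theory to identify the space of sections. First, I would construct $\mathcal{L}_\lambda$ explicitly. A partition $\lambda = (\lambda_1 \geq \cdots \geq \lambda_n)$ with at most $n$ rows defines a character $\chi_\lambda : B_n(\mathbb{C}) \to \mathbb{C}^\ast$ by
\[
\chi_\lambda\bigl(\operatorname{diag}(t_1, \ldots, t_n) \cdot u\bigr) \coloneqq \prod_{i=1}^n t_i^{\lambda_i},
\]
for $u$ any strictly upper-triangular unipotent factor. The associated line bundle is then
\[
\mathcal{L}_\lambda \coloneqq GL_n(\mathbb{C}) \times_{B_n(\mathbb{C})} \mathbb{C}_{-\lambda},
\]
the quotient of $GL_n(\mathbb{C}) \times \mathbb{C}$ by the right $B_n$-action $b \cdot (g, z) = (gb^{-1}, \chi_\lambda(b) z)$. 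By construction this is a holomorphic line bundle over $F_n$, and $GL_n(\mathbb{C})$ acts on its total space by bundle automorphisms via left multiplication on the first factor, covering the transitive action on $F_n$; restricting to $U(n) \subseteq GL_n(\mathbb{C})$ gives the desired compact group action.

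The second step is to identify $\Gamma_{\text{hol}}(\mathcal{L}_\lambda)$ with $V_\lambda$. Global sections correspond naturally to holomorphic functions $f : GL_n(\mathbb{C}) \to \mathbb{C}$ satisfying the equivariance $f(gb) = \chi_\lambda(b) f(g)$ for all $b \in B_n$. Since $F_n$ is compact, this space is finite-dimensional and carries a holomorphic $GL_n(\mathbb{C})$-representation via $(g \cdot f)(x) \coloneqq f(g^{-1} x)$. To identify this representation I would exhibit an explicit highest-weight vector: writing $\Delta_i(g)$ for the determinant of the top-left $i \times i$ submatrix of $g$, the function
\[
f_\lambda(g) \coloneqq \prod_{i=1}^{n-1} \Delta_i(g)^{\lambda_i - \lambda_{i+1}} \cdot \det(g)^{\lambda_n}
\]
is holomorphic (the exponents $\lambda_i - \lambda_{i+1}$ are nonnegative and $\det$ is nowhere zero on $GL_n$), satisfies the equivariance condition above, and a direct computation shows it is annihilated by the unipotent radical of $B_n$ acting on the left and is a weight vector for the diagonal torus whose weight corresponds to $\lambda$. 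Hence the subrepresentation it generates is isomorphic to $V_\lambda$. Equality with the full space follows from the observation that the opposite big Bruhat cell $B_n^- B_n / B_n$ is dense in $F_n$, so any holomorphic section is determined by its restriction there; combining this with a highest-weight multiplicity argument (or appealing directly to the theorem of the highest weight) shows no other irreducible summands appear.

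The tensor product identity is essentially formal: since characters of $B_n$ multiply, $\mathcal{L}_\lambda \otimes \mathcal{L}_\mu$ is the associated bundle for $\chi_\lambda \chi_\mu = \chi_{\lambda+\mu}$, hence isomorphic to $\mathcal{L}_{\lambda+\mu}$. Homogeneity follows from the $U(n)$-equivariance already built into the construction; a $U(n)$-invariant Hermitian metric on $\mathcal{L}_\lambda$ is obtained by averaging any Hermitian metric over the compact group $U(n)$, and the inner product on $\Gamma_{\text{hol}}(\mathcal{L}_\lambda)$ is $U(n)$-invariant by the same averaging argument. The main obstacle is the irreducibility step, i.e. ruling out additional weights in $\Gamma_{\text{hol}}(\mathcal{L}_\lambda)$ beyond those of $V_\lambda$, which is the substantive content of Borel--Weil. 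A cleaner alternative to the hands-on approach sketched above would be to invoke the Bott--Borel--Weil theorem, which simultaneously computes all higher cohomology groups $H^i(F_n, \mathcal{L}_\lambda)$ and yields both the vanishing of higher cohomology for dominant $\lambda$ and irreducibility of $H^0$, at the cost of introducing more representation-theoretic machinery.
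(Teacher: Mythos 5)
The paper does not actually prove this statement: it explicitly says ``We will not prove the theorem here'' and defers to the references \cite{Baston_Eastwood_2016, Timchenko_2014, Snow}, offering only the intuition that sections of $\mathcal{L}_{\lambda}$ are holomorphic functions on the torus bundle $U(n) \to F_{n}$ satisfying $f(tu) = \chi_{\lambda}(t)f(u)$. Your sketch therefore supplies an argument where the paper supplies a citation, and it is the standard one: the associated bundle $GL_{n}(\mathbb{C}) \times_{B_{n}} \mathbb{C}_{-\lambda}$, an explicit extreme-weight vector built from leading principal minors, density of the big Bruhat cell, and the theorem of the highest weight (or Bott--Borel--Weil) for irreducibility. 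In spirit this matches the paper's Fourier-weight picture, since your equivariance condition on functions on $GL_{n}(\mathbb{C})$ restricts to exactly the condition the paper writes on $U(n)$. The tensor-additivity $\mathcal{L}_{\lambda} \otimes \mathcal{L}_{\mu} \cong \mathcal{L}_{\lambda + \mu}$ and the homogeneity via averaging the metric and the inner product over $U(n)$ are handled correctly.

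Two caveats you should fix, both of the bookkeeping kind. First, your conventions are internally inconsistent: with the quotient action $b \cdot (g, z) = (gb^{-1}, \chi_{\lambda}(b)z)$, sections correspond to functions with $f(gb) = \chi_{\lambda}(b)^{-1}f(g)$, not $\chi_{\lambda}(b)f(g)$; pick one normalization and carry it through. Second, with the condition $f(gb) = \chi_{\lambda}(b)f(g)$ and the left-translation action $(g \cdot f)(x) = f(g^{-1}x)$, the minor function $f_{\lambda}$ is invariant under the \emph{opposite} (lower-triangular) unipotent acting on the left and has torus weight $-\lambda$, so it is a lowest-weight vector and the representation you produce is $V_{\lambda}^{*}$ rather than $V_{\lambda}$; one gets $V_{\lambda}$ on the nose only after replacing $\lambda$ by $-w_{0}\lambda$ in the bundle construction (harmless here, since the relabeling is additive and so preserves $\mathcal{L}_{\lambda} \otimes \mathcal{L}_{\mu} \cong \mathcal{L}_{\lambda+\mu}$) or after choosing consistent conventions throughout. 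Finally, the irreducibility step --- the substantive content of Borel--Weil --- is only sketched (big-cell density plus uniqueness of highest-weight vectors, or an appeal to Bott--Borel--Weil); that is acceptable at the level of detail the paper itself operates at, but it is the one place where your write-up defers rather than argues.
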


Applying Theorem~\ref{thm:main-thm} and Corollary~\ref{cor:homog-err}, we obtain the de Finetti theorem for representations of $U(n)$:

\begin{corollary}
If $\lambda$ and $\mu$ are partitions with at most $n$ rows, then for any state $\rho$ on $V_{\lambda + \mu}$ we have that $\operatorname{Tr}_{\mu} \rho$ is $\epsilon$-close to a mixture over coherent states for $\epsilon \leq 1 - \frac{\dim V_{\mu}}{\dim V_{\lambda + \mu}}$.
\end{corollary}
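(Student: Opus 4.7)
The plan is to read this corollary as a direct application of Theorem~\ref{thm:main-thm} together with Corollary~\ref{cor:homog-err}, with the Borel--Weil theorem supplying the required geometric data. Concretely, I would take $M = F_{n}$, set $\mathcal{L}_{1} = \mathcal{L}_{\lambda}$ and $\mathcal{L}_{2} = \mathcal{L}_{\mu}$, and use the Borel--Weil identifications $\mathcal{H}_{\mathcal{L}_{1}} \cong V_{\lambda}$, $\mathcal{H}_{\mathcal{L}_{2}} \cong V_{\mu}$, and $\mathcal{H}_{\mathcal{L}_{1} \otimes \mathcal{L}_{2}} = \mathcal{H}_{\mathcal{L}_{\lambda + \mu}} \cong V_{\lambda + \mu}$. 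A state $\rho$ on $V_{\lambda + \mu}$ is then exactly a state on $\mathcal{H}_{\mathcal{L}_{1} \otimes \mathcal{L}_{2}}$, and the operation $\operatorname{Tr}_{\mu}\rho$ in the statement corresponds to the reduced density matrix $\rho_{1}$ appearing in Theorem~\ref{thm:main-thm}: view $\rho$ as a state on $V_{\lambda} \otimes V_{\mu}$ through the isometric embedding $M_{\mathcal{L}_{\lambda},\mathcal{L}_{\mu}}^{*}$ dual to multiplication, and partial trace over the $V_{\mu}$ factor.

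To legitimately invoke Theorem~\ref{thm:main-thm} I need each $V_{\nu}$ to come with an inner product such that $M_{\mathcal{L}_{\lambda},\mathcal{L}_{\mu}}^{*}$ is a genuine isometry rather than merely an injection. Since each $V_{\nu}$ is irreducible under $U(n)$, Schur's lemma forces the $U(n)$-invariant Hermitian inner product on $V_{\nu}$ to be unique up to a positive scalar, so one simply chooses these scalars so that the highest weight inclusion $V_{\lambda + \mu} \hookrightarrow V_{\lambda} \otimes V_{\mu}$ becomes length-preserving. With these choices, the triple $(F_{n}, \mathcal{L}_{\lambda}, \mathcal{L}_{\mu})$ is homogeneous in the sense of the paper: $U(n)$ acts transitively on $F_{n}$ by K\"{a}hler automorphisms, acts on each line bundle by bundle automorphisms leaving the chosen Hermitian metric and inner product invariant, and its action on $\mathcal{H}_{\mathcal{L}_{\lambda + \mu}}$ agrees with the one inherited from the tensor product.

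Once homogeneity is in place, Corollary~\ref{cor:homog-err} applies immediately. The proposition preceding it shows that the measure making the coherent states into a normalized POVM is $(\dim \mathcal{H}_{\mathcal{L}})$ times the unique $U(n)$-invariant probability measure $\mu_{\text{vol}}$ on $F_{n}$. Hence the Radon--Nikodym derivative $r = d\mu_{\mathcal{L}_{\mu}}/d\mu_{\mathcal{L}_{\lambda + \mu}}$ is the constant $\dim V_{\mu}/\dim V_{\lambda + \mu}$, so $R = \sup_{x}(1 - r(x)) = 1 - \dim V_{\mu}/\dim V_{\lambda + \mu}$, which is exactly the bound claimed in the corollary.

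The only real subtlety is bookkeeping: one must verify that the normalization conventions for the inner products on $V_{\lambda}$, $V_{\mu}$, and $V_{\lambda + \mu}$, the Hermitian metrics on the line bundles, and the invariant measures on $F_{n}$ are mutually consistent, so that the isometry hypothesis on $M^{*}$ in Theorem~\ref{thm:main-thm} and the homogeneity hypothesis of Corollary~\ref{cor:homog-err} hold simultaneously. This reduces to invoking Schur's lemma for each irreducible $V_{\nu}$ and then choosing the scalars compatibly, which is routine once Borel--Weil is granted.
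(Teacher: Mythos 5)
Your proposal is correct and is essentially identical to the paper's own argument, which consists precisely of invoking the Borel--Weil theorem to realize $V_{\lambda}$, $V_{\mu}$, $V_{\lambda+\mu}$ as $\Gamma_{\text{hol}}(\mathcal{L}_{\lambda})$, $\Gamma_{\text{hol}}(\mathcal{L}_{\mu})$, $\Gamma_{\text{hol}}(\mathcal{L}_{\lambda}\otimes\mathcal{L}_{\mu})$ over the flag variety $F_{n}$ with its homogeneous $U(n)$-structure, and then applying Theorem~\ref{thm:main-thm} together with Corollary~\ref{cor:homog-err} so that the Radon--Nikodym derivative is the constant $\dim V_{\mu}/\dim V_{\lambda+\mu}$. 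Your extra step of rescaling the Schur-unique invariant inner products so that $M_{\mathcal{L}_{\lambda},\mathcal{L}_{\mu}}^{*}$ (the Cartan-component inclusion) is an exact isometry just makes explicit a normalization the paper leaves implicit in its setup, and you inherit the paper's own convention in Corollary~\ref{cor:homog-err} of quoting $R$ rather than $2R$ as the error.
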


To give some intuition for how the line bundles $\mathcal{L}_{\lambda}$ are constructed, we will use an analogy with the circle bundle picture. Consider the (non-monotonic) partitions $e_{i}$ which have a 1 in the $i$-th position and zeroes everywhere else.
Constructing the vector bundle $\bigoplus _{i = 1} ^{n} \mathcal{L}_{e_{i}}$, we can take the corresponding $(S^{1})^{n}$-bundle (or torus bundle) $M$ over $F_{n}$. Decomposing $L^{2}_{\text{hol}}(M)$ into Fourier components using the $U(1)^{n}$-action, it can be shown that the component of Fourier weight $\lambda$ is isomorphic to $V_{\lambda}$.
It turns out that the torus bundle $M$ is just the unitary group $U(n)$ itself, and so sections of $\mathcal{L}_{\lambda}$ are just holomorphic function $f : U(n) \to \mathbb{C}$ of Fourier weight $\lambda$. This is equivalent to the condition that
\[ f(tu) = \chi_{\lambda}(t)f(u) \]
for all $t \in U(1)^{n}$ and $u \in U(n)$, where $\chi_{\lambda}$ is the 1-dimensional representation of $U(1)^{n}$ corresponding to $\lambda$. One can use this to prove the Borel-Weil theorem assuming known results from representation theory, or use the Borel-Weil theorem to give alternative proofs \cite{Timchenko_2014}.

Additionally, in some cases the introduction of the phase space can simplify some calculations. For example, for rectangular partitions of the form $\lambda = (k, \dots, k, 0, \dots, 0)$ the line bundles $V_{\lambda}$ are invariant under the subgroup $U(n_{1}) \times U(n_{2}) \subseteq U(n)$ and so induce line bundles over a Grassmannian manifold.
The corresponding coherent states will end up being tensor powers of Slater determinants, analogous to the usual case where the coherent states are tensor powers of pure states. In future work the author will describe some applications of this simplification.

\subsection{Renner's exponential de Finetti theorem}\label{sec:renner-exp}
Renner's exponential de Finetti theorem \cite[Theorem~4.3.2]{Renner_2005} (see also \cite{Vidick_Yuen_2016}, \cite[Theorem~8]{Harrow_2013}, or \cite[Corollary~V.2]{Koenig_Mitchison_2009}) approximates the marginal of a symmetric quantum state by a classical mixture over larger classes of states than the original de Finetti theorem, where the classes are indexed by an integer parameter $0 \leq r \leq k$. The resulting error bound has an exponential improvement in $r$ compared to the original quantum de Finetti theorem.

\begin{definition}
An \textbf{$r$-almost product state} on $\operatorname{Sym}^{k}(\mathbb{C}^{d})$ is a pure state which can be written as
\[ P_{\operatorname{Sym}^{k}(\mathbb{C}^{d})} (\ket{\psi} \otimes \ket{z}^{\otimes (k - r)}) \]
where $\ket{\psi} \in \operatorname{Sym}^{r}(\mathbb{C}^{d})$ and $\ket{z} \in \mathbb{C}^{d}$. We will also say that a state is an $(r, z)$-almost product state when the second tensor factor can be taken to be $z$.
\end{definition}

The coherent states in the exponential de Finetti theorem are the $r$-almost product states. In order to deduce the exponential de Finetti theorem from Theorem~\ref{thm:main-thm}, we will need to construct a phase space and line bundles such that the corresponding coherent states are $r$-almost product states.
To do this, we will use the generalization in Section~\ref{sec:quant-vecbund} of the construction of the Hilbert space to vector bundles.
Throughout, fix integers $0 \leq r \leq k$. Let $z_{0} \coloneqq e_{1} \in \mathbb{C}^{d}$ be a nonzero vector and $V_{z} \subseteq \operatorname{Sym}^{k}(\mathbb{C}^{d}) \subseteq (\mathbb{C}^{d})^{k}$ be the set of $(r, z)$-almost product states.
Let $G \coloneqq GL_{d}(\mathbb{C})$ and $P_{z_{0}} \subseteq G$ be the parabolic subgroup of matrices which fix $\operatorname{span}_{\mathbb{C}} \{z_{0}\}$. We let $V_{z}$ have the inner product given by restricting the one from $(\mathbb{C}^{d})^{\otimes k}$.

Note that $G/P_{z_{0}} \cong \mathbb{C}P^{d}$ and that $V_{z_{0}}$ is invariant under the action of $P_{z_{0}}$ on $(\mathbb{C}^{d})^{\otimes k}$. We will take the vector bundle over $\mathbb{C}P^{d}$ to be $\mathcal{E} \coloneqq (G \times _{P_{z_{0}}} V_{z_{0}})^{*}$, which is holomorphic and is homogeneous with respect to $G$.
As a smooth vector bundle, we can also construct $\mathcal{E}$ as $U(d) \times _{U(1) \times U(d - 1)} V_{z_{0}}$, where $U(1) \times U(d - 1)$ is the group of unitary matrices which fix $\operatorname{span}_{\mathbb{C}} \{z_{0}\}$. The fiber above a point $\ket{z}\bra{z} \in \mathbb{C}P^{d}$ is $V_{z}$,
and we take the Hermitian metric on $\mathcal{E}$ by restricting the constant Hermitian metric from the trivial bundles $\mathbb{C}P^{d} \times \operatorname{Sym}^{k}(\mathbb{C}^{d}) \subseteq \mathbb{C}P^{d} \times (\mathbb{C}^{d})^{\otimes k}$.

Following Section~\ref{sec:quant-vecbund}, we take $M$ to be the total space of the fiber bundle $p : \mathbb{P}(\mathcal{E}) \to \mathbb{C}P^{d}$ with the K\"{a}hler metric given fiberwise by the Fubini-Study metric.
We will take $\mathcal{L}_{1} \coloneqq \mathcal{O}_{\mathcal{E}}(1)$, which fiberwise is the anticanonical line bundle over each fiber, and $\mathcal{L}_{2} \cong p^{*} \mathcal{O}(1)$, which is the pullback along $p$ of the anticanonical line bundle over $\mathbb{C}P^{d}$. It is known \cite[Equation~A.2a]{Lazarsfeld_2004} that the pushforward bundle $p_{*}\mathcal{L}_{1}$ is isomorphic to $\mathcal{E}$.

\begin{remark}
The purpose of the line bundles over $M$ is to make it easier to prove Theorem~\ref{thm:main-thm}, which need a space indexing the coherent states in order to define a coherent state POVM. However, the error bound is in terms of the dimensions of the Hilbert spaces, and calculating those dimensions is easier to do in terms of vector bundles over $\mathbb{C}P^{d}$ instead of line bundles over $M$. The next two propositions formalize the needed facts.
\end{remark}

\begin{proposition}\label{prop:pushforward-iso}
We have $p_{*}(\mathcal{L}_{1} \otimes \mathcal{L}_{2}^{k}) \cong \mathcal{E} \otimes \mathcal{O}(1)^{k}$.
\end{proposition}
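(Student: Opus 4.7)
The plan is to apply the projection formula in algebraic geometry. Recall that for a proper morphism $p : X \to Y$ of complex manifolds (or schemes), a coherent sheaf $\mathcal{F}$ on $X$, and a locally free sheaf $\mathcal{G}$ on $Y$, there is a canonical isomorphism
\[ p_{*}(\mathcal{F} \otimes p^{*}\mathcal{G}) \cong p_{*}(\mathcal{F}) \otimes \mathcal{G}. \]
In our setting $p : \mathbb{P}(\mathcal{E}) \to \mathbb{C}P^{d}$ is a projective bundle and hence proper, and $\mathcal{O}(1)^{k}$ is a line bundle on $\mathbb{C}P^{d}$, hence locally free, so the projection formula applies cleanly.

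The key step is then to rewrite $\mathcal{L}_{2}^{k}$ in a form suitable for the projection formula. By definition $\mathcal{L}_{2} = p^{*}\mathcal{O}(1)$, and since pullback commutes with tensor powers we have $\mathcal{L}_{2}^{k} = p^{*}(\mathcal{O}(1)^{k})$. Substituting into the left-hand side,
\[ p_{*}(\mathcal{L}_{1} \otimes \mathcal{L}_{2}^{k}) = p_{*}(\mathcal{L}_{1} \otimes p^{*}(\mathcal{O}(1)^{k})) \cong p_{*}(\mathcal{L}_{1}) \otimes \mathcal{O}(1)^{k}. \]
Finally, I would invoke the already-stated identification $p_{*}\mathcal{L}_{1} = p_{*}\mathcal{O}_{\mathcal{E}}(1) \cong \mathcal{E}$ (cited just before the proposition from \cite{Lazarsfeld_2004}), which yields
\[ p_{*}(\mathcal{L}_{1} \otimes \mathcal{L}_{2}^{k}) \cong \mathcal{E} \otimes \mathcal{O}(1)^{k}, \]
as desired.

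There is essentially no obstacle: both ingredients (the projection formula and the isomorphism $p_{*}\mathcal{O}_{\mathcal{E}}(1) \cong \mathcal{E}$) are standard and already available. The only point worth being careful about is to make sure the pullback is taken on the correct factor --- the projection formula requires the tensor factor to be pulled back from the base, which is exactly how $\mathcal{L}_{2}$ is defined, so the argument is natural. If a reader prefers a more hands-on justification, one can alternatively check the isomorphism locally: over a trivializing open $U \subseteq \mathbb{C}P^{d}$ with $\mathcal{E}|_{U} \cong U \times V$, the restriction $p^{-1}(U) \cong U \times \mathbb{P}(V)$ splits the computation so that $p_{*}(\mathcal{O}_{\mathcal{E}}(1) \otimes p^{*}\mathcal{O}(1)^{k})|_{U}$ reduces to $H^{0}(\mathbb{P}(V), \mathcal{O}(1)) \otimes \mathcal{O}(1)^{k}|_{U} = V^{*} \otimes \mathcal{O}(1)^{k}|_{U}$, and then globalizing recovers $\mathcal{E} \otimes \mathcal{O}(1)^{k}$; but invoking the projection formula directly is cleaner.
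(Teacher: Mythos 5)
Your proof is correct, and it takes a genuinely different (and cleaner) route than the paper. The paper argues through the adjunction $p^{*} \dashv p_{*}$: it first checks that the unit $A \to p_{*}p^{*}A$ is an isomorphism for this $p$, then asserts an isomorphism $p^{*}\mathcal{E} \cong \mathcal{L}_{1}$ obtained from $p_{*}\mathcal{L}_{1} \cong \mathcal{E}$, tensors with $\mathcal{L}_{2}^{k}$, and transports the resulting isomorphism back across the adjunction. You instead write $\mathcal{L}_{2}^{k} = p^{*}(\mathcal{O}(1)^{k})$ and apply the projection formula $p_{*}(\mathcal{F} \otimes p^{*}\mathcal{G}) \cong p_{*}\mathcal{F} \otimes \mathcal{G}$ (which for locally free $\mathcal{G}$ is a local-on-the-base statement and does not even need properness), then quote the cited identification $p_{*}\mathcal{O}_{\mathcal{E}}(1) \cong \mathcal{E}$. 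What your route buys is brevity and robustness: in particular you never need the paper's intermediate claim $p^{*}\mathcal{E} \cong \mathcal{L}_{1}$, which as literally stated is only an isomorphism when $\mathcal{E}$ has rank one (in general there is only the canonical surjection $p^{*}\mathcal{E} \twoheadrightarrow \mathcal{O}_{\mathcal{E}}(1)$), so the paper's argument requires a more charitable reading at that point while yours does not; the projection formula is, morally, the same adjunction computation packaged as a standard lemma. One small caveat on your optional local sanity check: whether $H^{0}(\mathbb{P}(V), \mathcal{O}(1))$ is $V$ or $V^{*}$ depends on the sub-versus-quotient convention for $\mathbb{P}(\mathcal{E})$, and the paper's $\mathcal{E}$ already has a dual built into its definition precisely so that $p_{*}\mathcal{O}_{\mathcal{E}}(1) \cong \mathcal{E}$ holds; since your main argument only uses that cited isomorphism, this does not affect its validity, but the local computation should be stated with the matching convention to avoid landing on $\mathcal{E}^{\vee} \otimes \mathcal{O}(1)^{k}$.
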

\begin{proof}
First recall that the pushforward functor $p_{*}$ is the right adjoint of the inverse image functor $p^{*}$. The adjunction map $\operatorname{Hom}(p^{*}A, B) \cong \operatorname{Hom}(A, p_{*}B)$ maps $f : p^{*}A \to B$ to the composition of $p_{*}f$ with the unit map $\eta_{A} : A \mapsto p_{*}(p^{*}A)$.
For the particular map $p$ we see that fiberwise over $\mathbb{C}P^{d}$ the bundle $p^{*}A$ is trivial. Thus the pushforward $p_{*}(p^{*}A)$ is isomorphic on stalks to $A$ and thus isomorphic to $A$, and it can be checked that the unit of the adjunction is this isomorphism. Thus the adjunction map sends isomorphisms to isomorphisms.

We have
\begin{align*}
\operatorname{Hom}(\mathcal{E} \otimes \mathcal{O}(1)^{k}, p_{*}(\mathcal{L}_{1} \otimes \mathcal{L}_{2}^{k})) &\cong \operatorname{Hom}(p^{*}(\mathcal{E} \otimes \mathcal{O}(1)^{k}), \mathcal{L}_{1} \otimes \mathcal{L}_{2}^{k}) \\
&\cong \operatorname{Hom}((p^{*}\mathcal{E}) \otimes \mathcal{L}_{2}^{k}), \mathcal{L}_{1} \otimes \mathcal{L}_{2}^{k}).
\end{align*}
Since we have an isomorphism $p_{*}\mathcal{L}_{1} \cong \mathcal{E}$, the argument from the previous paragraph shows that there is an isomorphism $p^{*}\mathcal{E} \cong \mathcal{L}_{1}$.
By tensoring we get an isomorphism $p^{*}\mathcal{E} \otimes \mathcal{L}_{2}^{k} \cong \mathcal{L}_{1} \otimes \mathcal{L}_{2}^{k}$. Using the argument from the previous paragraph, the adjunction map sends this to an isomorphism $\mathcal{E} \otimes \mathcal{O}(1)^{k} \cong p_{*}(\mathcal{L}_{1} \otimes \mathcal{L}_{2}^{k})$.
\end{proof}



Some intuition for the following proposition can be given based on the usual proof of the original de Finetti theorem for $\mathbb{C}P^{d}$.
The canonical line bundle $\mathcal{O}(-1)$ can be thought of as a subbundle of the trivial bundle $\mathbb{C}P^{d} \times \mathbb{C}^{d}$ where the fiber over $x$ is the corresponding 1-dimensional subspace of $\mathbb{C}^{d}$.
The higher tensor powers $\mathcal{O}(-1)^{k}$ are then subspaces of $(\mathbb{C}^{d})^{k}$ whose fibers are the 1-dimensional subspaces spanned by $x^{\otimes k}$.

These subspaces are just the coherent states corresponding to the points $x$. Since global holomorphic sections of the trivial bundle $\mathbb{C}P^{d} \times (\mathbb{C}^{d})^{\otimes k}$ are elements of $(\mathbb{C}^{d})^{\otimes k}$, an appropriate (essentially bookkeeping) argument involving the duals of the bundles involved shows that the space of global holomorphic sections of $\mathcal{O}(-1)^{k}$ is the subspace of $(\mathbb{C}^{d})^{\otimes k}$ spanned by the coherent states $x^{\otimes k}$.
One could then use the usual representation-theoretic argument to determine that the tensor power states span the symmetric subspace, instead of using the argument we gave before. In the next two propositions we essentially carry out the representation-theoretic argument for our vector bundle over $\mathbb{C}P^{d}$, since a more general theorem does not seem to be known.

\begin{proposition}\label{prop:hs-rep}
The space of global holomorphic sections of $\mathcal{L}_{1} \otimes \mathcal{L}_{2}^{\otimes (n - k)}$ is the smallest $U(d)$-subrepresentation of $\operatorname{Sym}^{k}(\mathbb{C}^{d}) \otimes \operatorname{Sym}^{n - k}(\mathbb{C}^{d})$ which contains $\mathcal{E}_{z_{0}} \otimes z_{0}^{\otimes (n - k)}$.
\end{proposition}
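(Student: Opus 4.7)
The plan is to apply Proposition~\ref{prop:pushforward-iso} to identify $\Gamma_{\text{hol}}(M, \mathcal{L}_1 \otimes \mathcal{L}_2^{\otimes(n-k)}) \cong \Gamma_{\text{hol}}(\mathbb{C}P^d, \mathcal{E} \otimes \mathcal{O}(1)^{\otimes(n-k)})$ as $U(d)$-representations, and then to identify the latter with the claimed subrepresentation of $\operatorname{Sym}^k(\mathbb{C}^d) \otimes \operatorname{Sym}^{n-k}(\mathbb{C}^d)$ using the Borel--Weil framework for homogeneous vector bundles on $\mathbb{C}P^d = G/P_{z_0}$.

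The bundle $\mathcal{E} \otimes \mathcal{O}(1)^{n-k}$ is $G$-homogeneous with fiber over $[z_0]$ canonically identified with $\mathcal{E}_{z_0} \otimes L_{n-k}$, where $L_{n-k}$ is the 1-dimensional $P_{z_0}$-representation on the fiber $(\mathcal{O}(1)^{n-k})_{z_0}$. The $P_{z_0}$-equivariant linear map $\mathcal{E}_{z_0} \otimes L_{n-k} \to \operatorname{Sym}^k(\mathbb{C}^d) \otimes \operatorname{Sym}^{n-k}(\mathbb{C}^d)$ sending $v \otimes \ell$ to $v \otimes (\ell \cdot z_0^{\otimes(n-k)})$ (using the inclusion $\mathcal{E}_{z_0} = V_{z_0} \subseteq \operatorname{Sym}^k(\mathbb{C}^d)$ and the natural identification of $L_{n-k}$ with $\operatorname{span}_{\mathbb{C}}\{z_0^{\otimes(n-k)}\} \subseteq \operatorname{Sym}^{n-k}(\mathbb{C}^d)$) induces via Frobenius reciprocity a $G$-equivariant map $\Phi : \Gamma_{\text{hol}}(\mathcal{E} \otimes \mathcal{O}(1)^{n-k}) \to \operatorname{Sym}^k(\mathbb{C}^d) \otimes \operatorname{Sym}^{n-k}(\mathbb{C}^d)$. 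A concrete realization combines the surjective multiplication map $\Gamma_{\text{hol}}(\mathcal{E}) \otimes \Gamma_{\text{hol}}(\mathcal{O}(1)^{n-k}) \to \Gamma_{\text{hol}}(\mathcal{E} \otimes \mathcal{O}(1)^{n-k})$ (which is surjective by the isometry assumption on $M^*_{\mathcal{L}_1, \mathcal{L}_2}$) with a $U(d)$-equivariant map $\Gamma_{\text{hol}}(\mathcal{E}) \to \operatorname{Sym}^k(\mathbb{C}^d)$ coming from Borel--Weil and the standard identification $\Gamma_{\text{hol}}(\mathcal{O}(1)^{n-k}) \cong \operatorname{Sym}^{n-k}(\mathbb{C}^d)$.

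I then verify that $\Phi$ is injective and that its image is the smallest $U(d)$-subrepresentation of $\operatorname{Sym}^k(\mathbb{C}^d) \otimes \operatorname{Sym}^{n-k}(\mathbb{C}^d)$ containing $\mathcal{E}_{z_0} \otimes z_0^{\otimes(n-k)}$. Injectivity follows because any section in the kernel vanishes at $[z_0]$ (from the Frobenius reciprocity construction, $\Phi$ restricted to the fiber over $[z_0]$ is the given inclusion, which is injective) and hence, by $G$-equivariance and transitivity of the $G$-action on $\mathbb{C}P^d$, vanishes everywhere. Minimality follows from cyclicity of induced representations: $\Gamma_{\text{hol}}(\mathcal{E} \otimes \mathcal{O}(1)^{n-k})$ is generated as a $G$-module by the $P_{z_0}$-subspace corresponding to its fiber over $[z_0]$, and the image of this fiber under $\Phi$ is precisely $\mathcal{E}_{z_0} \otimes z_0^{\otimes(n-k)}$, so the image of $\Phi$ is the $G$-orbit span of $\mathcal{E}_{z_0} \otimes z_0^{\otimes(n-k)}$, which is the smallest subrepresentation containing it.

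The main obstacle lies in carefully reconciling the holomorphic structure on $\mathcal{E} = (G \times_{P_{z_0}} V_{z_0})^*$ with the smooth identification of its fiber over $[z_0]$ as the subspace $V_{z_0} \subseteq \operatorname{Sym}^k(\mathbb{C}^d)$ used in the statement of the proposition (the latter being defined via the Hermitian metric and hence a priori not holomorphic). The cleanest remedy is to work systematically in the associated-bundle description of $\mathcal{E}$ and invoke the standard correspondence between holomorphic sections of $G$-homogeneous bundles on $G/P$ and $P$-equivariant holomorphic data on the fiber at $eP$, i.e., the Borel--Weil--Bott theorem.
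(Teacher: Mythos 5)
Your outline reproduces the easy half of the argument but leaves out the step that carries all the weight: an upper bound on $H^{0}(\mathbb{C}P^{d}, \mathcal{E} \otimes \mathcal{O}(1)^{\otimes(n-k)})$, i.e.\ a proof that the section space is no \emph{larger} than the span of the $U(d)$-orbit of $\mathcal{E}_{z_{0}} \otimes z_{0}^{\otimes(n-k)}$. In the paper this is done by exhibiting $\mathcal{E}^{*}\otimes\mathcal{O}(-1)^{\otimes(n-k)}$ as a subbundle of the trivial bundle with fiber $\operatorname{Sym}^{k}(\mathbb{C}^{d})\otimes\operatorname{Sym}^{n-k}(\mathbb{C}^{d})$, dualizing to get a bundle surjection from a trivial bundle, and then proving that the induced map on global sections is still surjective; that surjectivity is not formal -- it requires the vanishing $H^{1}$ of the kernel subbundle, which is where the Borel--Weil--Bott theorem enters. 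Your proposal never confronts this cohomological input (you mention Borel--Weil--Bott only as a bookkeeping device for the holomorphic vs.\ Hermitian identification of fibers), and the two claims you substitute for it do not hold up. First, the injectivity claim: your $\Phi$ is a map on sections, not on fibers, and there is no reason that a section in $\ker\Phi$ vanishes at $[z_{0}]$; that would require the evaluation map $\operatorname{ev}_{z_{0}}$ to factor through $\Phi$, which does not follow from a Frobenius-reciprocity construction (note also that for holomorphic induction the adjunction goes the other way: a $P_{z_{0}}$-map from the fiber into a $G$-module does not canonically induce a $G$-map out of the section space; you would have to pass to the compact picture and average, after which the factorization you need is exactly what is missing).

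Second, the minimality claim is based on a false general principle. The fiber at $[z_{0}]$ is canonically a \emph{quotient} of $H^{0}$ (via evaluation), not a subspace, and the statement you are implicitly invoking -- that no proper $G$-submodule of the section space of a homogeneous bundle can evaluate surjectively onto the fiber at every point -- fails in general. For example, on $\mathbb{P}^{1}$ the first jet bundle $J^{1}(\mathcal{O}(2))$ is a homogeneous bundle with $H^{0} \cong \operatorname{Sym}^{2}(\mathbb{C}^{2}) \oplus \mathbb{C}$ as an $SL_{2}$-representation, and the proper submodule given by jet prolongations of sections of $\mathcal{O}(2)$ already generates every fiber. So ``cyclicity of induced representations'' cannot be cited as a black box here; for this particular bundle it is essentially equivalent to the statement being proved, and establishing it requires the same kind of cohomology-vanishing (or explicit dimension/weight) argument that the paper supplies. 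To repair the proof you would need to either follow the paper's route (trivial-bundle surjection plus $H^{1}=0$ from Borel--Weil--Bott, then compute the kernel as the annihilator of $\operatorname{span}\{\mathcal{E}_{x}\otimes x^{\otimes(n-k)}\}$ and dualize), or replace it by an equally substantive computation bounding $\dim H^{0}$ from above.
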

\begin{proof}
By Proposition~\ref{prop:pushforward-iso}, the space of global holomorphic sections of $\mathcal{L}_{1} \otimes \mathcal{L}_{2}^{n - k}$ over $M$ is isomorphic to the space of global holomorphic sections of $\mathcal{E} \otimes \mathcal{O}(1)^{n - k}$ over $\mathbb{C}P^{d}$. The dual bundle $\mathcal{E}^{*} \otimes \mathcal{O}(-1)^{k}$ is a holomorphic subbundle of a trivial bundle. Let the inclusion map be
\[ \iota : \mathcal{E}^{*} \otimes \mathcal{O}(-1)^{k} \hookrightarrow \mathbb{C}P^{d} \times (\operatorname{Sym}^{k}(\mathbb{C}^{d}) \otimes \operatorname{Sym}^{n - k}(\mathbb{C}^{d})). \]
Dualizing, we get a surjection
\[ \mathbb{C}P^{d} \times (\operatorname{Sym}^{k}(\mathbb{C}^{d}) \otimes \operatorname{Sym}^{n - k}(\mathbb{C}^{d}))^{*} \twoheadrightarrow \mathcal{E} \otimes \mathcal{O}(1)^{k}. \]

We claim that the induced map on the spaces of global holomorphic sections of the two bundles is also a surjection. Letting $A$ be the first (trivial) bundle and $B$ the second, we have a long exact sequence
\[ 0 \rightarrow H^{0}(A/B) \rightarrow H^{0}(A) \rightarrow H^{0}(B) \rightarrow H^{1}(A/B) \rightarrow \cdots \]
so it suffices to show that $H^{1}(A/B) = 0$. Note that there is an isomorphism
\[ A \cong G \times_{H} (\operatorname{Sym}^{k}(\mathbb{C}^{d}) \otimes \operatorname{Sym}^{n - k}(\mathbb{C}^{d}))^{*} \]
so $A$ is also a homogeneous vector bundle over $\mathbb{C}P^{d}$. Then we have
\[ A/B \cong G \times_{H} (\operatorname{Sym}^{k}(\mathbb{C}^{d}) \otimes \operatorname{Sym}^{n - k}(\mathbb{C}^{d}))^{*}/(\mathcal{E}_{z_{0}} \otimes z_{0}^{\otimes (n - k)}) \]
so $A/B$ is a homogeneous vector bundle. The Borel-Weil-Bott theorem \cite[Proposition~11.4]{Snow} then implies that $H^{1}(A/B) = 0$, since the weights appearing in a fiber of $A/B$ are a subset of the weights appearing in a fiber of $A$.

Finally, we calculate the kernel of the surjection $H^{0}(A) \twoheadrightarrow H^{0}(B)$. Fiberwise, the maps corresponds to restricting a linear function $\operatorname{Sym}^{k}(\mathbb{C}^{d}) \otimes \operatorname{Sym}^{n - k}(\mathbb{C}^{d}) \to \mathbb{C}$ to a linear function $\mathcal{E}_{x} \otimes x^{\otimes (n - k)} \to \mathbb{C}$.
Thus the kernel is precisely all linear functions on $\operatorname{Sym}^{k}(\mathbb{C}^{d}) \otimes \operatorname{Sym}^{n - k}(\mathbb{C}^{d})$ which vanish on the subspaces $\mathcal{E}_{x} \otimes x^{\otimes (n - k)}$ for all unit vectors $x \in \mathbb{C}^{d}$.
Taking duals and using the fact that the irreducible representations of $U(d)$ gives the desired statement.
\end{proof}


\begin{proposition}\label{prop:exp-reps}
The space of global holomorphic sections of $\mathcal{L}_{1} \otimes \mathcal{L}_{2}^{\otimes (n - k)}$ is a $U(d)$-representation which is isomorphic to
\[ \bigoplus _{i = 0} ^{r} S_{(n - i, i)}(\mathbb{C}^{d}) \]
where $S_{\mu}(\mathbb{C}^{d})$ denotes the irreducible representation of $\mathbb{C}^{d}$ indexed by the partition $\mu$.
\end{proposition}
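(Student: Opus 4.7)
The plan is to combine Proposition~\ref{prop:hs-rep} with Pieri's rule, reducing the statement to a weight-theoretic bookkeeping problem. Proposition~\ref{prop:hs-rep} identifies the target space with the smallest $U(d)$-subrepresentation $W \subseteq \operatorname{Sym}^k(\mathbb{C}^d) \otimes \operatorname{Sym}^{n-k}(\mathbb{C}^d)$ containing $\mathcal{E}_{z_0} \otimes z_0^{\otimes(n-k)}$. Pieri's rule gives the multiplicity-free decomposition
\[
\operatorname{Sym}^k(\mathbb{C}^d) \otimes \operatorname{Sym}^{n-k}(\mathbb{C}^d) \;\cong\; \bigoplus_{i=0}^{\min(k,n-k)} S_{(n-i,i)}(\mathbb{C}^d),
\]
so $W$ is forced to be a direct sum of a subset of these summands. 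The problem reduces to determining for which $i$ the projection of $\mathcal{E}_{z_0} \otimes z_0^{\otimes(n-k)}$ onto $S_{(n-i,i)}$ is nonzero.

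Next I would describe $\mathcal{E}_{z_0}$ concretely. Identifying $\operatorname{Sym}^k(\mathbb{C}^d) \cong \mathbb{C}[x_1, \dots, x_d]_k$ and $z_0 = e_1$, the space of $(r, z_0)$-almost product states is exactly $\mathcal{E}_{z_0} = x_1^{k-r} \cdot \mathbb{C}[x_1,\dots,x_d]_r$, i.e.\ the polynomials of degree $k$ divisible by $x_1^{k-r}$. Hence every torus weight $\mu = (\mu_1,\dots,\mu_d)$ of a monomial appearing in $\mathcal{E}_{z_0} \otimes z_0^{\otimes(n-k)} \subseteq \mathbb{C}[x]_k \otimes \mathbb{C}[y]_{n-k}$ satisfies $\mu_1 \geq (k-r)+(n-k) = n-r$. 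On the other hand, every weight $\mu$ in an irreducible $S_\lambda(\mathbb{C}^d)$ satisfies $\mu^+ \preceq \lambda$ in dominance order; in particular $\max_j \mu_j = \mu^+_1 \leq \lambda_1$. Applying this to $\lambda = (n-i,i)$ shows that no weight in $S_{(n-i,i)}$ has first coordinate $\geq n-r$ when $i > r$, so the projection of $\mathcal{E}_{z_0} \otimes z_0^{\otimes(n-k)}$ onto $S_{(n-i,i)}$ vanishes for $i > r$. This gives $W \subseteq \bigoplus_{i=0}^{r} S_{(n-i,i)}(\mathbb{C}^d)$.

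For the reverse inclusion, I would exhibit, for each $0 \leq i \leq r$, an explicit vector $v_i \in \mathcal{E}_{z_0} \otimes z_0^{\otimes(n-k)}$ with nonzero projection onto $S_{(n-i,i)}$. A natural candidate is $v_i \coloneqq x_1^{k-i} x_2^i \otimes y_1^{n-k}$, which lies in the generating set because $x_1^{k-i} x_2^i$ is divisible by $x_1^{k-r}$ when $i \leq r$. This vector has weight $(n-i, i, 0, \ldots, 0)$, the highest weight of $S_{(n-i,i)}$, and the one-dimensional highest-weight line of the $S_{(n-i,i)}$-component inside the $(i+1)$-dimensional $(n-i,i)$-weight space is spanned by the Pieri vector
\[
hw_i \;=\; \sum_{j=0}^{i} (-1)^j \binom{i}{j}\, x_1^{k-j} x_2^j \otimes y_1^{n-k-i+j} y_2^{i-j},
\]
which one verifies is annihilated by the raising operator $E_{12} = x_1 \partial_{x_2} + y_1 \partial_{y_2}$ using the binomial identity $(i-j)\binom{i}{j} = (j+1)\binom{i}{j+1}$. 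Pairing $v_i$ with $hw_i$ in the Hermitian inner product on $\operatorname{Sym}^k \otimes \operatorname{Sym}^{n-k}$ inherited from $(\mathbb{C}^d)^{\otimes n}$, orthogonality of the monomial basis forces only the $j=i$ term to contribute, and it is manifestly nonzero. Hence the projection of $v_i$ onto $S_{(n-i,i)}$ is nonzero, giving $\bigoplus_{i=0}^{r} S_{(n-i,i)}(\mathbb{C}^d) \subseteq W$ and completing the identification.

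The main technical nuisance is checking the highest-weight vector formula cleanly; if that becomes awkward, an alternative is a downward induction on $i$, using that the $(n-i,i)$-weight space of $\operatorname{Sym}^k \otimes \operatorname{Sym}^{n-k}$ decomposes as the direct sum of the one-dimensional $(n-i,i)$-weight spaces of $S_{(n-j,j)}$ for $j \leq i$, and subtracting off the contributions from $S_{(n-j,j)}$ with $j < i$ already shown to lie in $W$ to isolate a nonzero component of $v_i$ in $S_{(n-i,i)}$. Either way, the crux of the argument is purely representation-theoretic, and relies only on the multiplicity-freeness of Pieri's rule together with the explicit $x_1^{k-r}$-divisibility description of $\mathcal{E}_{z_0}$.
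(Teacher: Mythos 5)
Your proof is correct, and it takes a genuinely different route from the paper for the decisive step. Both arguments start the same way: Proposition~\ref{prop:hs-rep} identifies the section space with the smallest $U(d)$-subrepresentation of $\operatorname{Sym}^{k}(\mathbb{C}^{d}) \otimes \operatorname{Sym}^{n-k}(\mathbb{C}^{d})$ containing $\mathcal{E}_{z_{0}} \otimes z_{0}^{\otimes(n-k)}$, and Pieri's rule gives the multiplicity-free two-row decomposition of the ambient space. From there the paper argues structurally: Frobenius reciprocity produces a $U(d)$-equivariant surjection from $\operatorname{Ind}_{U(1)\times U(d-1)}^{U(d)}(\mathcal{E}_{z_{0}} \otimes z_{0}^{\otimes(n-k)})$ onto the target, decomposes $\mathcal{E}_{z_{0}} \otimes z_{0}^{\otimes(n-k)}$ as $\bigoplus_{i=0}^{r}\operatorname{Sym}^{i}(\mathbb{C}^{d-1})\otimes W^{\otimes(n-i)}$, and invokes the $U(d)\downarrow U(1)\times U(d-1)$ branching rule to pin down which $S_{(n-i,i)}$ can occur; the final bookkeeping (in particular that every $i \le r$ genuinely occurs) is left implicit. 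You instead work entirely inside $\operatorname{Sym}^{k}\otimes\operatorname{Sym}^{n-k}$: the divisibility description $\mathcal{E}_{z_{0}} = x_{1}^{k-r}\,\mathbb{C}[x]_{r}$ plus the weight bound $\mu_{1}^{+}\le\lambda_{1}$ kills all summands with $i>r$, and the explicit Pieri highest-weight vectors $hw_{i}$ (your $E_{12}$ computation and the identity $(i-j)\binom{i}{j}=(j+1)\binom{i}{j+1}$ check out), paired against $v_{i}=x_{1}^{k-i}x_{2}^{i}\otimes y_{1}^{n-k}$ via orthogonality of isotypic components, force every $i\le r$ to occur. Your version is more elementary and actually completes both inclusions explicitly, which is a genuine plus over the paper's sketch; the paper's version is the one that generalizes (it sits naturally alongside the Borel--Weil--Bott and homogeneous-bundle machinery already used in Proposition~\ref{prop:hs-rep} and would adapt to fibers other than almost-product states). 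One small point worth flagging in either approach: the statement implicitly assumes $r \le \min(k, n-k)$ (otherwise some summands $S_{(n-i,i)}$ do not exist in the ambient space and your $v_{i}$, $hw_{i}$ are undefined); you are in fact slightly more careful about this than the paper, which writes the Pieri sum without the $n-k$ constraint.
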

\begin{proof}
By Proposition~\ref{prop:hs-rep}, we know that the Hilbert space is
\[ \mathcal{H} \cong \operatorname{span}_{\mathbb{C}} \bigcup _{U \in U(d)} U \cdot (\mathcal{E}_{z_{0}} \otimes z_{0}^{\otimes (n - k)}) \]
and is contained in $\operatorname{Sym}^{k}(\mathbb{C}^{d}) \otimes \operatorname{Sym}^{n - k}(\mathbb{C}^{d})$. By Pieri's formula (see \cite[Theorem~40.4]{Bump_2013} or \cite[Equation~5.16]{Macdonald_1999}), we have
\begin{align*}
\operatorname{Sym}^{k}(\mathbb{C}^{d}) \otimes \operatorname{Sym}^{n - k}(\mathbb{C}^{d}) &\cong \operatorname{Sym}^{k}(\mathbb{C}^{d}) \otimes S_{(n - k)}(\mathbb{C}^{d}) \\
&\cong \bigoplus _{\lambda} S_{\lambda}(\mathbb{C}^{d})
\shortintertext{where $\lambda$ has $k$ additional boxes with no two in the same column, which is then}
&\cong \bigoplus _{i = 0} ^{k} S_{(n - k + i, k - i)}(\mathbb{C}^{d}).
\end{align*}

By Frobenius reciprocity, there is a $U(d)$-equivariant surjection
\[ \operatorname{Ind}_{U(1) \times U(d - 1)}^{U(d)} (\mathcal{E}_{z_{0}} \otimes z_{0}^{\otimes (n - k)}) \twoheadrightarrow \mathcal{H} \]
from the induced representation to $\mathcal{H}$. As a $(U(1) \times U(d - 1))$-representation, we have
\[ (\mathcal{E}_{z_{0}} \otimes z_{0}^{\otimes (n - k)}) \cong \bigoplus _{i = 0} ^{r} \operatorname{Sym}^{i}(\mathbb{C}^{d - 1}) \otimes W^{\otimes (n - i)} \]
where $W$ is the one-dimensional representation of $U(1)$.
It is known from the branching rule from $U(d)$ to $U(1) \times U(d - 1)$ \cite[Theorem~41.1]{Bump_2013} that the induced representation of an irreducible representation $W^{\otimes k} \otimes S_{\lambda}(\mathbb{C}^{d - 1})$ of $U(1) \times U(d - 1)$ contains $S_{\mu}(\mathbb{C}^{d})$ once for every $\mu$ such that $|\mu| = k + |\lambda|$ and $\mu_{1} \geq \lambda_{1} \geq \mu_{2} \geq \cdots \geq \lambda_{d - 1} \geq \mu_{d}$.
\end{proof}

\begin{proposition}\label{prop:hookcontent-app}
We have
\[ \dim S_{(n - i, i)}(\mathbb{C}^{d}) = \frac{n - 2i + 1}{n - i + 1}\binom{n - i + d - 1}{d - 1}\binom{i + d - 2}{d - 2}. \]
\end{proposition}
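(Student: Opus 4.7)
The plan is to apply the hook content formula (equivalently, the specialization of the Weyl dimension formula for $GL_d(\mathbb{C})$ at $\lambda = (n-i, i, 0, \ldots, 0)$), which asserts
\[ \dim S_\lambda(\mathbb{C}^d) = \prod_{(r,c) \in \lambda} \frac{d + c - r}{h(r,c)}, \]
where $(r,c)$ ranges over the cells of the Young diagram of $\lambda$ and $h(r,c)$ is the hook length at $(r,c)$. After a routine calculation of both products for the two-row shape $(n-i, i)$, the expression will rearrange into the claimed form. We may assume $i \leq n - i$, since the partition is otherwise not valid; if $i > n - i$ the formula also evaluates correctly (the factor $n-2i+1$ reflects this when interpreted with appropriate sign conventions).

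For the numerator, the content $d + c - r$ over row $1$ gives $\prod_{c=1}^{n-i}(d+c-1) = (d+n-i-1)!/(d-1)!$, and over row $2$ gives $\prod_{c=1}^{i}(d+c-2) = (d+i-2)!/(d-2)!$. For the denominator, I would split the hooks of row $1$ according to whether $c \leq i$ (leg length $1$, hook $n-i-c+2$) or $c > i$ (leg length $0$, hook $n-i-c+1$), and note that row $2$ contributes hooks $i, i-1, \ldots, 1$. Multiplying these three blocks telescopes into
\[ \frac{(n-i+1)!}{(n-2i+1)!} \cdot (n-2i)! \cdot i! \;=\; \frac{(n-i+1)!\, i!}{n-2i+1}. \]

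Assembling the two sides gives
\[ \dim S_{(n-i,i)}(\mathbb{C}^d) = \frac{(n-2i+1)\,(d+n-i-1)!\,(d+i-2)!}{(d-1)!\,(d-2)!\,(n-i+1)!\,i!}, \]
and recognizing $(n-i+1)! = (n-i+1)(n-i)!$ together with the identifications $\binom{n-i+d-1}{d-1} = (d+n-i-1)!/((d-1)!(n-i)!)$ and $\binom{i+d-2}{d-2} = (d+i-2)!/((d-2)!\, i!)$ yields the claimed factorization $\tfrac{n-2i+1}{n-i+1}\binom{n-i+d-1}{d-1}\binom{i+d-2}{d-2}$.

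The proof is essentially all bookkeeping; the only mild subtlety is getting the hook lengths in row $1$ right at the boundary $c = i$, where the leg of the hook changes from $1$ to $0$. Nothing deeper is required, and the final algebraic rearrangement into the two binomial coefficients with the prefactor $(n-2i+1)/(n-i+1)$ follows immediately from the factorial identities above.
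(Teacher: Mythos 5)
Your proof is correct and follows essentially the same route as the paper: both apply the hook-content dimension formula to the two-row shape $(n-i,i)$, compute the content product row by row and the hook-length product by splitting row $1$ at column $i$, and then rearrange factorials into the stated binomial form. The only difference is cosmetic (you quote the formula directly as $\prod (d+c-r)/h(r,c)$, while the paper phrases it via counting semistandard tableaux with entries at most $d$), and your bookkeeping is in fact cleaner than the paper's, which contains some index typos.
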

\begin{proof}
We use Stanley's hook-content formula \cite[Theorem~7.21.2]{Stanley_2001} to calculate the number of semistandard Young tableaux of shape $(n - i, i)$ with entries bounded by $d$, which is then equal to the dimension of $S_{(n - i, i)}(\mathbb{C}^{d})$ \cite[Proposition~41.2]{Bump_2013}. In the first row the content of a cell in column $c$ is $c - 1$, and in the second row the content of a cell in column $c$ is $c - 2$.
For a cell in the first row and column $c \geq i$, the hook length is $n - r - c + 1$. For a cell in the first row and column $c \leq i$, the hook length is $n - r - c + 2$. For a cell in the second row and column $c$, the hook length is $r - c + 1$. Overall, the denominator in the hook-content formula is the product of the hook lengths, which is
\begin{align*}
\left(\prod _{j = n - 2i + 2} ^{n - r + 1} j\right)\left(\prod _{j = 1} ^{n - 2r} j\right)\left(\prod _{j = 1} ^{i} j\right) &= \frac{(n - i + 1)!}{(n - 2i + 1)!}(n - 2i)!i!.
\end{align*}
The numerator, using the hook-contents, is
\begin{align*}
\left(\prod _{j = 0} ^{n - i - 1} (d + i)\right)\left(\prod _{j = -1} ^{i - 2} (d + j)\right) &= \left(\frac{(d + n - i - 1)!}{(d - 1)!}\right)\left(\frac{(d + i - 2)!}{(d - 2)!}\right).
\end{align*}
Taking the ratio gives
\begin{align*}
\dim S_{(n - i, i)}(\mathbb{C}^{d}) &= \frac{(n - i + d - 1)!(i + d - 2)!}{(n - i + 1)!(n - 2i + 1)!^{-1}(n - 2i)!i!(d - 1)!(d - 2)!} \\
&= (n - 2i + 1)\frac{(n - i + d - 1)!(i + d - 2)!}{(n - i + 1)!i!(d - 1)!(d - 2)!} \\
&= \frac{n - 2i + 1}{n - i + 1}\left(\frac{(n - i + d - 1)!}{(n - i)!(d - 1)!}\right)\left(\frac{(i + d - 2)!}{i!(d - 2)!}\right) \\
&= \frac{n - 2i + 1}{n - i + 1}\binom{n - i + d - 1}{d - 1}\binom{i + d - 2}{d - 2}.
\end{align*}
\end{proof}

\begin{proposition}\label{prop:exp-dim}
We have
\[ \dim \mathcal{H}_{\mathcal{L}_{1} \otimes \mathcal{L}_{2}^{\otimes (n - k)}} = \binom{n + d - r - 1}{d - 1}\binom{d + r - 1}{d - 1}. \]
\end{proposition}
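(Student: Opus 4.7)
The plan is to bypass direct evaluation of the explicit sum one obtains from combining Propositions~\ref{prop:exp-reps} and~\ref{prop:hookcontent-app}, by recognizing the direct sum from Proposition~\ref{prop:exp-reps} as isomorphic to a tensor product of two symmetric powers whose dimension is immediate.

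First I would invoke Proposition~\ref{prop:exp-reps} to write
\[ \mathcal{H}_{\mathcal{L}_{1} \otimes \mathcal{L}_{2}^{\otimes (n-k)}} \;\cong\; \bigoplus_{i=0}^{r} S_{(n-i,\,i)}(\mathbb{C}^{d}). \]
Next I would apply Pieri's rule \cite[Theorem~40.4]{Bump_2013} (already used in the proof of Proposition~\ref{prop:exp-reps}) to decompose $\operatorname{Sym}^{n-r}(\mathbb{C}^{d}) \otimes \operatorname{Sym}^{r}(\mathbb{C}^{d})$. The summands correspond to Young diagrams obtained from a single row of length $n-r$ by adjoining a horizontal strip of size $r$; since the base shape has only one row, the horizontal-strip condition forces the result to have at most two rows, and the resulting shape $(n-i,\,i)$ is indexed by $i$, the number of boxes added to the second row, which ranges over $0 \leq i \leq r$. (The upper bound $i \leq n-r$ is automatic given $r \leq n/2$, and $r \leq n/2$ is already the condition built into Proposition~\ref{prop:exp-reps} for each $(n-i,i)$ to be a valid partition.) Each shape appears exactly once, so
\[ \operatorname{Sym}^{n-r}(\mathbb{C}^{d}) \otimes \operatorname{Sym}^{r}(\mathbb{C}^{d}) \;\cong\; \bigoplus_{i=0}^{r} S_{(n-i,\,i)}(\mathbb{C}^{d}) \;\cong\; \mathcal{H}_{\mathcal{L}_{1} \otimes \mathcal{L}_{2}^{\otimes (n-k)}}, \]
and taking dimensions on the left yields
\[ \dim \mathcal{H}_{\mathcal{L}_{1} \otimes \mathcal{L}_{2}^{\otimes (n-k)}} \;=\; \binom{n-r+d-1}{d-1}\binom{r+d-1}{d-1}. \]

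The only real obstacle is careful bookkeeping with Pieri's rule, which is standard. If one prefers an argument that stays closer to the tools already developed in this section (and in particular uses Proposition~\ref{prop:hookcontent-app} directly), an alternative is to set $F(r) \coloneqq \binom{n-r+d-1}{d-1}\binom{r+d-1}{d-1}$ and check by routine algebra that $F(i) - F(i-1)$ equals the hook-content expression from Proposition~\ref{prop:hookcontent-app}; the identity then follows by telescoping with the base case $F(0) = \binom{n+d-1}{d-1} = \dim S_{(n,0)}(\mathbb{C}^{d})$. Either route avoids any genuinely new computation.
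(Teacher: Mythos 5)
Your proof is correct, but it takes a genuinely different route from the paper. The paper combines Proposition~\ref{prop:exp-reps} with the hook-content computation of Proposition~\ref{prop:hookcontent-app} and then establishes the closed form $\binom{n+d-r-1}{d-1}\binom{d+r-1}{d-1}$ by induction on $r$, via a fairly long manipulation of binomial coefficients. You instead bypass Proposition~\ref{prop:hookcontent-app} entirely by recognizing the multiplicity-free sum $\bigoplus_{i=0}^{r} S_{(n-i,i)}(\mathbb{C}^{d})$ as the Pieri decomposition of $\operatorname{Sym}^{n-r}(\mathbb{C}^{d}) \otimes \operatorname{Sym}^{r}(\mathbb{C}^{d})$ \cite[Theorem~40.4]{Bump_2013}, so the dimension is immediately $\binom{n-r+d-1}{d-1}\binom{r+d-1}{d-1}$. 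This is cleaner and reuses a tool already invoked in the proof of Proposition~\ref{prop:exp-reps}; the one point requiring care is that Pieri gives summands $S_{(n-i,i)}$ only for $i \leq \min(r, n-r)$, so the identification with the full sum up to $r$ needs $r \leq n - r$ --- you correctly observe that this is already implicit in Proposition~\ref{prop:exp-reps}, since otherwise $(n-i,i)$ would not be a partition for all $i \leq r$. What the paper's approach buys is a self-contained verification that also exercises the explicit dimension formula of Proposition~\ref{prop:hookcontent-app} (which may be of independent use); what yours buys is brevity and a conceptual explanation of why the product of two binomial coefficients appears. Your alternative telescoping argument with $F(r)$ is essentially the paper's induction restated, with the algebra left unverified, so only your primary route constitutes a genuinely new proof.
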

\begin{proof}
From Proposition~\ref{prop:exp-reps} and Proposition~\ref{prop:hookcontent-app}, we have
\begin{align*}
\dim \mathcal{H}_{\mathcal{L}_{1} \otimes \mathcal{L}_{2}^{\otimes (n - k)}} &= \sum _{i = 0} ^{r} \dim S_{(n - i, i)}(\mathbb{C}^{d}) \\
&= \sum _{i = 0} ^{r} \frac{n - 2i + 1}{n - i + 1}\binom{n - i + d - 1}{d - 1}\binom{i + d - 2}{d - 2}.
\end{align*}
We then prove the statement by induction on $r$. For the base case $r = 0$, we have
\begin{align*}
\binom{n + d - r - 1}{d - 1}\binom{d + r - 1}{d - 1} &= \binom{n + d - 1}{d - 1} \\
&= \dim \operatorname{Sym}^{n}(\mathbb{C}^{d}) \\
&= \dim S_{(n)}(\mathbb{C}^{d}).
\end{align*}
For the induction step, we have
\begin{align*}
&\phantom{={}} \binom{n + d - r - 1}{d - 1}\binom{d + r - 1}{d - 1} + \frac{n - 2(r + 1) + 1}{n - (r + 1) + 1}\binom{n - (r + 1) + d - 1}{d - 1}\binom{r + 1 + d - 2}{d - 2} \\
&= \binom{n + d - r - 1}{d - 1}\binom{d + r - 1}{d - 1} + \frac{n - 2r - 1}{n - r}\binom{n + d - r - 2}{d - 1}\binom{d + r - 1}{d - 2} \\
&= \frac{(n + d - r - 2)!}{(d - 1)!(n - r - 1)!} \cdot \frac{(d + r - 1)!}{r!(d - 2)!}\left( \frac{n + d - r - 1}{n - r}\cdot\frac{1}{d - 1} + \frac{n - 2r - 1}{n - r}\cdot\frac{1}{r + 1} \right) \\
&= \frac{(n + d - r - 2)!}{(d - 1)!(n - r)!} \cdot \frac{(d + r - 1)!}{r!(d - 2)!}\left(\frac{n + d - r - 1}{d - 1} + \frac{n - 2r - 1}{r + 1} \right) \\
&= \frac{(n + d - r - 2)!(d + r - 1)!}{(d - 1)!(n - r)!r!(d - 2)!} \cdot \frac{(n - r)(d - 1 + r + 1) + (r + 1)(d - 1) + (d - 1)(-r - 1)}{(d - 1)(r + 1)} \\
&= \frac{(n + d - r - 2)!(d + r - 1)!}{(d - 1)!(n - r)!r!(d - 2)!} \cdot \frac{(n - r)(d + r)}{(d - 1)(r + 1)} \\
&= \frac{(n + d - r - 2)!}{(d - 1)!(n - r - 1)!} \cdot \frac{(d + r)!}{(r + 1)!(d - 1)!} \\
&= \binom{n + d - r - 2}{d - 1}\binom{d + r}{d - 1}
\end{align*}
completing the induction.
\end{proof}

\begin{theorem}
Any state in $\mathcal{H}_{\mathcal{L}_{1} \otimes \mathcal{L}_{2}^{\otimes (n - k)}}$ is $\epsilon$-close to a convex combination of $r$-almost product states, where
\[
\epsilon \leq 1 - \left(1 - \frac{2dk}{n}\right)\left(1 + \frac{d - 1}{n}\right)^{r}.
\]
\end{theorem}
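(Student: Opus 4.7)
The plan is to apply Theorem~\ref{thm:main-thm} together with Corollary~\ref{cor:homog-err} (or its natural refinement to the non-transitive case) to the data $(M, \mathcal{L}_1, \mathcal{L}_2)$ from the exponential de Finetti setup of this section, identifying the ``$\mathcal{L}_1$'' of the main theorem with $\mathcal{L}_1 = \mathcal{O}_{\mathcal{E}}(1)$ and the ``$\mathcal{L}_2$'' with $\mathcal{L}_2^{\otimes(n-k)} = p^{\ast}\mathcal{O}(n-k)$. Under this identification the input Hilbert space is $\mathcal{H}_{\mathcal{L}_1 \otimes \mathcal{L}_2^{n-k}}$ as in the theorem statement, and the coherent states into which the state is to be decomposed live on the same space.

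First I would check that the coherent states of $\mathcal{H}_{\mathcal{L}_1 \otimes \mathcal{L}_2^{n-k}}$ are the $r$-almost product states. By Proposition~\ref{prop:tensor-coh} the coherent state at a point $(x,[\psi]) \in M$ factors through the multiplication map $M_{\mathcal{L}_1, \mathcal{L}_2^{n-k}}$ as the product of the $\mathcal{L}_1$-coherent state at $(x,[\psi])$, which by the construction of $\mathcal{E}$ is the $(r,x)$-almost product state $\psi \in V_x \subseteq \operatorname{Sym}^k(\mathbb{C}^d)$, with the $\mathcal{L}_2^{n-k}$-coherent state at $(x,[\psi])$, which depends only on $x$ and equals $x^{\otimes(n-k)}$. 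Writing $\psi = P_{\operatorname{Sym}^{k}(\mathbb{C}^d)}(\phi \otimes x^{\otimes(k-r)})$ with $\phi \in \operatorname{Sym}^r(\mathbb{C}^d)$ and iterating the symmetrization, the combined vector is proportional to $P_{\operatorname{Sym}^{n}(\mathbb{C}^d)}(\phi \otimes x^{\otimes(n-r)})$, an $r$-almost product state as required.

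For the quantitative bound I would feed the relevant dimensions into Corollary~\ref{cor:homog-err}: from Proposition~\ref{prop:exp-dim}, $\dim \mathcal{H}_{\mathcal{L}_1 \otimes \mathcal{L}_2^{n-k}} = \binom{n+d-r-1}{d-1}\binom{d+r-1}{d-1}$, and $\dim \mathcal{H}_{\mathcal{L}_2^{n-k}} = \binom{n-k+d-1}{d-1}$ is standard. To rewrite the resulting dimension ratio in the product form of the theorem statement, I would split it into a ``standard de~Finetti'' factor bounded below by $1 - 2dk/n$ (via the estimate $\binom{n-k+d-1}{d-1}/\binom{n+d-1}{d-1} \geq 1 - dk/n$ from the proof of Proposition~\ref{prop:origdf}) together with a correction factor using
\[
\frac{\binom{n+d-1}{d-1}}{\binom{n+d-r-1}{d-1}} = \prod_{i=0}^{r-1} \frac{n+d-1-i}{n-i} \geq \left(1 + \frac{d-1}{n}\right)^{r},
\]
which reflects the enlargement of the ambient Hilbert space from $\operatorname{Sym}^n(\mathbb{C}^d)$ to $\bigoplus_{i=0}^r S_{(n-i,i)}(\mathbb{C}^d)$ when $r > 0$.

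The main obstacle will be that $M = \mathbb{P}(\mathcal{E})$ is \emph{not} homogeneous under the natural $U(d)$-action: the stabilizer $U(1)\times U(d-1)$ of a base point does not act transitively on the projective fiber $\mathbb{P}(V_x) \cong \mathbb{P}(\operatorname{Sym}^r(\mathbb{C}^d))$, so Corollary~\ref{cor:homog-err} does not apply verbatim. To handle this I would exploit the fact that $\mathcal{L}_2 = p^{\ast}\mathcal{O}(1)$ is pulled back from $\mathbb{C}P^d$, which forces the $\mathcal{L}_2^{n-k}$-coherent states (and hence the associated POVM density) to be constant along fibers; once the fiber measures are chosen $U(1)\times U(d-1)$-invariant, the Radon-Nikodym derivative $d\mu_{\mathcal{L}_2^{n-k}}/d\mu_{\mathcal{L}_1 \otimes \mathcal{L}_2^{n-k}}$ only feels the $U(d)$-homogeneous base, so the supremum bound needed in Theorem~\ref{thm:main-thm} collapses to the ratio of total masses on $\mathbb{C}P^d$, which is the dimension ratio computed above.
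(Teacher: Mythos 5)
Your overall strategy (apply Theorem~\ref{thm:main-thm} to the pair $\mathcal{L}_{1} = \mathcal{O}_{\mathcal{E}}(1)$, $\mathcal{L}_{2}^{\otimes(n-k)} = p^{*}\mathcal{O}(n-k)$, identify the coherent states with $r$-almost product states via Proposition~\ref{prop:tensor-coh}, then bound a ratio of normalizations) is the same as the paper's, and your coherent-state identification and the inequality $\binom{n+d-1}{d-1}/\binom{n+d-r-1}{d-1} \geq (1+\tfrac{d-1}{n})^{r}$ are exactly the ingredients used there. But there is a genuine quantitative gap in how you normalize the coherent-state POVM on the large space. You propose to ``feed the dimensions into Corollary~\ref{cor:homog-err},'' i.e.\ to use $\dim \mathcal{H}_{\mathcal{L}_{1}\otimes\mathcal{L}_{2}^{\otimes(n-k)}} = \binom{n+d-r-1}{d-1}\binom{d+r-1}{d-1}$ as the relevant mass, and your closing claim that the Radon--Nikodym supremum ``collapses to the ratio of total masses on $\mathbb{C}P^{d}$'' still gives exactly that full dimension in the denominator (pushing a measure forward does not change its total mass). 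The resulting ratio $\binom{n-k+d-1}{d-1}\big/\bigl[\binom{n+d-r-1}{d-1}\binom{d+r-1}{d-1}\bigr]$ is smaller than what you need by the factor $\binom{d+r-1}{d-1} = \dim V_{z_{0}}$, and your proposed splitting into the factors $(1-\tfrac{2dk}{n})$ and $(1+\tfrac{d-1}{n})^{r}$ leaves that factor unaccounted for, so the stated bound does not follow. The paper's essential extra step is that the coherent states over a single fiber $\mathbb{P}(V_{z})$ integrate to the rank-$\dim V_{z}$ projector $P_{V_{z}}$, so the POVM is effectively indexed by the base and the correct normalization is $\dim\mathcal{H}_{\mathcal{L}_{1}\otimes\mathcal{L}_{2}^{\otimes(n-k)}}/\dim V_{z_{0}} = \binom{n+d-r-1}{d-1}$ (computed from Proposition~\ref{prop:exp-dim} and $\dim V_{z_{0}} = \binom{d+r-1}{d-1}$); it is this quantity, with the $\binom{d+r-1}{d-1}$ cancelled, that gets compared to $\binom{n-k+d-1}{d-1}$.

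A secondary, related gap: your fix for the failure of homogeneity only argues that the $\mathcal{L}_{2}$-coherent states are fiberwise constant and the fiber measures invariant, which (by $U(d)$-equivariance and Schur) only shows the POVM integral is scalar on each irreducible block of $\mathcal{H}_{\mathcal{L}_{1}\otimes\mathcal{L}_{2}^{\otimes(n-k)}} \cong \bigoplus_{i=0}^{r} S_{(n-i,i)}(\mathbb{C}^{d})$. To get a single well-defined normalization constant one must show these block scalars coincide; the paper does this using the multiplicity-freeness from Proposition~\ref{prop:exp-reps} together with the additional fiberwise $U(V_{z_{0}})$ symmetry. Without both of these points --- the equality of the block scalars and the identification of the normalization as $\dim\mathcal{H}/\dim V_{z_{0}}$ rather than $\dim\mathcal{H}$ --- the argument as you outline it proves only a bound weakened by the factor $\binom{d+r-1}{d-1}$, which destroys precisely the improvement in $r$ that the theorem is asserting.
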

\begin{proof}
Consider the group $U(d)$ acting on $M$ by bundle automorphisms preserving the Hermitian metrics on $\mathcal{L}_{1}$ and $\mathcal{L}_{2}$. Under the projection $p : M \to \mathbb{C}P^{d}$ the $U(d)$-action become the usual action on $\mathbb{C}P^{d}$. Recall that we have chosen the inner products on the Hilbert spaces $\mathcal{H}_{\mathcal{L}_{1} \otimes \mathcal{L}_{2}^{\otimes (n - k)}}$ to be the restrictions of the inner products on $(\mathbb{C}^{d})^{\otimes n}$. Thus the unitary group $U(V_{z_{0}})$ acts on $M$, again preserving the Hermitian metrics, fiberwise on each fiber of the projection over $\mathbb{C}P^{d}$. We take the measure on $M$ to be the volume measure of the Fubini-Study metric constructed from the Hermitian metric on $\mathcal{E}$, which is invariant under both the $U(d - 1)$- and $U(V_{z_{0}})$-actions.

The decomposition of $\mathcal{H}_{\mathcal{L}_{1} \otimes \mathcal{L}_{2}^{\otimes (n - k)}}$ is multiplicity-free by Proposition~\ref{prop:exp-reps}. By a similar argument as in Corollary~\ref{cor:homog-err}, integrating the coherent state POVM over $M$ will give a self-adjoint operator which is a scalar multiple of the identity on each irreducible representation contained in $\mathcal{H}_{\mathcal{L}_{1} \otimes \mathcal{L}_{2}^{\otimes (n - k)}}$. Restricting to a representation of $U(1) \times U(d - 1)$, the fact that the Hilbert space is a representation of $U(V_{z_{0}})$ implies that each of these scalars must be equal and the integral of the POVM is a scalar multiple of the identity.

To determine the scalar, we write the integral of the POVM as
\begin{align*}
\int _{M} c_{z}\,d\mu(z) &= \int _{\mathbb{C}P^{d}} \left(\int _{p^{-1}(z)} c_{w}\,d\mu_{p^{-1}(z)}(w)\right)\,d\mu_{FS}(z) \\
&= \int _{\mathbb{C}P^{d}} P_{V_{z}}\,d\mu_{FS}(z)
\end{align*}
where $P_{V_{z}}$ is the projector onto the subspace $V_{z} \subseteq \mathcal{H}_{\mathcal{L}_{1} \otimes \mathcal{L}_{2}^{\otimes (n - k)}}$. The second equality holds due to the choice of the Hermitian metrics and inner products, which are such that the coherent state $c_{z}$ is equal to the corresponding $r$-almost product state and the integral gives the projector onto $V_{z}$. Thus the correct normalization factor is $(\dim \mathcal{H}_{\mathcal{L}_{1} \otimes \mathcal{L}_{2}^{\otimes (n - k)}})/(\dim V_{z_{0}})$. We have
\begin{align*}
\dim V_{z_{0}} &= \sum _{i = 0} ^{r} \dim \operatorname{Sym}^{i}(\mathbb{C}^{d - 1}) \\
&= \dim \operatorname{Sym}^{r}(\mathbb{C}^{d}) \\
&= \binom{d + r - 1}{d - 1}
\end{align*}
so by Proposition~\ref{prop:exp-dim} the normalization factor is
\begin{align*}
\frac{\dim \mathcal{H}_{\mathcal{L}_{1} \otimes \mathcal{L}_{2}^{\otimes (n - k)}}}{\dim V_{z_{0}}} &= \frac{\binom{n + d - r - 1}{d - 1}\binom{d + r - 1}{d - 1}}{\binom{d + r - 1}{d - 1}} \\
&= \binom{n + d - r - 1}{d - 1}.
\end{align*}
We have
\[
\dim \mathcal{H}_{\mathcal{L}_{2}^{\otimes (n - k)}} = \dim \operatorname{Sym}^{n - k}(\mathbb{C}^{d}) = \binom{n - k + d - 1}{d - 1}
\]
and
\begin{align*}
\binom{n + d - 1}{n} &= \binom{n - r + d - 1}{n - r}\prod _{i = 0} ^{r - 1} \frac{n - i + d - 1}{n - i} \\
&= \binom{n - r + d - 1}{d - 1}\prod _{i = 0} ^{r - 1} \frac{n - i + d - 1}{n - i}.
\end{align*}
Thus
\begin{align*}
&\phantom{{}={}} \binom{n - k + d - 1}{d - 1}\binom{n + d - 1}{d - 1}^{-1} \prod _{i = 0} ^{r - 1} \frac{n - i + d - 1}{n - i} \\
&\geq \binom{n - k + d - 1}{d - 1}\binom{n + d - 1}{d - 1}^{-1} \prod _{i = 0} ^{r - 1} \frac{n + d - 1}{n} \\
&= \binom{n - k + d - 1}{d - 1}\binom{n + d - 1}{d - 1}^{-1} \left(1 + \frac{d - 1}{n}\right)^{r} \\
&\geq \left(1 - \frac{2dk}{n}\right)\left(1 + \frac{d - 1}{n}\right)^{r}
\end{align*}
applying the proof of Proposition~\ref{prop:origdf} in the last line, giving the final error bound by Theorem~\ref{thm:main-thm}.
\end{proof}

\subsection{Error bounds for classical optimization}
One of application of the original quantum de Finetti theorem is for error bounds for certain classes separability testing problems. In general, in separability testing one is given a density matrix $\rho$ on a bipartite Hilbert space $\mathcal{H}_{A} \otimes \mathcal{H}_{B}$ and wants to compute the value of the optimization problem
\[ \max _{\ket{\psi}_{A}, \ket{\phi}_{B}} \operatorname{Tr}((\psi \otimes \phi)\rho). \]
Since states of the form $\psi \otimes \phi$ are the extreme points of the convex set of separable states, this quantity is roughly the distance of $\rho$ to the set of separable states. One can recover these results using the multi-symmetric quantum de Finetti theorem of Section~\ref{sec:multisym}.

To see the general strategy for deducing an error bound from a de Finetti theorem, suppose $M$ is a compact K\"{a}hler manifold and $f : M \to \mathbb{R}$ is a Hermitian polynomial of degree-$k$ on $M$. Concretely, this means that there is some line bundle $\mathcal{L}$ over $M$ and an operator $A : \mathcal{H}_{\mathcal{L}} \to \mathcal{H}_{\mathcal{L}}$ such that $f(z) = \braket{c_{z}|A|c_{z}}$ for any point $z \in M$ and corresponding coherent state $\ket{c_{z}}$. If $\mathcal{L}_{2}$ is some other line bundle on $M$, then the de Finetti theorem in general gives us a bound of $\epsilon$ on the trace distance between some state $\rho$ and the mixture
\[ \rho' \coloneqq \int _{M} c_{\mathcal{L},z}\operatorname{Tr}(c_{\mathcal{L}',z}\rho)\,d\mu(z) \]
over coherent states.
Now take $\rho$ to be the eigenstate of the operator $A$ with the largest eigenvalue. Since the trace distance is dual to the operator norm, this gives us a bound
\begin{align*}
\epsilon &\geq |\operatorname{Tr}(A\rho) - \operatorname{Tr}(A\rho')| \\
&= \left| |A|_{\text{op}} - \int _{M} \operatorname{Tr}(A c_{\mathcal{L},z}) \operatorname{Tr}(c_{\mathcal{L}',z}\rho)\,d\mu(z) \right| \\
&= \left| |A|_{\text{op}} - \int _{M} f(z) \operatorname{Tr}(c_{\mathcal{L}',z}\rho)\,d\mu(z) \right|.
\end{align*}
Notice that the integral is the expectation value of $f(z)$ with respect to the Husimi quasiprobability distribution of the state $\rho$, which is nonnegative and integrates to 1. Thus
\[
\sup _{z \in M} f(z) \geq \int _{M} f(z) \operatorname{Tr}(c_{\mathcal{L}',z}\rho)\,d\mu(z).
\]
We also have
\[ |A|_{\text{op}} \geq \sup _{z \in M} f(z) \]
since $\braket{c_{z}|A|c_{z}} = f(z)$, so computing $|A|_{\text{op}}$ is a relaxation of computing the maximum of $f$. Combining the two inequalities shows that the bound given by the de Finetti theorem is also a bound on how far $|A|_{\text{op}}$ can be from the true optimum of $f$.

By taking powers of the same line bundle $\mathcal{L}$, we get a hierarchy of spectral algorithms for optimizing Hermitian polynomial functions. Moreover, it turns out that this hierarchy is the same as the well-known Hermitian sum-of-squares (HSoS) hierarchy which has previously been studied in optimization and real algebraic geometry \cite{Putinar_2012, Catlin_DAngelo_1996, Catlin_DAngelo_1997, DAngelo_2005, DAngelo_2011, Johnston_Lovitz_Vijayaraghavan_2023}. In the case of $\mathbb{C}P^{d}$, the hierarchy can be defined as
\begin{align*}
& \min |A|_{\text{op}} \\
& \operatorname{st} \begin{aligned}[t]
\braket{z^{\otimes n}| A| z^{\otimes n}} = f(z)\,\forall z \in \mathbb{C}P^{d} \\
A = A^{*} \in (\mathbb{C}^{d \times d})^{\otimes n}
\end{aligned}
\end{align*}
where $f$ is a degree-$2k$ bihomogeneous Hermitian polynomial. For optimization of real polynomials over spheres, the traditional SoS hierarchy is defined in the same way except that $A$ is only required to be a symmetric matrix and $f$ can be any degree-$k$ homogeneous polynomial with real coefficients.

In the real case the matrix $A$ is not uniquely specified, as can be seen from the degree-4 polynomial $x_{1}x_{2}x_{3}x_{4}$. In the Hermitian case, one can check that the constraints uniquely specify $A$: since the state $\bra{z^{\otimes n}}$ are in the symmetric subspace $\operatorname{Sym}^{n}(\mathbb{C}^{d})$ we can assume that $A$ is supported on the symmetric subspace, and since $f$ is Hermitian the equality constraint uniquely specifies the matrix $A$. This matrix $A$ must then be the quantization of $f$ since the state $\ket{z^{\otimes n}}$ is the coherent state at the point $z \in M$, from which the equality constraint can be rewritten as $\operatorname{Tr}(c_{z}A) = f(z)$.

\subsubsection{Optimization of sparse polynomials using creation and annihilation operators}\label{sec:sparse}
An immediate benefit of the quantization perspective is that it can allow us to write the matrix $A$ in terms of creation and annihilation operators (as was also done in \cite{Lewin_Nam_Rougerie_2015} and \cite{Rougerie_2020}), which can be more efficient computationally than writing $A$ as a dense matrix as has been done previously \cite{Johnston_Lovitz_Vijayaraghavan_2023}. To see how creation and annihilation operators arise, first recall that for $\mathbb{C}P^{d}$ the quantized Hilbert space for the bundle $\mathcal{L}^{n}$ is isomorphic to the symmetric subspace $\operatorname{Sym}^{n}(\mathbb{C}^{d}$. For the multi-mode Segal-Bargmann space, which is the quantization of $\mathbb{C}^{d}$ and is the Hilbert space of $d$ bosonic modes, we have the isomorphism
\[ \mathcal{H}_{SB} \cong \bigoplus _{n = 0} ^{\infty} \operatorname{Sym}^{n}(\mathbb{C}^{d}) \]
which splits $\mathcal{H}_{SB}$ into the direct sum of $n$-particle subspaces. The coherent state corresponding to a point $z \in \mathbb{C}^{d}$ is (up to a normalization constant) $w \mapsto \exp(\bar{z} \cdot w)$ in the holomorphic representation, which under the isomorphism corresponds to the concatenation
$(1, z/1!, z^{\otimes 2}/2!, \dots)$ of the coherent states for $\mathbb{C}P^{d}$.

The annihilation operator $a_{i}$ for the mode $i$ splits as a direct sum of operators $\operatorname{Sym}^{n}(\mathbb{C}^{d}) \to \operatorname{Sym}^{n - 1}(\mathbb{C}^{d})$, and similarly for the creation operators $a_{i}^{*}$. If
\[ f = \sum _{\substack{|\alpha| = k\\|\beta| = k}} c_{\alpha\beta} \bar{z}^{\alpha}z^{\beta} \]
is a bihomogeneous Hermitian polynomial, then define
\[ A \coloneqq \sum _{\substack{|\alpha| = k\\|\beta| = k}} c_{\alpha\beta} (a^{*})^{\alpha}a^{\beta} \]
where $a = (a_{1}, \dots, a_{d})$ is a vector of the annihilation operators. From the splitting of the $a_{i}$ we see that $A : \mathcal{H}_{SB} \to \mathcal{H}_{SB}$ splits as a direct sum of operators $A_{n} : \operatorname{Sym}^{n}(\mathbb{C}^{d}) \to \operatorname{Sym}^{n}(\mathbb{C}^{d})$. The identity $a_{i}c_{SB,z} = z_{i}c_{SB,z}$ for the bosonic coherent states, along with the expression for the $n$-particle components of the coherent states, shows that
\[ \braket{z^{\otimes n}|A_{n}|z^{\otimes n}} = f(z) \]
for all $z \in \mathbb{C}P^{d}$, so $A_{n}$ is indeed the quantization of $f$.

Computationally, this gives a more efficient algorithm when the polynomial $f$ is sparse. In order to compute the largest eigenvalue of $A$ one typically constructs $A$ as a dense matrix and uses the power method, which involves repeatedly multiplying a vector by $A$. If $f$ is sparse then one can instead use an implicit representation of $A$ in terms of creation and annihilation operators, for which a matrix-vector multiplication would use only $2k\cdot nnz(f)$ applications of a creation or annihilation operators, where $nnz(f)$ is the number of nonzero coefficients of $f$.

\subsubsection{de Finetti theorems and bounds for the HSoS hierarchy on smooth projective varieties}
One can also generalize the results of the previous section to the setting where $M$ is a more general projective variety, assuming only that they are smooth and not that there is a transitive action by a compact group. One issue that arises when there is no such transitive action is that one cannot use arguments based on Schur's lemma to appropriately normalize the coherent state POVM. Instead we will prove the following proposition, which shows that a measure which produces an approximately normalized POVM still suffices to prove a de Finetti theorem.

\begin{proposition}\label{prop:approx-povm}
In the setup of Theorem~\ref{thm:main-thm}, suppose $\mu_{\mathcal{L}_{2}}$ and $\mu_{\mathcal{L}_{1} \otimes \mathcal{L}_{2}}$ are such that they define approximate coherent state POVMs with incompleteness bounded by $\delta$. That is,
\begin{align*}
\left| \int _{M} e_{\mathcal{L}_{2},z}\,d\mu_{\mathcal{L}_{2}}(z) - \operatorname{id}_{\mathcal{H}_{\mathcal{L}_{2}}} \right|_{1} &\leq \delta \\
\shortintertext{and}
\left| \int _{M} e_{\mathcal{L}_{1} \otimes \mathcal{L}_{2},z}\,d\mu_{\mathcal{L}_{2}}(z) - \operatorname{id}_{\mathcal{H}_{\mathcal{L}_{1} \otimes \mathcal{L}_{2}}} \right|_{1} &\leq \delta.
\end{align*}
Then the statement of Theorem~\ref{thm:main-thm} holds for a $\delta$-approximately normalized mixture distribution with error $\epsilon \leq 2R + 4\delta$ where
\[ R = \sup _{x \in M} \left(1 - \frac{d\mu_{\mathcal{L}_{2}}}{d \mu_{\mathcal{L}_{1} \otimes \mathcal{L}_{2}}}(x) \right) \]
as before. Specifically, if $\rho_{12}$ is a mixed state on $\mathcal{H}_{\mathcal{L}_{1} \otimes \mathcal{L}_{2}}$ and $\rho_{1}$ is the reduced density matrix of $M^{*}_{\mathcal{L}_{1},\mathcal{L}_{2}}\rho_{12}M_{\mathcal{L}_{1},\mathcal{L}_{2}}$ on $\mathcal{H}_{\mathcal{L}_{1}}$, then $\rho_{1}$ is $\epsilon$-close in trace distance to a mixture over coherent states.
\end{proposition}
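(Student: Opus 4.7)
The plan is to mirror the proof of Theorem~\ref{thm:main-thm} closely, replacing each exact resolution of identity by its approximate counterpart and tracking the resulting $O(\delta)$-sized perturbations. I would define the classical mixture $\sigma \coloneqq \int_{M} e_{\mathcal{L}_{1},x}P(x)\,d\mu_{\mathcal{L}_{1}\otimes\mathcal{L}_{2}}(x)$, with $P$ the Husimi $Q$-function of $\rho_{12}$, exactly as in the original proof, and set $E_{2} \coloneqq \operatorname{id}_{\mathcal{H}_{\mathcal{L}_{2}}} - \int e_{\mathcal{L}_{2},x}\,d\mu_{\mathcal{L}_{2}}(x)$ and $E_{12} \coloneqq \operatorname{id}_{\mathcal{H}_{\mathcal{L}_{1}\otimes\mathcal{L}_{2}}} - \int e_{\mathcal{L}_{1}\otimes\mathcal{L}_{2},x}\,d\mu_{\mathcal{L}_{1}\otimes\mathcal{L}_{2}}(x)$, both with trace norm at most $\delta$ by hypothesis. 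Then $\operatorname{Tr}(\sigma) = 1 - \operatorname{Tr}(E_{12}\rho_{12}) \in [1-\delta,1+\delta]$, which realizes the stated ``$\delta$-approximately normalized'' condition.

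The exact resolution of identity is used in the original proof at exactly two places: equation~\eqref{eq:mainthm-povm1}, where $\rho_{1}$ is written as $\int w(x)\,d\mu_{\mathcal{L}_{2}}(x)$; and equation~\eqref{eq:mainthm-povm3}, where $\operatorname{Tr}_{2}(\rho')$ is rewritten as $\int\operatorname{Tr}_{2}((e_{\mathcal{L}_{1},x}\otimes e_{\mathcal{L}_{2},x})\rho')\,d\mu_{\mathcal{L}_{1}\otimes\mathcal{L}_{2}}(x)$ using additionally Proposition~\ref{prop:tensor-coh} and $MM^{*}=\operatorname{id}$. Approximately, these become $\rho_{1} = \int w\,d\mu_{\mathcal{L}_{2}} + \epsilon_{1}$ and $\operatorname{Tr}_{2}(\rho') = \int\operatorname{Tr}_{2}((e_{\mathcal{L}_{1},x}\otimes e_{\mathcal{L}_{2},x})\rho')\,d\mu_{\mathcal{L}_{1}\otimes\mathcal{L}_{2}} + \epsilon_{3}$, with residuals $\epsilon_{1} \coloneqq \operatorname{Tr}_{2}((\operatorname{id}\otimes E_{2})\rho')$ and $\epsilon_{3} \coloneqq \operatorname{Tr}_{2}(M^{*}E_{12}M\rho')$. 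Using the isometry property of $M^{*}$, trace-norm contractivity of the partial trace, and $\lVert E_{j}\rVert_{\mathrm{op}} \leq \lVert E_{j}\rVert_{1} \leq \delta$, both $\lVert\epsilon_{1}\rVert_{1}$ and $\lVert\epsilon_{3}\rVert_{1}$ are at most $\delta$.

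Propagating these perturbations through the original triangle-inequality decomposition $\rho_{1}-\sigma = \epsilon_{1} + (\alpha+\beta-\gamma) - \int e_{\mathcal{L}_{1},x}P(x)(1-r(x))\,d\mu_{\mathcal{L}_{1}\otimes\mathcal{L}_{2}}(x)$, the original estimate $|\alpha| \leq R/2$ becomes $|\alpha| \leq R/2 + O(\delta)$ via $\epsilon_{3}$; by symmetry $|\beta| = |\alpha|$; the inequality $|\gamma| \leq |\alpha|$ from the original proof survives intact because it depends only on $\gamma \succeq 0$ and the algebraic identity $\operatorname{Tr}(\gamma) = \operatorname{Tr}(\alpha)$, neither of which uses POVM exactness; and the final term, bounded by $R\int P\,d\mu_{\mathcal{L}_{1}\otimes\mathcal{L}_{2}}/2 \leq R(1+\delta)/2$, acquires an $O(R\delta)$ correction. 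The main obstacle is the bookkeeping: summing $|\epsilon_{1}| \leq \delta/2$, three copies of $R/2+O(\delta)$ from $|\alpha|, |\beta|, |\gamma|$, and $R/2+O(R\delta)$ from the final term, and absorbing the $R\delta$ terms into the $\delta$ budget (valid for $R \leq 1$), the total combines to at most $2R+4\delta$. No new ingredient beyond a careful perturbation of the original proof is required.
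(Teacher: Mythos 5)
Your proposal is correct and follows essentially the same route as the paper: rerun the proof of Theorem~\ref{thm:main-thm}, noting that the exact resolutions of identity enter only at Equations~\eqref{eq:mainthm-povm1} and~\eqref{eq:mainthm-povm3}, and that each such use costs an additive error of order $\delta$ in trace distance, yielding $|\alpha|,|\beta|,|\gamma|\leq R/2+\delta$ and a total bound of $2R+4\delta$. The only differences are minor bookkeeping choices (e.g.\ your absorption of the $R\delta$ term using $R\leq 1$), which do not change the argument.
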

\begin{proof}
The proof is the same as the proof of Theorem~\ref{thm:main-thm}, except that an additional additive error of $\delta$ is incurred in each steps where one of the POVMs is used.
This first occurs in Equation~\ref{eq:mainthm-povm1}, for which the equality will no longer hold exactly but only up to $\delta$ in trace distance, showing that Equation~\ref{eq:mainthm-povm2} holds with an additional term of $\delta$ on the right-hand side.
Similarly, an additional term of $\delta$ will be added to the right-hand side of Equation~\ref{eq:mainthm-povm3} where the POVM for $\mu_{\mathcal{L}_{1} \otimes \mathcal{L}_{2}}$ is used, eventually giving a slightly weaker bound of $|\alpha| \leq R/2 + \delta$.
The analogous argument for $\beta$ will result in a bound $|\beta| \leq R/2 + \delta$, and similarly in Equation~\ref{eq:mainthm-povm4} we will get a bound of $|\gamma| \leq R/2 + \delta$. Summing gives an overall bound of $2R + 4\delta$.
\end{proof}

Now suppose $M$ has some embedding $\iota : M \to \mathbb{P}(V)$ into some complex projective space $\mathbb{P}(V)$ equipped with the Fubini-Study metric associated to some Hermitian inner product on $V$ and $\mathcal{L} = \iota^{*}\mathcal{O}(1)$. Let the Hermitian metric $h$ on $\mathcal{L}$ be the pullback of the Fubini-Study metric on $\mathcal{O}(1)$. The inner products on $\mathcal{H}_{\mathcal{L}^{k}}$ are the restrictions of the ones on $\Gamma_{\text{hol}}(\mathbb{P}(V), \mathcal{L}^{k})$, which are given as before by integration with respect to the Fubini-Study metric.

\begin{example}
There can be a transitive group action on $M$ even if the metric $h$ on $\mathcal{L}$ is not invariant under the action. For example, csondier $M = \mathbb{C}P^{1}$ and we consider the projective embedding $[x : y] \mapsto [x^{2} : axy : y^{2}]$ for some constant $a$. If $a \neq \sqrt{2}$ then the $U(2)$-action does not preserve the metric on $\mathcal{O}(2)$ pulled back from $\mathbb{C}P^{2}$, while if $a = \sqrt{2}$ then the metric is a rescaling of the one pulled back from $\mathcal{O}(2)$.
\end{example}

Consider the sequence of operators
\[ A_{k} \coloneqq (\dim \mathcal{H}_{\mathcal{L}^{k}}) \int _{M} c_{\mathcal{L}^{k},z}\,d\mu(z) \in \operatorname{End}(\mathcal{H}_{\mathcal{L}^{k}}) \]
where $\mu$ is the Fubini-Study volume measure on $M$ normalized to have total mass 1.
If each $A_{k}$ is close to the identity on $\mathcal{H}_{\mathcal{L}^{k}}$ in the trace norm and the error decreases to $0$ as $k \to \infty$ then we would have an approximate coherent state POVM.

This statement in the operator norm follows from \cite[Theorem 1.8]{Finski_2022_2}, which considers the operator norms of the restriction map $\Gamma(\mathbb{P}(V), \mathcal{O}(1)^{k}) \to \Gamma(M, \mathcal{L}^{k})$ where the inner products are given by integration with respect to some Riemannian volume forms on $\mathbb{P}(V)$ and $M$.
In particular, we can normalize by $\dim \mathcal{H}_{\mathcal{L}^{k}}$ and $\dim \Gamma(\mathcal{O}(1)^{k})$, which by the Hirzbruch--Grothendieck--Riemann--Roch theorem (or Atiyah--Singer index theorem applied to the $\bar{\partial}$ operator) is asymptotically $\operatorname{vol}(M)k^{\dim M}$. Since the measure is normalized by this dimension, we obtain the desired statement with the correct normalization.
One can also show convergence in the trace norm by taking traces of the corresponding Schwartz kernels of the operators, which converge by \cite[Theorem 5.1]{Finski_2022_2}. In particular, combining this argument with the above Proposition~\ref{prop:approx-povm} gives an asymptotic form of the de Finetti theorem. The proof of \cite{Finski_2022_2} relies on the asymptotic expansion of the Bergman kernel for polarized K\"{a}hler manifolds, which becomes dramatically simpler when $M$ and the metric on $\mathcal{L}$ are invariant under a transitive group action.

To obtain error bounds on the HSoS hierarchy, one needs to bound the ratio $P_{n - k}/P_{n}$ of the densities defining the coherent state POVMs, in addition to incurring an additive error from the incompleteness of the POVMs. In particular, by applying this method one incurs an additive error of $\epsilon_{n} \to 0$ as $n \to \infty$, and then a multiplicative error of $Ck/n$ as in the standard quantum de Finetti theorem but with a constant $C$ depending on the manifold $M$ instead of the usual factor of $d$. It is possible to explicitly calculate bounds on $\epsilon_{n}$ and $C$ in terms of polynomials in the curvature tensor of the induced Fubini-Study metric on $M$, by applying results from \cite{Finski_2022_2} on the Taylor expansion of the Toeplitz kernel of the sequence of the operators $A_{k}$. In future work the author will give a rigorous proof that the POVM densities $P_{n}$, for sufficiently large $n$, can be taken to be smooth perturbations of the K\"{a}hler volume measure of $M$, which would give a purely multiplicative error bound without any additional additive error.

More qualitatively, a Hermitian Positivstellensatz is known in general. The proofs by Quillen \cite{Quillen_1968} (in the case $M = \mathbb{C}P^{d}$) and Catlin--D'Angelo \cite{Catlin_DAngelo_1999} (for an arbitrary $M$ and $\mathcal{L}$) can be interpreted as considering the direct sum of analog of the P- and Q-quantizations of a Hermitian polynomial function $f$ on the direct sum $\bigoplus _{k \geq 0} \mathcal{H}_{\mathcal{L}^{k}}$ where the inner products on each $\mathcal{H}_{\mathcal{L}^{k}}$ are rescaled by a constant $c_{k}$ (which differs between the two proofs).
The asymptotic expansion of the Bergman kernel is used again to show that the difference between the direct sums of the P- and Q-quantizations is a compact operator by realizing the direct sum as either the Segal-Bargmann space (in Quillen's proof) or the space of holomorphic functions on the unit ball (in the Catlin-D'Angelo's proof).
Since the P-quantization is manifestly PSD whenever $f$ is nonnegative this implies that the Q-quantization is positive operator for $k$ sufficiently large. However, the analogs of the P- and Q-quantizations would be defined with respect to the volume measure of the K\"{a}hler metric on $M$, which is different from the quantization rules we have defined here (but coincides when there is a transitive group action fixing the metric).
For a de Finetti theorem one needs the tensor product property of the coherent states and so this essentially fixes the quantization rule to be the one we have defined here.

For the analog of the Kikuchi hierarchy we describe in Section~\ref{sec:kikuchi} convergence at a rate of $O(\hbar) = O(1/k)$ has been shown in \cite{Le_Floch_2018}, except that the inner products on $\mathcal{H}_{\mathcal{L}^{k}}$ are given by the operators $A_{k}$ we defined and not by restrictions of the inner products on $\Gamma(\mathcal{O}(1)^{k})$.
This result again relies on the asymptotic expansion of the Bergman kernel, and relies also on using the volume measure of the K\"{a}hler metric. From the proof of Proposition~\ref{prop:polys-dense} we can deduce a similar result using the definition of the P-quantization as in Definition~\ref{def:Pquant}, showing that as $k \to \infty$ the hierarchy of lower bounds on $\sup _{z \in M} f(z)$ (given by the spectral norms of the P-quantizations of $f$) converges asymptotically to the true value, although defining the P-quantizations in this case again requires using the volume measure on $M$.

\begin{remark}
Formulas for the Bergman kernel are relatively simple when there is a transitive group action fixing the metric on $\mathcal{L}$, it seems that in general not much is known beyond the existence of an asymptotic expansion and some low-order terms. Some explicit formulas for some particular cases of domains in $\mathbb{C}^{n}$ are obtained in \cite{Huo_2016}.
\end{remark}

\begin{remark}
It seems that proving asymptotic convergence of the HSoS hierarchy defined by restricting the inner products on the symmetric subspaces may not be possible without some of the differential and complex geometry tools described above.
If $M \hookrightarrow \mathbb{P}(V)$ is a smooth projective variety embedded in projective space, one can compose with the Veronese embedding to get $M \hookrightarrow \mathbb{P}(\operatorname{Sym}^{l}(V))$.
One can then take an arbitrary inner product on $\operatorname{Sym}^{l}(V)$ and construct the HSoS hierarchy by restricting the inner products on $\operatorname{Sym}^{k}(\operatorname{Sym}^{l}(V)) \subseteq (\operatorname{Sym}^{l}(V))^{\otimes k}$ in the same way, obtaining pullbacks of the Fubini-Study metric on $M$.
However, a theorem of Tian \cite{Tian_1990} and further results \cite{Zelditch_1998} show that pullbacks of Fubini-Study metrics to $M$ are dense in the space of all K\"{a}hler metrics on $M$ under the $C^{\infty}$ topology. Thus it seems that any asymptotic convergence result for the HSoS hierarchy is enough to conclude a statement about all K\"{a}hler metrics on $M$.
\end{remark}

\subsubsection{A Kikuchi-like hierarchy}\label{sec:kikuchi}
The Kikuchi hierarchy was first defined over the boolean hypercube in \cite{Wein_Alaoui_Moore_2019} as a spectral algorithm meant to be an alternative to the usual sum-of-squares hierarchy, which in general is an SDP.
In \cite{Hastings_2020} an adaptation of the Kikuchi hierarchy for optimization over spheres was proposed, also by giving a rule to convert a classical polynomial objective function into a quantum Hamiltonian. However, the quantization rule proposed in \cite{Hastings_2020} is the same as the quantization rule we described in Section~\ref{sec:sparse}, where all annihilation operators are to the right and all creation operators are to the left, and it was shown that this gives the Hermitian sum-of-squares hierarchy.
We will show here that exactly the opposite quantization rule, where all annihilation operators are to the left and all creation operators are to the right, gives a better analog of the Kikuchi hierarchy. In the rest of this section, we will assume $M = \mathbb{C}P^{d}$ and $\mathcal{L}$ is the anticanonical bundle over $M$.

To start, we give an alternative definition of the Kikuchi hierarchy. In \cite{Wein_Alaoui_Moore_2019}, the hierarchy was defined over the boolean hypercube $\{\pm 1\}^{n}$ as the linear map sending the monomial $x^{S}$ (where $S \subseteq \{0, 1\}^{n}$) to the matrix
\[ A_{k}(x^{S}) = [1_{\{T \Delta R = S\}}]_{R,T \in \binom{[n]}{k}} \]
whose rows and columns are indexed by subsets of $[n]$ of size $k$, and extended by linearity. In the Fourier basis, this can alternatively be rewritten as
\[ A_{k}(x^{S}) = \Pi_{\binom{[n]}{k}} M_{x^{S}} \Pi_{\binom{[n]}{k}} \]
where $\Pi_{\binom{[n]}{k}}$ is the projector onto the subspace of functions on $\{\pm 1\}^{n}$ which are linear combinations of monomials of degree exactly $k$, and $M_{x^{S}}$ is the operator which pointwise multiplies a boolean function by the polynomial function $x^{S}$. Since the eigenvalues of $M_{p}$ are exactly the values of $p$ evaluated on all points of the boolean hypercube, the spectral norms of the operators $A_{k}(p)$ give lower bounds on the maximum of $p$.

By analogy to the definition of $A_{k}$ in terms of projectors, we will consider a bihomogeneous Hermitian polynomial $p$ over $M$ of degree $k$ and define
\[ A_{k}(p) \coloneqq \Pi_{k} M_{p} \Pi_{k} \]
where $M_{p} : L^{2}(M, \mathcal{L}^{k}) \to L^{2}(M, \mathcal{L}^{k})$ the multiplication operator of $p$ acting on $L^{2}$ sections, and $\Pi_{k} : L^{2}(M, \mathcal{L}^{k}) \to \mathcal{H}_{\mathcal{L}^{k}}$ is the orthogonal projection onto the subspace of holomorphic sections.

We claim that $A_{k}(p)$ is in fact the P-quantization of $p$, that is
\[ A_{k}(p) = \int _{M} p(z)c_{\mathcal{L}^{k},z}\,d\mu(z). \]
Suppose that $s$ is a holomorphic section of $\mathcal{L}^{k}$. Then
\begin{align*}
\bra{s} \left(\int _{M} p(z)c_{\mathcal{L}^{k},z}\,d\mu(z)\right) \ket{s} &= \int _{M} p(z)\braket{s|c_{\mathcal{L}^{k},z}|s}\,d\mu(z) \\
&= \int _{M} p(z)|s(z)|^{2}\,d\mu(z)
\end{align*}
and
\begin{align*}
\braket{s|\Pi_{k}M_{p}\Pi_{k}|s} &= \langle \Pi_{k}s, M_{p} \Pi_{k} s \rangle_{L^{2}(M, \mathcal{L}^{k})} \\
&= \langle s, M_{p} s \rangle_{L^{2}(M, \mathcal{L}^{k})} \\
&= \int _{M} p(z)|s(z)|^{2}\,d\mu(z).
\end{align*}
Thus the quadratic forms defined by the two operators are equal, and hence the operators themselves are equal.

It is known \cite{Cahill_Glauber_1969} that the Glauber-Sudarshan P-quantization rule for the phase space $\mathbb{C}^{d}$ can equivalently be defined as
\[ p \mapsto \int _{M} p(z)\ket{z}\bra{z}\,d\mu(z) \]
or if $p$ is a polynomial, by quantizing such that all the creation operators are to the right and the annihilation operators are to the left. By the correspondence between coherent states for $\mathbb{C}^{d}$ and coherent states for $\mathbb{C}P^{d}$ discussed in Section~\ref{sec:sparse} and a similar argument as for the Husimi quantization rule, it follows that the P-quantization rule for $\mathbb{C}P^{d}$ can similarly be described using creation and annihilation operators, thus proving the statement asserted earlier.

\nocite{*}
\printbibliography

\end{document}